\xpatchcmd{\@ssect@ltx}{\@xsect}{\protected@edef\@currentlabelname{#8}\@xsect}{}{}
\xpatchcmd{\@sect@ltx}{\@xsect}{\protected@edef\@currentlabelname{#8}\@xsect}{}{}
\newcommand{\U}{\operatorname{U}}
\newcommand{\SU}{\operatorname{SU}}
\newcommand\End{\operatorname{L}}
\newcommand{\e}{\operatorname{e}}
\renewcommand{\d}{\mathrm{d}}
\renewcommand{\i}{\mathrm{i}}
\newcommand{\pb}{\mathbf{p}}
\renewcommand{\P}{\mathbf{P}}
\renewcommand{\H}{\mathcal{H}}
\newcommand{\fermi}{\mathrm{fermi}}
\newcommand{\comp}{{\mathrm{comp}}}
\newcommand{\sgn}{\operatorname{sgn}}
\newcommand\TopRule{\Xhline{0.08em}}
\newcommand\MidRule{\Xhline{0.03em}}
\newcommand\BotRule{\Xhline{0.08em}}
\renewcommand\onecolumngrid{%
  \do@columngrid{one}{\@ne}%
  \def\set@footnotewidth{\onecolumngrid}%
  \def\footnoterule{\kern-6pt\hrule width 1.5in\kern6pt}%
}
\newlength\dlf 
\newtheorem{theorem}{Theorem}
\newtheorem{corollary}{Corollary}
\newtheorem{lemma}{Lemma}
\newtheorem*{lemma*}{Lemma}
\newtheorem{proposition}{Proposition}
\theoremstyle{remark}
\newtheorem{remark}{Remark}
\newcommand{\bes} {\begin{subequations}}
\newcommand{\ees} {\end{subequations}}
\newcommand{\bea} {\begin{eqnarray}}
\newcommand{\eea} {\end{eqnarray}}
\newcommand{\be} {\begin{equation}}
\newcommand{\ee} {\end{equation}}
\def\>{\rangle}
\def\<{\langle}
\def\Tr{\operatorname{Tr}}
\newcommand{\ignore}[1]{}
\newcommand{\real}{\mathbb{R}}
\newcommand{\complex}{\mathbb{C}}
\newcommand{\hilbert}[1][H]{\mathcal{#1}}
\newcommand{\identity}{\mathbb{I}}
\DeclarePairedDelimiter{\p}{(}{)}
\DeclarePairedDelimiter{\sets}{\{}{\}}
\newcommand{\cset}[2]{\sets{#1 \, : \, #2}}
\DeclarePairedDelimiter{\nket}{\lvert}{\rangle}
\DeclarePairedDelimiter{\nbra}{\langle}{\rvert}
\newcommand{\qproj}[1]{\nket{#1} \nbra{#1}}
\crefname{section}{Sec.}{Secs.}
\crefname{claim}{Claim}{Claims}
\begin{document}
\title{Qudit circuits with SU(d) symmetry: Locality imposes additional conservation laws}

\author{Iman Marvian}
\affiliation{Departments of Physics, Duke University, Durham, NC 27708, USA}
\affiliation{Department of Electrical and Computer Engineering, Duke University, Durham, NC 27708, USA}
\author{Hanqing Liu}
\affiliation{Departments of Physics, Duke University, Durham, NC 27708, USA}
\author{Austin Hulse}
\affiliation{Departments of Physics, Duke University, Durham, NC 27708, USA}

\begin{abstract}
  Local symmetric quantum circuits  provide a simple framework to study the dynamics and phases of complex quantum systems with conserved charges. However, some of their basic properties have not yet been understood. Recently, it has been shown that such quantum circuits only generate a restricted subset of symmetric unitary transformations [I. Marvian, Nature Physics, 2022]. In this paper, we consider circuits with 2-local SU$(d)$-invariant unitaries acting on qudits, i.e., $d$-dimensional quantum systems. Our results reveal a significant distinction between the cases of $d=2$ and $d\geq 3$. For qubits with SU$(2)$ symmetry, arbitrary global rotationally-invariant unitaries can be generated with 2-local ones, up to relative phases between the subspaces corresponding to inequivalent irreducible representations (irreps) of the symmetry, i.e., sectors with different angular momenta. On the other hand, for $d\geq 3$, in addition to similar constraints on the relative phases between the irreps, locality also restricts the generated unitaries \emph{inside} these conserved subspaces.  These constraints impose conservation laws that hold for dynamics under 2-local {SU}$(d)$-invariant unitaries, but are violated under general  {SU}$(d)$-invariant unitaries. Based on this result, we show  that the distribution of unitaries generated by random 2-local SU$(d)$-invariant unitaries does not converge to the Haar measure over the group of all SU$(d)$-invariant unitaries, and in fact, for $d\ge 3$, is not even a 2-design for the Haar distribution.

\end{abstract}

\maketitle

\section{Introduction}

How do symmetries of a composite system with local interactions restrict the long-term evolution of the system? The standard conservation laws implied by Noether's theorem \cite{noether1918nachrichten, noether1971invariant}, such as conservation of the total angular momentum vector for systems with rotational symmetry, hold regardless of whether the Hamiltonian is local or not.
Suppose the total Hamiltonian of a composite system can be decomposed as a sum of $k$-local terms (also known as $k$-body terms), i.e., terms that only act non-trivially on at most $k$ subsystems for a fixed $k$. For such systems, do the symmetries of the Hamiltonian put any further constraints on the long-term evolution of the system? Perhaps surprisingly, it turns out that the answer is yes. One of us has recently shown that time evolutions under such local Hamiltonians can only generate a restricted family of symmetric unitary transformations \cite{marvian2022restrictions}. In particular, in the case of continuous symmetries such as $\U(1)$ and $\SU(2)$, the difference between the dimensions of the manifold of all symmetric unitaries and the submanifold of unitaries generated by symmetric local Hamiltonians constantly grows with the number of subsystems. This holds even for time-dependent symmetric Hamiltonians with arbitrarily long-range interactions, provided that each term in the Hamiltonian acts only a (fixed) finite number of subsystems.

In the language of quantum circuits, the result of \cite{marvian2022restrictions} means that general symmetric unitary transformations cannot be generated using local symmetric gates, even approximately. This is in sharp contrast with the well-known universality of 2-local unitary transformations, which holds in the absence of symmetries \cite{lloyd1995almost, divincenzo1995two, deutsch1995universality, brylinski2002universal}. The recent result of \cite{marvian2022restrictions} indicates that some basic properties of local symmetric quantum circuits have not been yet understood. Given the wide range of applications of this family of quantum circuits, from the classification of symmetry-protected topological phases \cite{chen2011classification} to quantum thermodynamics \cite{FundLimitsNature, brandao2013resource, janzing2000thermodynamic, guryanova2016thermodynamics, lostaglio2015quantumPRX, lostaglio2017thermodynamic, lostaglio2015description, halpern2016microcanonical, halpern2016beyond, narasimhachar2015low}, quantum reference frames \cite{QRF_BRS_07, gour2008resource, marvian2013theory, marvian2008building}, and quantum control, characterizing and understanding their properties is of crucial importance. Specifically, understanding the additional constraints imposed by the presence of both locality and symmetry can be useful in the context of quantum chaos \cite{maldacena2016bound, piroli2020random} and quantum complexity \cite{susskind2016computational} in systems with conserved charges \cite{khemani2018operator, rakovszky2018diffusive}.

In this paper we study local quantum circuits with $\SU(d)$ symmetry for systems of qudits, i.e., subsystems with Hilbert spaces of dimension $d\ge 3$. These symmetry groups appear in broad areas of physics, including nuclear physics, condensed matter physics and quantum gravity. Specifically, we consider quantum circuits formed from a sequence of 2-local $\SU(d)$-invariant unitaries, i.e., those that only couple pairs of qudits. 

Our results reveal an unexpected distinction between the case of $d=2$, i.e., systems of qubits with $\SU(2)$ symmetry, and $d>2$. It turns out that in the case of $d=2$, locality of interactions only restricts the relative phases between the subspaces corresponding to inequivalent irreducible representations (irreps) of the symmetry, i.e., sectors with different angular momenta \cite{MLH_2022, 
LHM_2022}. On the other hand, in the case of $d\ge 3$, in addition to similar constraints on the relative phases between sectors with inequivalent irreps of $\SU(d)$, locality also restricts the unitaries generated inside these sectors. More precisely, we find that even for states restricted to a sector which carries a single irrep (i.e., restricted to the isotypic subspace associated to that irrep), there are additional conservation laws that hold under transformations that can be realized using a sequence of 2-local $\SU(d)$-invariant unitaries, whereas they are violated under general $\SU(d)$-invariant unitaries (equivalently, these are conservation laws that hold for time evolutions generated by $\SU(d)$-invariant Hamiltonians that can be written as a sum of 2-local terms, whereas they are violated for general $\SU(d)$-invariant Hamiltonians). In \cref{Fig}, we discuss an interesting example of this phenomenon, which is further studied in \cref{sect:examp}. Furthermore, in addition to constraints on unitaries inside conserved sectors, it turns out that unitaries in different sectors also satisfy certain constraints among themselves.

\begin{figure}[t]
  \includegraphics[scale=.7]{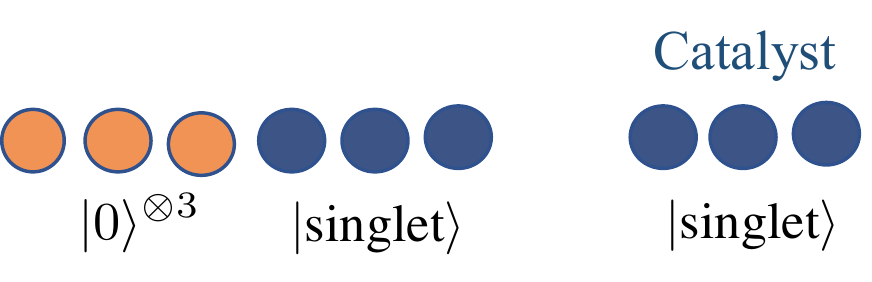}
  \caption{\emph{An illustrative example of restrictions imposed by the locality of interactions}: Consider 6 qutrits, i.e., systems with 3D Hilbert spaces, in the joint state $|\Psi\rangle=|\mathrm{singlet}\rangle|0\rangle^{\otimes 3}$, where $|\mathrm{singlet}\rangle=|0\rangle\wedge |1\rangle\wedge |2\rangle$ is the unique totally anti-symmetric state of 3 qutrits, and $\{|0\rangle, |1\rangle, |2\rangle\}$ is an orthonormal basis for a qutrit (see \cref{eq:wedge} for the definition of the wedge product). The state $|\Psi\rangle$ is restricted to a subspace corresponding to an irrep of $\SU(3)$. Furthermore, for any $\alpha, \beta\in\mathbb{C}$ the superposition $\alpha |\mathrm{singlet}\rangle|0\rangle^{\otimes 3}+\beta |0\rangle^{\otimes 3} |\mathrm{singlet}\rangle$ can be obtained from the initial state $|\Psi\rangle$ via an $\SU(3)$-invariant unitary. However, unless $\alpha=0$ or $\beta=0$, such superpositions cannot be obtained from the initial state $|\Psi\rangle$ via any sequence of 2-local $\SU(3)$-invariant unitaries, i.e., those that couple only pairs of qutrits. More precisely, such state transitions are forbidden by certain conservation laws that are satisfied if the Hamiltonian is $\SU(3)$-invariant and can be written as a sum of 2-local terms, whereas they are violated by general $\SU(3)$-invariant Hamiltonians (see \cref{eq:renyi}). Interestingly, this forbidden transition becomes possible with 2-local $\SU(3)$-invariant unitaries provided that the qutrits can interact with a catalyst, namely 3 additional qutrits in the singlet state (see \cref{sect:examp} for further discussion about this example).}
  \label{Fig}
\end{figure}
The additional constraints that appear for $d\ge 3$ can be broadly classified into two different types: the first type, discussed in \cref{Sec:Z2}, is based on a $\mathbb{Z}_2$ symmetry related to the notion of the parity of permutations and a corresponding conservation law that constrains the time evolution of systems with $n\le d^2$ qudits (see \cref{cons-fskew}). The second type of constraints, discussed in \cref{Sec:ferm}, is based on a new correspondence between the dynamics of the qudit system and the dynamics of a free fermionic system. This correspondence reveals additional conservation laws, which remain non-trivial for systems with arbitrarily large numbers of qudits (see \cref{eq:renyi}). 

In \cref{Sec:Random} we discuss implications of our results in the context of random symmetric circuits. In particular, we argue that for $d\ge 3$, even the second moment of the distribution of unitaries generated by 2-local unitaries does not converge to the uniform (Haar) measure over the group of all $\SU(d)$-invariant unitaries (in other words, 2-local symmetric unitaries do not form a 2-design for all symmetric unitaries---see \cref{design}). Further discussions about the full characterization of the unitaries generated by local $\SU(d)$-invariant unitaries and the statistical properties of random $\SU(d)$-invariant quantum circuits can be found in a follow-up paper \cite{HLM_2021}. \\

\paragraph*{Previous Related Works}
It turns out that the group of unitaries generated by 2-local SU($d$)-invariant unitaries is a compact connected Lie group with the Lie algebra generated by swap operators (See the next section).  A closely related Lie algebra has been previously studied in the mathematical literature. Namely, in \cite{marin2007algebre} Marin obtains the decomposition into simple factors of the Lie algebra generated by transpositions, thought of as a Lie subalgebra of the group algebra of the symmetric group.  As we explain throughout the paper, some results which are established using rather elementary techniques in this paper, can also be obtained using more advanced results of \cite{marin2007algebre} (specifically, see the discussions below \cref{lem9,Thm5}). In \cite{HLM_2021} we discuss more about the connection between the results presented here and the result of \cite{marin2007algebre}.  

Another set of related works are results regarding universal quantum computation with the Heisenberg exchange interaction (see e.g. \cite{DiVincenzo:2000kx, levy2002universal, kempe2001theory, bacon2001coherence, rudolph2005relational}). These works study the implementation of an \emph{encoded} version of a desired unitary using the exchange  interaction, which is rotationally-invariant and 2-local. In contrast, we are interested in characterizing \emph{all}  realizable unitaries. We also note that recently an application of Marin's work in this context has been studied in  \cite{van2021universality}.

\section{Preliminaries}

Consider a composite system formed from $n$ qudits, i.e., systems with $d$-dimensional local Hilbert spaces, with the total Hilbert space $(\mathbb{C}^d)^{\otimes n}$. For concreteness one can assume the qudits lie on a 1D open chain and are labeled as $i=1, \cdots, n$, corresponding to their positions in this chain. However, our results do not depend on this particular geometry. We say a unitary transformation $V$ on this system is $\SU(d)$-invariant if it commutes with $U^{\otimes n}$ for all $U\in \SU(d)$, i.e., 
\be\label{def}
\forall U\in \SU(d):\ \ [V, U^{\otimes n}]=0\ .
\ee
 To characterize the set of $\SU(d)$-invariant unitaries, it is convenient to consider the decomposition of the tensor product representation of $\SU(d)$ into irreps. Then, the total Hilbert space decomposes as
\be\label{decomp:SU}
(\mathbb{C}^d)^{\otimes n} \cong \bigoplus_\lambda \mathcal{H}_\lambda =\bigoplus_\lambda \hilbert[Q]_\lambda \otimes \hilbert[M]_\lambda\ ,
\ee
where $\lambda$ labels different \emph{charges}, i.e., inequivalent irreps of $\SU(d)$, $\mathcal{H}_\lambda$ is the sector corresponding to irrep $\lambda$ (also known as the isotypic component of $\lambda$), $\hilbert[Q]_\lambda$ is the irrep of $\SU(d)$ corresponding to $\lambda$, and $\hilbert[M]_\lambda$ is the (virtual) multiplicity subsystem \cite{QRF_BRS_07, zanardi2001virtual}, where $\SU(d)$ acts trivially. 
Using this decomposition together with Schur's Lemma one can easily classify all $\SU(d)$-invariant unitaries \cite{QRF_BRS_07}: they are all block-diagonal with respect to sectors with different charges. Furthermore, they act trivially on $\hilbert[Q]_\lambda$, and non-trivially on $\hilbert[M]_\lambda$ (see \cref{app:duality} for further details). In summary, this means $V$ is $\SU(d)$-invariant if and only if it can be written as $V\cong\bigoplus_\lambda \mathbb{I}_\lambda \otimes v_\lambda$, where $ \mathbb{I}_\lambda$ is the identity operator on $\hilbert[Q]_\lambda$ and $v_\lambda$ is an arbitrary unitary on the multiplicity subsystem $\hilbert[M]_\lambda$.

In the presence of symmetry, certain properties of state remain conserved under the time evolution of the system. In particular, if under an $\SU(d)$-invariant unitary $V$ the initial state $|\psi\rangle$ evolves to $|\psi'\rangle=V|\psi\rangle$, then from the definition in \cref{def}, we can easily see 
\be\label{consv}
\forall U\in \SU(d):\ \langle\psi|U^{\otimes n}|\psi\rangle=\langle\psi'|U^{\otimes n}|\psi'\rangle\ .
\ee 
We refer to these equations as Noether's conservation laws. Note that they can be equivalently understood as the conservation of the moments of the generators of symmetry. In the case of $\SU(2)$, for instance, they are equivalent to
\be\label{consv2}
 \langle\psi|J^l_{\hat{r}}|\psi\rangle=\langle\psi'|J^l_{\hat{r}}|\psi'\rangle\ ,
\ee
for all integers $l$ and all unit vectors $\hat r\in \mathbb{R}^3$, where $J_{\hat{r}}=r_x J_x+r_y J_y+r_z J_z$ is angular momentum in direction $\hat{r}$. 

It turn out that, in the case of pure states\footnote{Interestingly, this does not hold in the case of mixed states. For instance, for mixed states the conservation of the moments of angular momentum in different directions does not capture all the consequences of symmetry \cite{marvian2014extending}.}, the constraints imposed by the $\SU(d)$ symmetry are fully captured by \cref{consv}, or equivalently, by the conservation of angular momenta and their moments in the case of $\SU(2)$: this equation implies that there exists an $\SU(d)$-invariant unitary $V$, such that $V|\psi\rangle=|\psi'\rangle$\cite{marvian2014asymmetry}. In this work we ask, for the family of unitaries that can be implemented using \emph{local} symmetric Hamiltonians, or, equivalently, using local symmetric circuits, are there any additional conservation laws which are not captured by the standard conservation laws in \cref{consv}? 

\subsection*{Local $\SU(d)$-invariant quantum circuits}
We first formulate this question in the framework of local symmetric circuits. Consider the family of quantum circuits formed from $k$-local $\SU(d)$-invariant unitaries, i.e., unitaries that act non-trivially on \emph{at most} $k$ qudits. Let $\mathcal{V}_{k}$ be the group generated by composing finitely many $k$-local $\SU(d)$-invariant unitaries, i.e., any $V \in \mathcal{V}_{k}$ can be written as $V=\prod_{i=1}^{m} V_{i}$ for some finite $m$, where each $V_i$ is a $k$-local $\SU(d)$-invariant unitary. In particular, $\mathcal{V}_{n}$ is the group of all $\SU(d)$-invariant unitaries. As we saw before, the decomposition in \cref{decomp:SU} provides a simple characterization of this group. Schur-Weyl duality \cite{harrow2005applications, goodman2009symmetry} implies that $\mathcal{V}_{n}$ can also be characterized as the set of unitaries in the linear span of permutations of $n$ qudits (see \cref{app:duality}).

A special case of interest is that of circuits formed from 2-local unitaries. It turns out that, up to a global phase, any 2-local 
$\SU(d)$-invariant unitary 
  acting on qudits $a$ and $b$ can be written as
\be\label{swap}
\e^{\i \theta \P_{ab}}=\cos \theta\ \mathbb{I}+\i \sin\theta\ \P_{ab}\ , 
\ee
where $\P_{ab}$ is the swap operator, also known as the transposition, i.e., the operator that exchanges the state of $a$ and $b$, and leaves the rest of the qudits unchanged, such that $\P_{ab} |\psi\>_a|\phi\>_b=|\phi\>_a|\psi\>_b$, $\forall |\phi\>, |\psi\>\in\mathbb{C}^d$ (we will often drop the tensor product with the identity on the rest of qudits). Therefore, up to global phases, $\mathcal{V}_{2}$ is the group of unitaries generated by unitaries $\{\e^{\i \theta \P_{ab}} : \theta\in[0,2\pi), \ 1\le a<b\le n\}$. In other words,
\be
\mathcal{V}_{2}=\langle\e^{\i\theta} \mathbb{I}\ ,\ \e^{\i \theta \P_{ab}} : \theta\in[0,2\pi), \ 1\le a<b\le n \rangle \ .
\ee

Clearly, any unitary generated by $2$-local $\SU(d)$-invariant  unitaries is itself $\SU(d)$-invariant. That is $\mathcal{V}_{2}\subset \mathcal{V}_{n}$. The question is whether there are $\SU(d)$-invariant  unitaries that cannot be implemented using $2$-local $\SU(d)$-invariant  unitaries, and if so, how they can be characterized.

This question can be equivalently stated in terms of Hamiltonians. Consider an $\SU(d)$-invariant qudit Hamiltonian $H(t)$ that can be written as a sum of 2-local terms. Then, up to a constant shift, it can be written as
\be\label{Hamiltonian}
H(t)=\sum_{i<j} h_{ij}(t)\ \P_{ij}\quad :\ t\ge 0\ ,
\ee
where in general, $h_{ij}$ is an arbitrary real function of time $t$. Note that this Hamiltonian can include long-range interactions between arbitrarily far qudits. Also, note that in the case of qubits, up to a 
constant shift, the Hamiltonian $H(t)$ can be written as $\frac{1}{2}\sum_{i<j} h_{ij}(t)\ \vec{\sigma}_i\cdot \vec{\sigma}_j$, describing a qubit chain with the Heisenberg exchange interactions. 

Consider the class of unitaries generated by these Hamiltonians under the Schr\"{o}dinger equation 
\be
\frac{\d V(t)}{\d t}=-\i H(t) V(t) \ , 
\ee
where $V(0)=\mathbb{I}$ is the identity operator. It turns out that, up to a global phase, any such unitary can be implemented \emph{exactly} by a finite sequence of 2-local unitary transformations in the form of \cref{swap}, and hence is an element of $\mathcal{V}_{2}$. Conversely, up to a global phase, any unitary transformation in $\mathcal{V}_{2}$, can be realized by a time evolution generated under a Hamiltonian in the form of \cref{Hamiltonian} \cite{d2002uniform, marvian2022restrictions}. Therefore, by studying the family of quantum circuits that can be generated by 2-local unitaries and the corresponding group $\mathcal{V}_{2}$, we also characterize general constraints on the time evolutions generated by any $\SU(d)$-invariant Hamiltonian that can be written as a sum of 2-local terms. More generally, adding any (possibly non-local) interaction in the Lie algebra generated by $\SU(d)$-invariant 2-local terms to the Hamiltonian $H(t)$ in \cref{Hamiltonian} does not enlarge the set of realizable unitaries; the generated unitary will be inside $\mathcal{V}_{2}$, and therefore satisfies the constraints studied in this paper.

\subsection*{Systems of qubits with \texorpdfstring{$\SU(2)$}{SU(d)} symmetry}
In \cite{MLH_2022, LHM_2022} we study qubit circuits with $k$-local SU(2)-invariant gates and fully characterize the constraints on realizable unitaries. It turns out that in this case, the locality of interactions only restricts the relative phases between sectors with inequivalent irreps of SU(2), i.e., subspaces with different angular momenta (assuming $k\ge 2$). In particular, using a rather elementary argument we show that, up to these relatives phases, any SU(2)-invariant unitary can be implemented using the Heisenberg exchange interaction,  that is using Hamiltonians in the form of Eq.(\ref{Hamiltonian}) (We note that this fact can also be established by applying more advanced results of Marin in \cite{marin2007algebre}). In addition to fully characterizing the constraints on the relative phases, in  \cite{MLH_2022} we also present a physical interpretation of these constraints,  a method for experimentally observing violations of these constraints, and finally a scheme for circumventing these constraints and implementing general rotationally-invariant unitaries, using two ancilla qubits.

\subsection*{Systems of qudits with \texorpdfstring{$\SU(d)$}{SU(d)} symmetry for \texorpdfstring{$d\ge3$}{d>=3}}

Perhaps surprisingly, it turns out that
the above fact about qubits with the SU(2) symmetry, does not hold for systems of qudits with SU($d$) symmetry with $d\ge 3$: As we show in the rest of this paper, for such systems, in addition to the constraints on the relative phases between different charge sectors $\sets{\hilbert_\lambda}$, locality also restricts the realizable unitaries \emph{inside} certain charge sectors. Furthermore, there are some charge sectors in which the generated unitaries cannot be independent of each other; in fact, the unitary in one of these sectors completely determines the generated unitaries in the others. It follows that there are additional conservation laws that hold under 2-local $\SU(d)$-invariant unitaries whereas they are violated under general 3-local $\SU(d)$-invariant unitaries. These additional constraints can be classified into two different types, as discussed in the following sections. We also discuss the constraints on the relative phases between different charge sectors and find a simple characterization in terms of the quadratic Casimir operators in \cref{app:relative-phases}. 

\color{black}

\section{A \texorpdfstring{$\mathbb{Z}_2$}{Z2} symmetry from parity of permutations}\label{Sec:Z2}

In this section we discuss a constraint on the family of unitaries generated by 2-local $\SU(d)$-invariant Hamiltonians that is related to the notion of parity of permutations. For a system with $n$ qudits, the set of swaps on pairs of qudits generates a group of unitaries that is a representation of $\mathbb{S}_n$, the permutation group of $n$ objects, also known as the symmetric group. The permutation $\sigma\in\mathbb{S}_n $ is represented by the operator $\P(\sigma)$ that satisfies 
\be\label{eq:permrep}
\forall \sigma\in\mathbb{S}_n:\ \ \P(\sigma)\bigotimes_{i=1}^n |\psi_i\rangle=\bigotimes_{i=1}^n |\psi_{\sigma{(i)}}\rangle\ ,
\ee
for any set of states $\{|\psi_{i}\rangle\in\mathbb{C}^d\}$. In particular, the transposition $\sigma_{ab}\in \mathbb{S}_n$ is represented by a swap operator ${\P}(\sigma_{ab})\equiv {\P}_{ab}$. Recall that the parity of a permutation $\sigma\in\mathbb{S}_n$ is even/odd if the number of transpositions that are needed to generate $\sigma$ is even/odd. We denote these two cases by $\sgn(\sigma)=1$ and $\sgn(\sigma)=-1$, respectively. 
 
Consider the following simple example with 3 qudits: the unitary $\e^{\i \phi \P_{23}} \e^{\i \theta \P_{12}}$ for $\theta,\phi\in[0,2\pi)$ can be expanded as the term $\cos \theta \cos\phi\ \mathbb{I}- \sin \theta \sin\phi \ \P_{23}\P_{12}$ plus the term $\i\cos\theta \sin\phi \ \P_{23}+\i\sin\theta \cos\phi \ \P_{12}$. In this decomposition, the first term only contains even parity permutations with real coefficients, whereas the second term only contains odd parity permutations with purely imaginary coefficients. As one combines more unitaries in the form $ \e^{\i\theta \P_{ab}}$, this correlation between parity and reality always holds: even permutations appear with real coefficients, whereas odd permutations appear with purely imaginary coefficients. Since permutations themselves can be represented by real matrices (e.g., in the computational basis), this means that in this basis the real part of the generated unitary matrix is determined by the even permutations whereas the imaginary part is determined by the odd permutations. 

Does this restrict the family of unitaries that can be implemented with 2-local $\SU(d)$-invariant unitaries? The answer depends on $d$, the dimension of the local subsystems. Unless $d$ is sufficiently large, the set of permutations are not linearly independent, and in particular, the linear combinations of odd permutations contain even permutations and vice versa. For instance, for $d=2$, we have $\mathbb{I}+\P_{12}\P_{23}+\P_{23}\P_{12}=\P_{12}+\P_{23}+\P_{13}$, whereas this equality does not hold for $d\ge 3$. Therefore, for $d=2$ the left-hand side, which is a sum of even terms with real coefficients, can also be obtained as a linear combination of odd terms with real coefficients. In fact, it turns out that, for any number of qubits with $\SU(2)$ symmetry, the above constraint does not impose any additional restrictions on the time evolution of the system. In the following, we formulate this constraint in a more general setting and show that, for $d\ge 3$, it does impose non-trivial constraints on the generated unitaries. 

\subsection{Formulating the \texorpdfstring{$\mathbb{Z}_2$}{Z2} symmetry}

According to Schur-Weyl duality (see \cref{app:duality}), any general $\SU(d)$-invariant Hamiltonian $H(t)$ on qudits can be written as a linear combination of permutations, i.e., 
\be\label{try}
H(t)=\sum_{\sigma\in\mathbb{S}_n} h_{\sigma}(t)\ \P(\sigma)\ .
\ee
In the following discussion it is convenient to assume that in this decomposition the coefficient of the identity is zero, i.e., $h_e(t)=0$. This can always be achieved by changing the energy reference, which is equivalent to adding a proper multiple of the identity operator to the Hamiltonian.

We are interested in $\SU(d)$-invariant Hamiltonians that have a decomposition in this form with the additional property that for all permutations $\sigma\in \mathbb{S}_n$, $ h_{\sigma}(t)$ is real (purely imaginary) for odd (even) parity $\sigma$. Equivalently, this condition can be stated as 
\be\label{assump}
h_{\sigma}(t)=-\sgn(\sigma) h^\ast_{\sigma}(t)\ \ \ : \forall t\ge 0\ , \sigma\in\mathbb{S}_n\ ,
\ee
where $h^\ast_{\sigma}(t)$ is the complex conjugate of $h_{\sigma}(t)$. The transformation $ h_{\sigma}(t)\mapsto -\sgn(\sigma) h^\ast_{\sigma}(t)\ $ can be thought of as a transformation on the vector $h_{\sigma}(t): \sigma\in \mathbb{S}_n$. Applying this transformation twice is equivalent to the identity map, i.e., it generates a $\mathbb{Z}_2$ symmetry. 

If the $\SU(d)$-invariant Hamiltonian $H(t)$ has a decomposition satisfying \cref{assump}, then the unitary time evolution generated by $H(t)$ under the Schr\"{o}dinger equation $\frac{\d V(t)}{\d t}=-\i H(t){V}(t) $, with the initial condition $V(0)=\mathbb{I}$, has a decomposition as
\be
V(t)=\sum_{\sigma\in \mathbb{S}_n} v_\sigma(t)\ \P(\sigma)\ , 
\ee
where the coefficient $v_\sigma(t)$ is real (purely imaginary) for even (odd) parity $\sigma$, i.e.,
\be\label{thn}
v_\sigma(t)=\sgn(\sigma)v^\ast_\sigma(t)\ \ \ : \forall t\ge 0\ , \sigma\in\mathbb{S}_n\ .
\ee
Note that this equation expresses the pattern we observed in the simple example discussed at the beginning of this section. To see why this equation holds note that if $V_1$ and $V_2$ are two general $\SU(d)$-invariant operators with decompositions satisfying \cref{thn}, then the product $V_2V_1$ also satisfies this property. Combining this with the fact that $V(t)=\lim_{L\rightarrow \infty} \prod_{l=0}^L [\mathbb{I}-\frac{\i t}{L} H(\frac{l t}{L})]$, one can show that \cref{assump} implies \cref{thn}. Using a similar argument it can be seen that unitary transformations satisfying this property form a Lie group with a Lie algebra characterized by \cref{assump}. 

We note that a closely related Lie algebra has been previously studied in the mathematical literature. Namely, in \cite{marin2010group} Marin considers operators $\sum_{\sigma\in\mathbb{S}_n} h_\sigma \mathbf{R}(\sigma)$, where $\mathbf{R}$ is the regular representation, and $h_\sigma$ satisfies a similar (but related) condition to \cref{assump}.\footnote{In fact, in \cite{marin2010group} this is studied in the case of general finite groups, and the problem is phrased in the language of the group algebra associated to the finite group (rather than that of the regular representation). In \cref{app:genZ2}, we also consider the case of a general representation of an arbitrary finite or compact Lie group and generalize the results of this section. } Since the permutations $\cset{\mathbf{R}(\sigma)}{\sigma \in \mathbb{S}_n}$ in the regular representation are linearly independent, in this special case, there is a one-to-one correspondence between operators in the span of the regular representation and functions over the group (in fact, in this case, \cref{assump} can be understood as an anti-unitary symmetry on the space of operators).

In our case, however, the problem is more subtle: as we mentioned before, the unitaries $\cset{\P(\sigma)}{\sigma \in \mathbb{S}_n}$ are, in general, \textit{not} linearly independent. Hence, an $\SU(d)$-invariant Hamiltonian $H(t)$ that has a decomposition satisfying \cref{assump}, may also have other decompositions violating it. In the following lemma, we provide a simple criterion that determines whether decompositions satisfying \cref{assump} exist or not.

Consider the operator 
\be\label{Def:K}
K\equiv\frac{1}{n!} \sum_{\sigma\in\mathbb{S}_n} \sgn(\sigma)\ \P(\sigma)\otimes \P(\sigma)\
\ee
acting on $n$ pairs of qudits. It can be easily shown that $K$ is the Hermitian projector to the sign representation of the permutation group on these pairs (see \cref{App:Kf}), i.e., $K=K^\dag=K^2$ and
\be\label{Def:K1}
K[\P(\sigma)\otimes \P(\sigma)]=\sgn(\sigma) K\ .
\ee
With this operator $K$, we have the following lemma,
\begin{lemma}\label{lem13}
  An $\SU(d)$-invariant Hermitian operator $H$
  has a decomposition as $H=\sum_{\sigma\in\mathbb{S}_n} h_{\sigma}\ \P(\sigma)$, satisfying $h_\sigma=-\sgn(\sigma) h^\ast_\sigma$, if and only if 
  \be\label{bbc}
  K[H\otimes \mathbb{I}+\mathbb{I}\otimes H]=[H\otimes \mathbb{I}+\mathbb{I}\otimes H]K=0\ .
  \ee 
\end{lemma}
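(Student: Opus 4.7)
The plan is to derive both directions from a single intertwining identity for $K$. Right-multiplying the defining relation $K[\P(\sigma) \otimes \P(\sigma)] = \sgn(\sigma) K$ by $\mathbb{I} \otimes \P(\sigma^{-1})$ gives the key identity
\begin{equation*}
K[\P(\sigma) \otimes \mathbb{I}] = \sgn(\sigma)\, K [\mathbb{I} \otimes \P(\sigma^{-1})],
\end{equation*}
and summing over $\sigma$ (after the relabeling $\sigma \to \sigma^{-1}$) yields, for any decomposition $H = \sum_\sigma h_\sigma \P(\sigma)$, the decomposition-dependent identity $K[H \otimes \mathbb{I}] = K[\mathbb{I} \otimes \hat H]$, where $\hat H := \sum_\sigma \sgn(\sigma) h_{\sigma^{-1}}\, \P(\sigma)$.

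For the ``only if'' direction, I would first argue that one may without loss of generality also assume the ``coefficient Hermiticity'' $h_\sigma = h^*_{\sigma^{-1}}$ by averaging $h$ with $h^\#$ defined by $(h^\#)_\sigma := h^*_{\sigma^{-1}}$: since $H = H^\dagger$, this averaging preserves $\pi(h) = H$, and a direct check shows it preserves the parity--conjugation condition $h_\sigma = -\sgn(\sigma) h^*_\sigma$. Combining the two pointwise relations yields $h_{\sigma^{-1}} = -\sgn(\sigma)\,h_\sigma$, i.e.\ $\hat H = -H$, so the key identity gives $K[H \otimes \mathbb{I}] = -K[\mathbb{I} \otimes H]$, proving the first equality in \cref{bbc}. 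The mirror equality $[H\otimes \mathbb{I} + \mathbb{I} \otimes H]K = 0$ follows by Hermitian conjugation using $K^\dagger = K$.

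For the ``if'' direction, start with any Hermitian decomposition $h$ of $H$ and split $h = h_+ + h_-$ with $h_\pm := \tfrac{1}{2}(h \mp \hat h)$. Since $\hat{\cdot}$ is a linear involution on $\mathbb{C}[\mathbb{S}_n]$, one has $\widehat{h_+} = -h_+$ and $\widehat{h_-} = h_-$; combined with Hermiticity, the first condition is equivalent to $h_+$ satisfying the parity--conjugation condition. Setting $H_\pm := \pi(h_\pm)$, one has $H + \hat H = 2 H_-$, so the hypothesis $K[\mathbb{I} \otimes (H + \hat H)] = 0$ reduces to $K[\mathbb{I} \otimes H_-] = 0$. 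It remains to show that this last condition forces $H_-$ itself to possess a parity--conjugation decomposition, so that adding it to $h_+$ yields the sought decomposition of $H$.

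The main obstacle is this final step. I expect to handle it via Schur--Weyl duality: decompose $(\mathbb{C}^d)^{\otimes n} = \bigoplus_\lambda \mathcal{Q}_\lambda \otimes \mathcal{M}_\lambda$ and note that the image of $K$ is the sign representation of the diagonal $\mathbb{S}_n$-action on $(\mathbb{C}^d)^{\otimes 2n}$, supported only on pairs $(\lambda, \lambda')$ for which $\lambda$ fits inside a $d \times d$ Young box (so that both $\lambda$ and its conjugate $\lambda'$ have at most $d$ rows), and spanned by the canonical vector $v_\lambda = \sum_i e^\lambda_i \otimes e^{\lambda'}_i$ written in orthonormal real bases of the Specht modules (which exist because all $\mathbb{S}_n$-irreps are realizable over $\mathbb{R}$). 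Direct computation of the projection $\langle v_\lambda, (\mathbb{I} \otimes H_-) \cdot \rangle$ forces $H_-$ to vanish on every $\mathrm{End}(\mathcal{M}_\lambda)$ block with $\lambda$ inside the box. On the remaining sectors where $H_-$ may survive (i.e.\ $\#\mathrm{rows}(\lambda)\le d$ but $\#\mathrm{rows}(\lambda') > d$), the conjugate block $\mathrm{End}(\mathcal{M}_{\lambda'})$ lies inside $\ker \pi$, giving the freedom to build a parity--conjugation decomposition for $H_-$ by placing a compensating element (concretely, $-H_\lambda^T$) on the $\lambda'$ side without changing $\pi$; this kernel freedom is precisely what makes the equivalence work.
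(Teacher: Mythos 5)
Your proposal is correct and follows essentially the same route as the paper: the forward direction is the paper's coefficient-symmetrization argument combined with the intertwining relation $K[\P(\sigma)\otimes\P(\sigma)]=\sgn(\sigma)K$, and the converse is the paper's appendix argument (the general $\mathbb{Z}_2$ theorem specialized to $\mathbb{S}_n$ with the sign character), namely the Schur--Weyl/Fourier block decomposition of $K$ over conjugate pairs of Young diagrams together with the use of the kernel of the representation on diagrams with more than $d$ rows to supply the compensating blocks. Your reorganization via the involution $h\mapsto\hat h$, which converts the paper's block-pairing condition $\mathsf{a}_\lambda^T=-J_\lambda^\dagger\mathsf{a}_{\lambda'}J_\lambda$ into the vanishing of $H_-$ on diagrams fitting inside the $d\times d$ box, is a clean but equivalent repackaging rather than a genuinely different method.
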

In \cref{app:genZ2}, we discuss a generalization of the notion of $\mathbb{Z}_2$ symmetry and present the proof of a generalized version of this lemma (see \cref{thm:Z2}). One direction of the lemma is proven below. 

\begin{proof}
  Here, we prove that if $h_\sigma=-\sgn(\sigma) h^\ast_\sigma~ \forall \sigma\in\mathbb{S}_n$, then $K[H\otimes \mathbb{I}+\mathbb{I}\otimes H]=0$. Consider
\begin{align}
  H^\dag&=\sum_{\sigma\in\mathbb{S}_n} h^\ast_{\sigma}\ \P^\dag(\sigma)=-\sum_{\sigma\in\mathbb{S}_n} \sgn(\sigma) h_{\sigma}\ \P(\sigma^{-1})\nonumber \\ &=-\sum_{\sigma\in\mathbb{S}_n} \sgn(\sigma) h_{\sigma^{-1}}\ \P(\sigma)\ ,
\end{align}
where to get the second equality we have used the assumption of the lemma and to get the third equality, we have used $\sgn(\sigma)=\sgn(\sigma^{-1})$. Since $H$ is Hermitian, using $H=(H+H^\dag)/2$, we conclude $H=\sum_{\sigma\in\mathbb{S}_n} \tilde{h}_{\sigma}\ \P(\sigma)$, where 
\begin{align}
  \tilde{h}_\sigma\equiv \frac{1}{2}\big[h_{\sigma}- \sgn(\sigma) h_{\sigma^{-1}}\big]\ ,
\end{align}
therefore satisfying 
\be\label{fdr}
\tilde{h}_{\sigma}= -\sgn(\sigma)\tilde{h}_{\sigma^{-1}}\ \ : \forall\sigma\in\mathbb{S}_n \ .
\ee
Then, 
\begin{align}\label{art11}
  &\quad~ K[H\otimes \mathbb{I}+\mathbb{I}\otimes H]\nonumber\\ &=K \sum_\sigma [\tilde{h}_{\sigma}\ \P(\sigma)\otimes \mathbb{I}+\mathbb{I}\otimes \tilde{h}_{\sigma^{-1}}\ \P(\sigma^{-1})]\nonumber\\ &=K \sum_\sigma [\tilde{h}_{\sigma} \P(\sigma)\otimes \mathbb{I}+\sgn(\sigma) \tilde{h}_{\sigma^{-1}}\ \P(\sigma)\otimes \mathbb{I}]=0\ , 
\end{align}
where the second and third equalities follow from \cref{Def:K1} and \cref{fdr}, respectively. Finally, by taking the adjoint of both sides of this equation, and using the fact that $H$ and $K$ are Hermitian, we find that $K$ commutes with $[H\otimes \mathbb{I}+\mathbb{I}\otimes H]$. This completes the proof of one side of the lemma (see \cref{app:genZ2} for the proof of the other side and a generalization of this lemma).
\end{proof}

It is worth noting that the statement of the lemma can be slightly generalized by allowing shifts by a multiple of the identity operators: there exists a real number $\alpha$, such that Hamiltonian $H-\alpha \mathbb{I}$ has a decomposition satisfying the above properties, if and only if $K(H\otimes \mathbb{I}+ \mathbb{I}\otimes H)=2\alpha K$. 

It turns out that for systems with $n>d^2$ qudits the operator $K$ is zero. This is because $K$ can be thought of as the projector to the totally anti-symmetric subspace of $n$ pairs of qudits, i.e., $n$ systems with $d^2$-dimensional Hilbert space (see \cref{Def:K1}). For $n>d^2$, such systems cannot have a totally anti-symmetric subspace, which implies $K=0$ (see \cref{App:Kf} for further details). Therefore, the above lemma implies that for systems with $n>d^2$, any $\SU(d)$-invariant Hermitian operator $H$ has a decomposition satisfying \cref{assump}. 

In particular, this means that $K=0$ for $n>4$ qubits, and therefore, the condition in \cref{lem13} is always satisfied for such systems. Interestingly, it turns out that, even though for $n=3, 4$ qubits $K\neq 0$, after a proper shift by a multiple of the identity operator, any  $\SU(2)$-invariant  Hermitian operator $H$ satisfies the condition in \cref{bbc}, and hence by the lemma it has a decomposition satisfying the $\mathbb{Z}_2$ symmetry. For instance, we have already seen that for $n=3$ qubits the operator $\P_{12} \P_{23} + \P_{23} \P_{12}$, which apparently does not satisfy the $\mathbb{Z}_2$ symmetry, up to a shift by the identity operator can be written as $\P_{12}+\P_{23}+\P_{13}$, which respects the condition in \cref{bbc}. In \cref{Sec:examp:4}, we prove this can always be achieved for $n=3, 4$ qubits. In summary,

\begin{corollary}\label{tyt}
  Any $\SU(d)$-invariant Hermitian operator $H$ on $n>d^2$ qudits has a decomposition as $H=\sum_\sigma h_\sigma \P(\sigma) $ satisfying the condition $h_\sigma=-\sgn(\sigma) h^{\ast}_\sigma$ for all $\sigma\in\mathbb{S}_n$. Furthermore, in the case of $n=3, 4$ qubits, any $\SU(2)$-invariant Hermitian operator $H$, up to shift by a multiple of the identity operator, has a decomposition satisfying the above property. 
\end{corollary}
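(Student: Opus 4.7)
The overall plan is to derive both halves from Lemma 13 in its slightly strengthened form stated in the paragraph below the lemma, namely: $H - \alpha\mathbb{I}$ admits a decomposition with $h_\sigma = -\sgn(\sigma) h_\sigma^*$ if and only if $K[H\otimes\mathbb{I} + \mathbb{I}\otimes H] = 2\alpha K$. Thus it suffices to verify this operator identity, for some real $\alpha$ depending on $H$, in each regime.

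For $n > d^2$, I would argue that $K = 0$ identically. By \cref{Def:K1} the operator $K$ is the Hermitian projector onto the totally antisymmetric subspace of $n$ super-systems, each of dimension $d^2$ (the paired qudits); a Pauli-exclusion argument forces this subspace to be trivial whenever $n > d^2$, so $K = 0$ and the identity in Lemma 13 holds automatically with $\alpha = 0$, giving the claim for every $\SU(d)$-invariant Hermitian $H$.

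For $n = 3, 4$ qubits I would reduce the claim, by linearity, to showing $K[\P(\sigma)\otimes\mathbb{I} + \mathbb{I}\otimes\P(\sigma)] \in \mathbb{R}\cdot K$ for every $\sigma \in \mathbb{S}_n$. The two algebraic tools are: (i) a re-indexing $\tau \mapsto \tau\sigma^{-1}$ in the sum defining $K$, which yields $K[\P(\sigma)\otimes\mathbb{I}] = \sgn(\sigma)\,K[\mathbb{I}\otimes\P(\sigma^{-1})]$, so that $K[\P(\sigma)\otimes\mathbb{I} + \mathbb{I}\otimes\P(\sigma)] = 0$ whenever $\sigma$ is an odd involution; in particular this kills every transposition. (ii) the vanishing of the qubit antisymmetrizer $\sum_{\sigma\in\mathbb{S}_m}\sgn(\sigma)\P(\sigma) = 0$ on any subset of $m \ge 3$ qubits, since $\binom{2}{m} = 0$. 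For $n = 3$, relation (ii) reads $\mathbb{I} + \P_{(123)} + \P_{(132)} = \P_{12} + \P_{23} + \P_{13}$, and combining this (applied to either tensor factor) with (i) produces a self-consistent equation whose solution is $K[\P_c\otimes\mathbb{I} + \mathbb{I}\otimes\P_c] = -K$ for each 3-cycle $c$; reading off $\alpha$ from a generic Hermitian $H$ on 3 qubits then closes this case.

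The case $n = 4$ follows the same template: expand each non-trivial conjugacy class (transpositions, double transpositions, 3-cycles, 4-cycles) and reduce using the antisymmetrizer relations on 3- and 4-qubit subsets together with the transposition identity from (i). Alternatively, one can exploit that the range of $K$ has dimension $\binom{4}{4} = 1$, so that $K = |\psi_A\rangle\langle\psi_A|$ is rank one and the required identity collapses to showing that $(H\otimes\mathbb{I} + \mathbb{I}\otimes H)|\psi_A\rangle$ is a scalar multiple of $|\psi_A\rangle$; this can be checked using $(\P(\sigma)\otimes\P(\sigma))|\psi_A\rangle = \sgn(\sigma)|\psi_A\rangle$ together with the 4-qubit antisymmetrizer relation. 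The main obstacle I anticipate is simply the combinatorial bookkeeping for $n = 4$, since several conjugacy classes must be inspected; conceptually the argument is identical to the $n = 3$ case, and the two tools (i) and (ii) remain sufficient throughout.
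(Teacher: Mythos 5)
Your proof is correct in both halves, and the first half ($n>d^2$, where $K=0$ by the Pauli-exclusion argument on $n$ paired qudits) is exactly the paper's argument. For the $n=3,4$ qubit cases, however, you take a genuinely different route. The paper (in its appendix on this corollary) works representation-theoretically: it uses Schur--Weyl duality to decompose $(\mathbb{C}^2)^{\otimes n}\otimes(\mathbb{C}^2)^{\otimes n}$ into blocks $\pb_j\otimes\pb_{j'}$, determines which blocks contain the sign representation, identifies $K$ explicitly as $\identity\otimes\identity\otimes\qproj{\Psi_-}$ with $\nket{\Psi_-}$ a singlet in the relevant multiplicity space, and then invokes the identity $(B\otimes\identity+\identity\otimes B)\nket{\Psi_-}=\Tr(B)\nket{\Psi_-}$ to read off the shift $\alpha$. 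You instead work directly in the group algebra, reducing to $K[\P(\sigma)\otimes\mathbb{I}+\mathbb{I}\otimes\P(\sigma)]\in\mathbb{R}\cdot K$ conjugacy class by conjugacy class, using the re-indexing identity $K[\P(\sigma)\otimes\mathbb{I}]=\sgn(\sigma)K[\mathbb{I}\otimes\P(\sigma^{-1})]$ and the vanishing of the qubit antisymmetrizer on $\geq 3$ sites. Your $n=3$ computation checks out: the self-consistent equation indeed gives $K[\P_c\otimes\mathbb{I}+\mathbb{I}\otimes\P_c]=-K$ for $3$-cycles, matching the paper's value $\Tr(\pb_{1/2}(c))=-1$ for the $2$-dimensional irrep of $\mathbb{S}_3$. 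Your approach is more elementary and self-contained; the paper's approach generalizes more cleanly (it is what underlies the general \cref{thm:Z2}) and avoids case analysis.

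Two small points to tighten. First, the $n=4$ case is only sketched; the conjugacy classes that survive your tool (i) are the double transpositions, $3$-cycles, and $4$-cycles, and while the $3$-cycles reduce exactly as in $n=3$, the double transpositions and $4$-cycles require the relations obtained by multiplying the $3$-qubit antisymmetrizers by permutations (the full two-sided ideal they generate), not just the bare antisymmetrizer identities; your rank-one alternative via $K=\qproj{\psi_A}$ is cleaner here and is essentially what the paper does. Second, after establishing $K[\P(\sigma)\otimes\mathbb{I}+\mathbb{I}\otimes\P(\sigma)]=c_\sigma K$ with $c_\sigma\in\mathbb{R}$, linearity gives $K[H\otimes\mathbb{I}+\mathbb{I}\otimes H]=\bigl(\sum_\sigma h_\sigma c_\sigma\bigr)K$, and you still need the scalar to be real (the $h_\sigma$ in a decomposition of a Hermitian $H$ need not be real); this follows by noting that $K[H\otimes\mathbb{I}+\mathbb{I}\otimes H]K$ is Hermitian and equals that scalar times $K$, but it deserves a sentence.
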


\subsection{A conservation law imposed by the \texorpdfstring{$\mathbb{Z}_2$}{Z2} symmetry}

It turns out that, under time evolution generated by Hamiltonians satisfying this property, certain functions of state remain conserved, whereas they can vary under general $\SU(d)$-invariant Hamiltonians: consider an $\SU(d)$-invariant Hamiltonian $H(t)=\sum_{\sigma\in\mathbb{S}_n} h_{\sigma}(t)\ \P(\sigma)$ satisfying the constraint in \cref{assump} and let $V(t): t\ge 0$ be the family of unitaries satisfying the Schr\"{o}dinger equation $\frac{\d V(t)}{\d t}=-\i H(t){V}(t) $, with the initial condition $V(0)=\mathbb{I}$. Then, using \cref{lem13} we find 
\begin{align}
  &\quad~ \frac{\d}{\d t}K[{V}(t)\otimes{V}(t)]\nonumber\\ &=-\i K[H(t)\otimes \mathbb{I}+\mathbb{I}\otimes H(t)] [{V}(t)\otimes{V}(t)]= 0\ ,
\end{align}
which, in turn, implies 
\be
K[{V}(t)\otimes {V}(t)]=K\ .
\ee 
This means that if under an $\SU(d)$-invariant Hamiltonian that satisfies \cref{assump}, a pair of initial states $|\psi_{1,2}\>$ evolves to the final states $|\psi'_{1,2}\>$, then
\be\label{form}
K\big(|\psi'_1\>\otimes |\psi'_2\>\big)= K\big(|\psi_1\>\otimes |\psi_2\>\big) \ .
\ee

Motivated by this observation, for any system with $n\ge 2$ qudits, define $f_{\sgn}[\psi]$ as the square of the norm of the vector $K(|\psi\>\otimes |\psi\>)$, or equivalently as
\begin{equation}\label{def:fskew}
  \begin{split}
    f_{\sgn}[\psi]&\equiv (\langle\psi|\otimes \langle\psi|) K (|\psi\>\otimes |\psi\>) \\ &=\frac{1}{n!} \sum_{\sigma\in\mathbb{S}_n} \sgn(\sigma)\ \<\psi| \P(\sigma)|\psi\>^2\ ,
  \end{split}
\end{equation}
where $\psi=|\psi\rangle\langle\psi|$ denotes the density operator associated with the state $|\psi\rangle$. The fact that $K$ is a projector implies that $f_{\sgn}[\psi]$ is a real number between 0 and 1. Furthermore, \cref{form} implies 
\begin{theorem}\label{Thm5}
  Suppose the $\SU(d)$-invariant Hamiltonian $H(t)$ has a decomposition as $H(t)=\sum_{\sigma\in\mathbb{S}_n} h_{\sigma}(t)\ \P(\sigma)$ that satisfies  $h_{\sigma}(t)=-\sgn(\sigma) h^\ast_{\sigma}(t)$ for all $t\ge 0$ and all $\sigma\in\mathbb{S}_n$, except possibly the identity element $e\in\mathbb{S}_n$ . If under the time evolution generated by this Hamiltonian an initial state $|\psi\rangle$ evolves to $|\psi'\rangle$, then 
  \be\label{cons-fskew}
  f_{\sgn}[\psi']=f_{\sgn}[\psi]\ .
  \ee
\end{theorem}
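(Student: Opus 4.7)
The plan is to show that under the stated hypothesis the operator $K[V(t)\otimes V(t)]$ differs from $K$ only by an overall $\U(1)$ phase, so that applying it to $|\psi\rangle\otimes|\psi\rangle$ and passing to squared norms yields $f_{\sgn}[\psi']=f_{\sgn}[\psi]$. The main task is to adapt the computation already sketched in the paragraph preceding the theorem (which assumed the $\mathbb{Z}_2$ condition held for \emph{every} $\sigma$, including the identity) to the slightly weaker hypothesis here, in which $h_e(t)$ is left unconstrained.

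First I would split the Hamiltonian as $H(t)=\widetilde{H}(t)+h_e(t)\mathbb{I}$, where $\widetilde{H}(t)=\sum_{\sigma\neq e} h_\sigma(t)\,\P(\sigma)$ satisfies $h_\sigma(t)=-\sgn(\sigma)h_\sigma^\ast(t)$ for \emph{every} $\sigma\in\mathbb{S}_n$ (trivially for $\sigma=e$, whose coefficient in $\widetilde{H}$ is zero). Since $H(t)$ is Hermitian and all non-identity coefficients are pinned by the $\mathbb{Z}_2$ condition, $h_e(t)$ is automatically real. Invoking \cref{lem13} on $\widetilde{H}(t)$ gives $K[\widetilde{H}(t)\otimes\mathbb{I}+\mathbb{I}\otimes\widetilde{H}(t)]=0$, so differentiating $K[V(t)\otimes V(t)]$ via the Schr\"{o}dinger equation reduces the matrix evolution to the scalar linear ODE
\begin{equation}
\frac{\d}{\d t} K[V(t)\otimes V(t)] \;=\; -2\i\, h_e(t)\, K[V(t)\otimes V(t)],
\end{equation}
with initial condition $K[V(0)\otimes V(0)]=K$. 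Its solution is simply
\begin{equation}
K[V(t)\otimes V(t)] \;=\; \exp\!\Bigl(-2\i\int_0^t h_e(s)\,\d s\Bigr)\,K.
\end{equation}

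Applying both sides to $|\psi\rangle\otimes|\psi\rangle$ and using $|\psi'\rangle=V(t)|\psi\rangle$ gives $K(|\psi'\rangle\otimes|\psi'\rangle)=e^{-2\i\phi}K(|\psi\rangle\otimes|\psi\rangle)$ for a real phase $\phi$; taking squared norms, the phase drops out and the conservation law $f_{\sgn}[\psi']=f_{\sgn}[\psi]$ follows. The only genuine subtlety---rather than a true obstacle---is exactly this handling of the identity shift: the unconstrained coefficient $h_e(t)$ contributes an overall $\U(1)$ phase on $K[V(t)\otimes V(t)]$, and one needs to observe that this phase is invisible to the squared-norm quantity $f_{\sgn}[\psi]=\|K(|\psi\rangle\otimes|\psi\rangle)\|^2$. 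Once this is noted, the argument is a direct extension of the derivation already given in the text.
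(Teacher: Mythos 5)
Your proof is correct and follows essentially the same route as the paper: \cref{lem13} applied to the identity-free part of the decomposition, differentiation of $K[V(t)\otimes V(t)]$ via the Schr\"{o}dinger equation, and the observation that $f_{\sgn}$ is the squared norm of $K(|\psi\rangle\otimes|\psi\rangle)$ so the residual $\U(1)$ phase contributed by the identity coefficient drops out (the paper handles this same point by resetting the energy reference to set $h_e=0$ and, in the remark after \cref{lem13}, by noting $K(H\otimes\mathbb{I}+\mathbb{I}\otimes H)=2\alpha K$ with $\alpha$ real). The only assertion worth tightening is that $h_e(t)$ is ``automatically real'': since the operators $\P(\sigma)$ are generally linearly dependent, Hermiticity of $H$ does not pin down an individual coefficient of a non-unique decomposition, but this is harmless---e.g., because $K[H\otimes\mathbb{I}+\mathbb{I}\otimes H]K$ is Hermitian, the proportionality constant multiplying $K$ must be real, so only a pure phase can ever appear and the squared norm is conserved regardless.
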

In particular, this conservation law holds for any $\SU(d)$-invariant Hamiltonian that can be written as a sum of 2-local terms, namely Hamiltonians in \cref{Hamiltonian}. Equivalently, this means that if the initial state $|\psi\rangle$ can be converted to $|\psi'\rangle$ by a sequence of 2-local $\SU(d)$-invariant unitaries, then the above conservation law holds.

Furthermore, as we saw in \cref{tyt}, up to a shift by a multiple of the identity operator, any $\SU(2)$-invariant Hamiltonian on qubits has a decomposition satisfying \cref{assump}. Therefore, in the case of $d=2$, $f_{\sgn}$ is trivially conserved. On the other hand, as we see in the following example for $d\ge 3$, this conservation law can be violated by Hamiltonians that contain 3-local $\SU(d)$-invariant terms. 

Finally, we note that the conservation of $f_{\sgn}$ does not capture all the consequences of the presence of the $\mathbb{Z}_2$ symmetry.\footnote{For instance, under the same condition in \cref{Thm5}, the expectation value of the operator $K(\Pi_\lambda\otimes \mathbb{I})$ for the state $|\psi\rangle|\psi\rangle$ also remains conserved, for any irrep $\lambda$ of $\SU(d)$.} We discuss more about these additional conservation laws in a follow-up work \cite{HLM_2021}, where we also explain how they are related to a certain family of bilinear forms that are used by Marin \cite{marin2007algebre} for characterizing the Lie algebra generated by swaps.

\subsection{Example: A 6 qutrit system with \texorpdfstring{$\SU(3)$}{SU(3)} symmetry}\label{sec:exZ2}
In \cref{fig:4-local-fskew} we consider a system of 6 qutrits, with the total Hilbert space $(\mathbb{C}^3)^{\otimes 6}$ in the initial state $$(|0\rangle \wedge |1\rangle \wedge |2\rangle) \otimes (|0\rangle \wedge |1\rangle) \otimes |0\rangle\ ,$$ which is restricted to an irrep\footnote{For the readers familiar with Young diagrams, this irrep corresponds to the diagram $\tiny\ydiagram{3,2,1}$. See \cref{app:Young-diagrams} for more details.} of $\SU(3)$ (see the definition of the wedge product in \cref{eq:wedge}). The value of the function $f_{\sgn}$ for this initial state is zero. We first evolve the system under an $\SU(3)$-invariant Hamiltonian that can be written as $\sum_{i<j} h_{ij}\P_{ij}$, i.e., is 2-local (see the caption for further details). Under this time evolution the value of $f_{\sgn}$ remains zero. At $t=100$ we add the 3-local term $\P_{12}\P_{23} +\P_{23}\P_{12}$, which violates the condition in \cref{bbc}, to the Hamiltonian. As we see in this plot, the function $f_{\sgn}$ starts changing for $t>100$. In this example we see that, even for states restricted to a single charge sector, there are additional conservation laws that are respected by 2-local $\SU(3)$-invariant unitaries, but are violated by general 3-local ones.

\begin{figure}[htp]
  \includegraphics[width=0.49\textwidth]{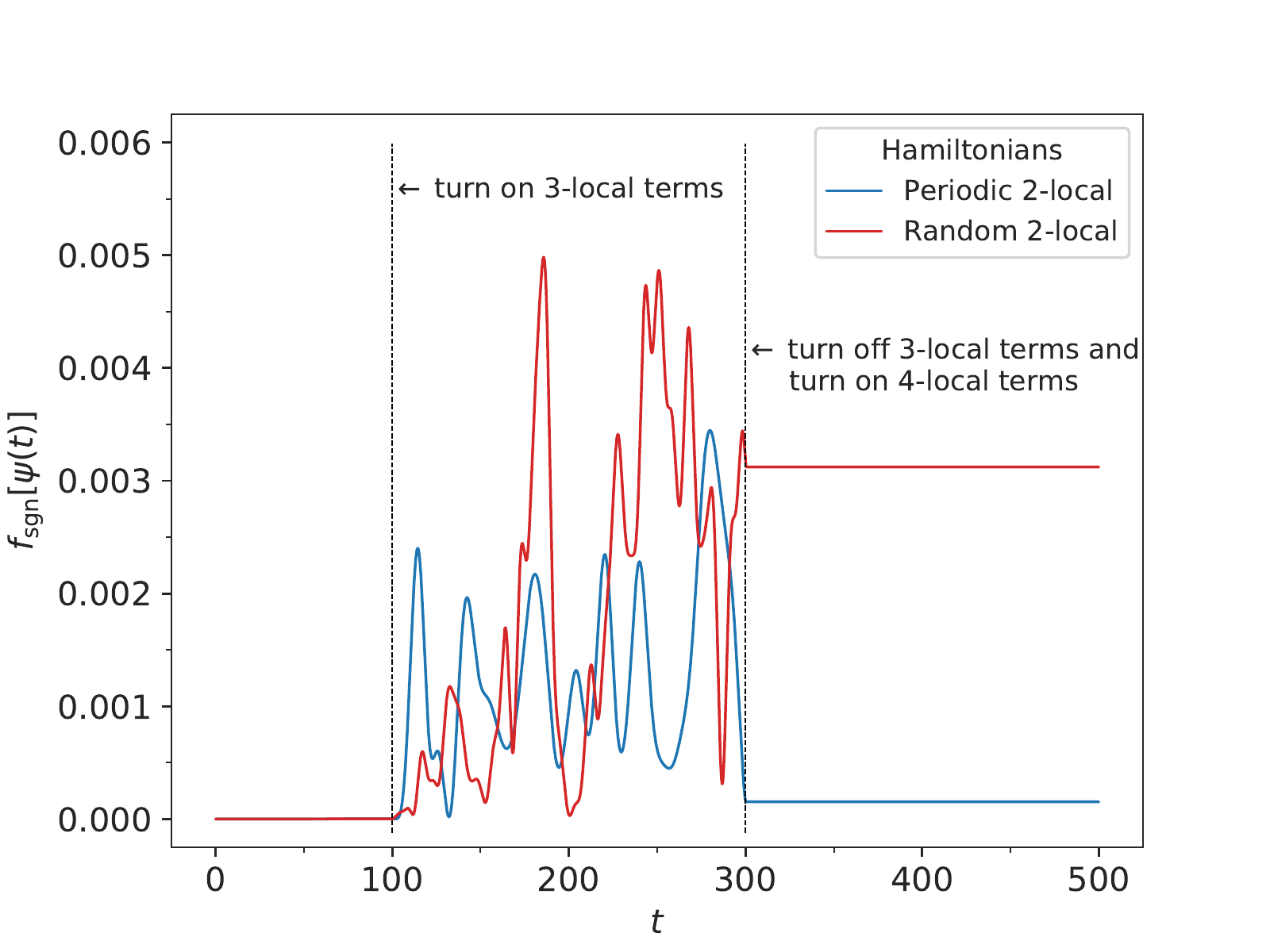}
  \caption{\emph{An example of conservation laws imposed by the $\mathbb{Z}_2$ symmetry}: Consider a system with 6 qutrits in the initial state $ (|0\rangle \wedge |1\rangle \wedge |2\rangle) \otimes (|0\rangle \wedge |1\rangle) \otimes |0\rangle$. It turns out that this state is restricted to a single irrep of $\SU(3)$ (see \cref{app:Young-diagrams}). For this state the value of $f_{\sgn}$ is zero. For $t\le 100$, this initial state evolves under Hamiltonians that can be written as a sum of 2-local $\SU(3)$-invariant terms, namely a Hamiltonian with random 2-local interactions between all pairs of qutrits (the red curve) and a translationally-invariant 2-local Hamiltonian with nearest-neighbor interactions on a closed chain (the blue curve). In both cases $f_{\sgn}$ remains zero. At $t=100$, we turn on the 3-local interaction $\P_{12}\P_{23} +\P_{23}\P_{12}$, which violates the $\mathbb{Z}_2$ symmetry, and $f_{\sgn}$ starts changing. At $t=300$, we turn off the previous 3-local interaction and turn on the 4-local interaction $\P(1234) + \P(4321)$. This 4-local interaction satisfies the $\mathbb{Z}_2$ symmetry and $f_{\sgn}$ remains constant (compare this with the conservation law discussed in \cref{fig:l-shape}).} 
  \label{fig:4-local-fskew}
\end{figure}

It is also worth noting that the function $f_{\sgn}$ may also remain conserved by $\SU(d)$-invariant Hamiltonians that are not 2-local, as long as they satisfy \cref{bbc}. To demonstrate this fact, in the above example we turn off the term $\P_{12}\P_{23} +\P_{23}\P_{12}$ at $t=300$ and turn on the 4-local Hermitian term $\P(1234)+\P(4321)$, where $\P(abcd)\equiv\P_{ab}\P_{bc}\P_{cd}$ is the cyclic permutation of $a,b,c,d$. Since this permutation has even parity and in the Hamiltonian it has a real coefficient, the symmetry condition in \cref{assump} is satisfied, and hence $f_{\sgn}$ remains conserved (see \cref{fig:4-local-fskew}). Note that this 4-local term is not in the Lie algebra generated by 2-local $\SU(3)$-invariant Hamiltonians.\footnote{ This can be seen, for instance, using the result of \cref{conserved}. In particular, in the example in \cref{sect:examp}, the time evolution under this Hamiltonian violates a conservation law that should be satisfied by Hamiltonians in the Lie algebra generated by 2-local $\SU(d)$-invariant Hamiltonians.} Therefore, the unitary time evolution generated by this Hamiltonian, in general, cannot be simulated using 2-local $\SU(3)$-invariant unitaries. We conclude that going beyond the group generated by 2-local unitaries does not guarantee violation of this conservation law. 
 
Finally, we remark on a peculiar feature of this conservation law, which makes it different from the standard conservation laws in \cref{consv}. As we mentioned before, for systems of $n>d^2$ qudits, the projector $K$ is zero which implies $f_{\sgn}$ vanishes for all states. In particular, in the context of the above example, this means that if we consider a system with, e.g., $n=10$ qutrits, then, even in the presence of the term $\P_{12}\P_{23} +\P_{23}\P_{12}$, the function $f_{\sgn}$ remains zero.

\subsection{Further properties of function \texorpdfstring{$f_{\sgn}$}{fsgn}}

The function $f_{\sgn}$ satisfies various other useful properties. In particular, as we show in \cref{App:Kf}:
\begin{enumerate}[(i)]
\item It remains invariant under permutations and global rotations, i.e., for all $U\in \SU(d)$ and $\sigma\in\mathbb{S}_n$, the value of $f_{\sgn}$ for states $|\psi\>$, $U^{\otimes n}|\psi\>$ and $\P(\sigma)|\psi\>$ are equal.
\item It vanishes for states that remain invariant under a swap, up to a sign, i.e., if $\P_{ij} |\psi\>=\pm |\psi\>$, then $f_{\sgn}(|\psi\>)=0$.
\item It is zero for $n>d^2$. In the special case of $n=d^2$, it is non-zero if and only if ${d(d-1)}/{2}$ is even.
\item In the special case of product states, where $|\psi\>=\bigotimes_{j=1}^n |\phi_j\>$, we find
\begin{align}\label{det}
  f_{\sgn}[\psi] &= \frac{1}{n!} \sum_{\sigma\in\mathbb{S}_n} \sgn(\sigma)\prod_{j} \langle\phi_j|\phi_{\sigma(j)}\rangle^2 \nonumber\\ 
              &=\frac{1}{n!}\det(\<\phi_i|\phi_j\>^2) \ ,
\end{align}
where $\det(\<\phi_i|\phi_j\>^2)$ is the determinant of $n\times n$ matrix with the matrix elements $\<\phi_i|\phi_j\>^2$. Note that this matrix is the Gram matrix of states $\{|\phi_i\>|\phi_i\>: i=1,\cdots, n\}$. This means that $f_{\sgn}$ is strictly greater than zero if and only if this is a linearly independent set of vectors (for example, $f_{\sgn}$ is non-zero for the 3 qubit state $|0\>|1\>|+\>$, whereas according to property (iii) it is zero for any 4 qubit state).
\end{enumerate}

\section{Correspondence to a free fermionic system and its implications} \label{Sec:ferm}

Next, we discuss a different type of constraint on the group generated by 2-local $\SU(d)$-invariant unitary transformations for $d\ge 3$, and derive its corresponding conservation laws. In contrast to the one found in the previous section, the following conservation laws remain non-trivial for systems with arbitrarily large number of qudits.

To explain this type of constraint, again we start with a simple example. Let $\{|m\>: m=0, \cdots, d-1\}$ be an arbitrary orthonormal basis of a qudit Hilbert space. Consider the subspace spanned by the state $|1\>|0\>^{\otimes (n-1)}$ and its permuted versions. This subspace is invariant under the action of 2-local unitaries $\{\e^{\i \theta \P_{ab}}\}$. Then, we can interpret $|1\rangle$ as a single ``particle'' moving on $n$ sites. In particular, the state $|0\>^{\otimes (j-1)} |1\> |0\>^{\otimes (n-j)}$ can be interpreted as a particle located on site $j$. If the particle is not on site $a$ or $b$, under the action of the unitary $\e^{\i \theta \P_{ab}}$ the state obtains a phase $\e^{\i\theta}$; on the other hand, if it is located on sites $a$ or $b$, then with the amplitude $\cos \theta$ it remains on the same site, and with the amplitude $\i\sin \theta$ moves to the other site. This interpretation can be extended to the cases with more than one particle. What is remarkable about this interpretation is that if the particles are in an anti-symmetric state, e.g., $\frac{1}{\sqrt{2}}(|1\>|2\>-|2\>|1\>) |0\>^{\otimes (n-2)}$ then the time evolution under unitaries $\{ \e^{\i \theta \P_{ab}}\}$ corresponds to the time evolution of free (non-interacting) fermionic particles. 

In the following, we further formalize and prove this claim. It should be noted that the final result, as presented in \cref{lem3} and the conservation laws in \cref{Thm3}, are about qudit systems and can be understood and applied independent of this fermionic picture. 

\subsection{The Lie group generated by exponentials of swap operators on the sites of a fermionic system}\label{fermionic group}

In this section we introduce a representation of the permutation group $\mathbb{S}_n$ on a fermionic system and characterize the Lie group generated by the exponential of swaps (transpositions) on the sites of this system. In the next section, we apply this result to study systems of qudits. It is worth noting that the results presented in this section can be understood independently of the other sections and could be of independent interest.
 
For a fermionic system with $n$ sites, let $c^\dag_i$ be the creation operator that creates a particle on site $i$ for $i=1,\cdots, n$. These operators satisfy the usual fermionic anti-commutation relations $\{c_i^\dag, c_j\}=\delta_{ij}$ and $\{c_i^\dag, c^\dag_j\}=0$. Let $|\mathrm{vac}\>$ be the Fock vacuum, satisfying $c_i|\mathrm{vac}\>=0$, for all $i=1,\cdots, n$.

We can define a natural representation of the permutation group on the space of creation operators: under the permutation $\sigma\in\mathbb{S}_n$, the creation operator $c^\dag_j$ is transformed to $c_{\sigma(j)}^\dag$. In particular, for all transpositions (swaps) $\sigma_{ab}\in\mathbb{S}_n$, it holds that 
\be\label{perm0}
{\P}^f_{ab}\ c^\dag_j\ {{\P}^f_{ab}}^\dag=c^\dag_{\sigma_{ab}(j)}\ \ \ \ \ : j=1,\cdots, n\ . 
\ee
It turns out that these $n$ equations uniquely specify the operator ${\P}^f_{ab}$, up to a global phase, namely 
\be\label{def:P}
\P^f_{ab}\equiv \mathbb{I}^f - (c_a^\dagger - c_b^\dagger) (c_a - c_b)\ ,
\ee
where $\mathbb{I}^f $ is the identity operator. More precisely, as we show in \cref{app:fermi-Pf}, the operator in \cref{def:P} is the unique operator satisfying both \cref{perm0} and equation 
\be\label{perm5}
\P^f_{ab} |\mathrm{vac}\rangle= |\mathrm{vac}\rangle\ .
\ee
Note that the operator $\P^f_{ab}$ defined in \cref{def:P} is both Hermitian and unitary.

Since transpositions generate the permutation group $\mathbb{S}_n$, \cref{perm0} defines a representation of this group on the vector space spanned by the creation operators. This also induces a representation of this group on the $2^n$-dimensional Hilbert space of the fermionic system, also known as the Fock space. In particular, under the transposition $\sigma_{ab}\in\mathbb{S}_n$, the $L$-particle state $c^\dag_{i_L} \cdots c^\dag_{i_1} |\mathrm{vac}\> $ is mapped to
\be\label{perm11}
\P^f_{ab} \big(c^\dag_{i_L}\cdots c^\dag_{i_1} |\mathrm{vac}\rangle\big)=c^\dag_{\sigma_{ab}(i_L)} \cdots c^\dag_{\sigma_{ab}(i_1)} |\mathrm{vac}\rangle\ .
\ee 
Having defined this representation of the permutation group, we now ask: what is the group of unitaries generated by Hamiltonians that can be written as a sum of swaps, i.e., Hamiltonians in the form $\sum_{a<b} h_{ab}(t) \P^f_{ab}$, where $h_{ab}$ is an arbitrary function of time? Equivalently, what is the group generated by unitaries $\{ \e^{\i\theta \P^f_{ab}}: \theta\in[0,2\pi), 1\le a<b\le n \}$? 

As we explain in the following, this group has a simple characterization. The key observation that allows us to characterize this group is the fact that, up to a constant shift, the swap operators $\{\P_{ab}^f\}$ are quadratic in the creation and annihilation operators. This implies that Hamiltonians in the form $\sum_{a<b} h_{ab}(t) \P^f_{ab}$ correspond to non-interacting (free) fermionic systems. It is worth noting that a similar representation of the permutation group can also be defined for bosons (see \cref{app:fermi-Pf}). However, unlike in the fermionic case, in the bosonic case the unitary that satisfies \cref{perm0} is not quadratic in creation/annihilation operators, and therefore the corresponding Hamiltonian is not free. 
 
Using \cref{def:P} together with the fermionic anti-commutation relations, it can be shown (see \cref{app:fermi-conjugate-c}) that 
\be\label{rep:fermi}
\e^{\i\theta \P^f_{ab}} c^\dag_j \e^{-\i\theta \P^f_{ab}}=\e^{-\i\theta}(\cos \theta~ c^\dag_{j} +\i \sin\theta~ c^\dag_{\sigma_{ab}(j)})\ .
\ee
The right-hand side of \cref{rep:fermi} can be interpreted as a unitary (Bogoliubov) transformation on the $n$-dimensional vector space spanned by the creation operators, which again indicates that under the Hamiltonian $ \P^f_{ab}$ the particles are not interacting with each other. To understand this better, it is useful to consider the single-particle version of the operator $\P_{ab}^f$: let $|j\rangle: j=1,\cdots, n$ be an orthonormal basis for $\mathbb{C}^n$. Consider the operator $E_{ab}$ acting on $\mathbb{C}^{ n}$ satisfying equations
\be
 E_{ab} |j\rangle=|\sigma_{ab}(j)\rangle\ \ \ \ \ : j=1,\cdots, n \ ,
 \ee
which correspond to the single-particle version of \cref{perm11}. This operator is explicitly given by
\be\label{eq:def-Eab}
E_{ab}\equiv\mathbb{I}- (|a\rangle-|b\rangle)(\langle a|-\langle b|)\ ,
\ee
which is also unitary and Hermitian. The set of unitaries $\{E_{ab}:1 \le a<b\le n\}$ generates a reducible representation of the permutation group $\mathbb{S}_n$. In particular, this representation leaves the vector $\sum_{j=1}^n |j\rangle$ invariant. It turns out that the $(n-1)$-dimensional subspace orthogonal to this vector, corresponds to an irrep of $\mathbb{S}_n$, called the \emph{standard} representation \cite{fulton2013representation}.

The operator $\mathbb{I}-E_{ab}= (|a\rangle-|b\rangle)(\langle a|-\langle b|)$ can be interpreted as the single-particle version of the operator $ \mathbb{I}^f-\P^f_{ab}= (c_a^\dagger - c_b^\dagger) (c_a - c_b)$ in the following sense: using the identity $\e^{\i \theta E_{ab}}=\cos\theta \mathbb{I}+\i \sin\theta E_{ab}$, \cref{rep:fermi} can be rewritten as 
\be\label{rep:fermi2}
\e^{\i\theta (\P^f_{ab}-\mathbb{I}^f)} c^\dag_j \e^{-\i\theta (\P^f_{ab}-\mathbb{I}^f)}=\sum_{l=1}^n \langle l|\e^{\i \theta (E_{ab}-\mathbb{I})}|j\rangle \ c^\dag_l\ .
\ee
Furthermore, using \cref{perm5} we find 
\be\label{inv-vac}
\e^{\i\theta (\P^f_{ab}-\mathbb{I}^f)} |\mathrm{vac}\rangle=|\mathrm{vac}\rangle\ .
\ee
Combining this with \cref{rep:fermi2}, we conclude that for any $L$-particle state $\prod_{s=1}^L c^\dag_{j_s} |\mathrm{vac}\rangle$ with $j_1,\cdots, j_s\in\{1,\cdots, n\}$, it holds that 
\begin{align}\label{rep:fermi3}
  \e^{\i\theta (\P^f_{ab}-\mathbb{I}^f)}& \prod_{s=1}^L c^\dag_{j_s} |\mathrm{vac}\rangle= \prod_{s=1}^L \Big[\sum_{k=1}^n \langle k|\e^{\i \theta (E_{ab}-\mathbb{I})}|j_s\rangle c^\dag_{k}\Big] |\mathrm{vac}\rangle\ .
\end{align}
Given that vectors $\{\prod_{s=1}^L c^\dag_{j_s} |\mathrm{vac}\rangle\}$ span the $L$-particle subspace of the Fock space, \cref{rep:fermi3} defines a representation of the Lie group generated by $n\times n$ unitaries $\{\e^{\i\theta (E_{ab}-\mathbb{I})}: \theta\in[0,2\pi), 1 \le a<b \le n\}$, inside this subspace. In particular, since this subspace is unitarily equivalent to the totally anti-symmetric subspace of $(\mathbb{C}^n)^{\otimes L}$, we conclude that inside the $L$-particle sector the unitary $\e^{\i \theta (E_{ab}-\mathbb{I})} $ is represented by
\be\label{homomorph}
\e^{\i \theta (E_{ab}-\mathbb{I})} \mapsto \e^{\i\theta (\P^f_{ab}-\mathbb{I}^f)}\Pi_L \cong \big(\e^{\i \theta (E_{ab}-\mathbb{I})}\big)^{\otimes L}P^{-}_{L, n} \ ,
\ee
where $\Pi_L$ is the projector to the $L$-particle sector of the Fock space and $P^{-}_{L, n}$ is the projector to the totally anti-symmetric subspace of $(\mathbb{C}^n)^{\otimes L}$. In \cref{app:fermi-single-fermion} we show that the unitaries $\{\e^{\i\theta (E_{ab}-\mathbb{I})}\}$ generate the group of all $n\times n$ unitaries that leave the vector $\sum^n_{j=1}|j\rangle$ invariant, which is isomorphic to the group $\U(n-1)$. We conclude that 
\begin{lemma}\label{lem9}
  Consider the Lie groups generated by unitaries $\{\e^{\i\theta (E_{ab}-\mathbb{I})}\}$ and $\{\e^{\i\theta (\P^f_{ab}-\mathbb{I}^f)}\}$ respectively,
\begin{align}\nonumber
  G_{\mathrm{single}}&\equiv\big\langle \e^{\i\theta (E_{ab}-\mathbb{I})}: \theta\in[0,2\pi), 1 \le a<b \le n\big\rangle\subset\U(n)\\ 
  G_\fermi&\equiv\big\langle\e^{\i\theta (\P^f_{ab}-\mathbb{I}^f)}: \theta\in[0,2\pi), 1 \le a<b \le n\big\rangle\subset\U(2^n).\nonumber
\end{align}
Both groups are isomorphic to $\U(n-1)$. Furthermore, for any integer $L$ in the interval $0<L<n$ the projection of $G_\fermi$ into the $L$-particle sector is also isomorphic to $\U(n-1)$, and is unitarily equivalent to the $L$-fold tensor product of $G_\mathrm{single}$ projected to the totally anti-symmetric subspace, as defined in \cref{homomorph}. 
\end{lemma}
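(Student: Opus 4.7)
The plan is to establish the three assertions in order, using the single-particle side to drive the fermionic statements. For $G_{\mathrm{single}}\cong\U(n-1)$: since $(E_{ab}-\mathbb{I})v=0$ for $v\equiv\sum_{j=1}^n|j\rangle$, every generator fixes $v$ pointwise, so $G_{\mathrm{single}}$ is contained in the stabilizer of $v$ in $\U(n)$, which acts as $\U(W)$ on the hyperplane $W\equiv v^\perp$ and trivially on $\mathbb{C} v$, and is therefore isomorphic to $\U(n-1)$. For the reverse inclusion I would show that the Lie algebra $\mathfrak{g}$ generated by $\{\i(E_{ab}-\mathbb{I})\}_{a<b}$ equals all of $\mathfrak{u}(W)$. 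Writing $\i(E_{ab}-\mathbb{I})=-2\i|\phi_{ab}\rangle\langle\phi_{ab}|$ with $|\phi_{ab}\rangle\equiv(|a\rangle-|b\rangle)/\sqrt{2}\in W$, the commutator $[|\phi_{ab}\rangle\langle\phi_{ab}|,|\phi_{ac}\rangle\langle\phi_{ac}|]\propto|\phi_{ab}\rangle\langle\phi_{ac}|-|\phi_{ac}\rangle\langle\phi_{ab}|$ produces off-diagonal generators for overlapping pairs. Iterating (and using $E_{ab}=\e^{\i\pi(E_{ab}-\mathbb{I})/2}\in G_{\mathrm{single}}$, so that the subgroup $\mathbb{S}_n\subset G_{\mathrm{single}}$ acts on $\mathfrak{g}$ by conjugation), I would exhibit $(n-1)^2$ linearly independent skew-Hermitian operators on $W$; combined with connectedness of $G_{\mathrm{single}}$, this forces $G_{\mathrm{single}}=\U(W)\cong\U(n-1)$.

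The isomorphism $G_\fermi\cong G_{\mathrm{single}}$ then follows from the intertwining relation \cref{rep:fermi2}: the assignment $\e^{\i\theta(\P^f_{ab}-\mathbb{I}^f)}\mapsto\e^{\i\theta(E_{ab}-\mathbb{I})}$ extends to a surjective group homomorphism $\Phi:G_\fermi\to G_{\mathrm{single}}$, because composition of the fermionic unitaries induces composition on the creation-operator side. To see $\Phi$ is injective, any $W\in\ker\Phi$ satisfies $Wc^\dag_jW^\dag=c^\dag_j$ for all $j$, while iterating \cref{inv-vac} gives $W|\mathrm{vac}\rangle=|\mathrm{vac}\rangle$; hence $W$ fixes every Fock basis element $c^\dag_{i_L}\cdots c^\dag_{i_1}|\mathrm{vac}\rangle$ and must equal $\mathbb{I}^f$.

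For the $L$-particle projection, \cref{homomorph} already identifies the projected representation with the $L$-fold tensor product of $G_{\mathrm{single}}$ restricted to the totally antisymmetric subspace of $(\mathbb{C}^n)^{\otimes L}$, which is naturally identified with $\Lambda^L(\mathbb{C}^n)$; so it remains to verify faithfulness for $0<L<n$. Decomposing $\mathbb{C}^n=W\oplus\mathbb{C} v$, an element $U=1\oplus U'$ with $U'\in\U(W)$ acts on $\Lambda^L(\mathbb{C}^n)=\Lambda^L(W)\oplus(v\wedge\Lambda^{L-1}(W))$ as $\Lambda^L(U')\oplus\Lambda^{L-1}(U')$. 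Invoking the standard fact that for $1\le k<m$ the kernel of $\Lambda^k$ on $\U(m)$ is $\{\lambda\mathbb{I}:\lambda^k=1\}$ (and $\Lambda^m=\det$), the joint condition $\Lambda^L(U')=\mathbb{I}=\Lambda^{L-1}(U')$ combined with $\gcd(L,L-1)=1$ forces $U'=\mathbb{I}$, after handling $L=1$ (where $\Lambda^0$ is trivial) and $L=n-1$ (where $\Lambda^{n-1}|_W=\det$) separately. The main obstacle is the dimension count in the first step; everything else reduces to the structural identity \cref{rep:fermi2} together with standard facts about exterior-power representations of $\U(m)$.
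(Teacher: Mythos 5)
Your overall architecture matches the paper's: identify $G_{\mathrm{single}}$ with the stabilizer of $v=\sum_j|j\rangle$, transport the result to the Fock space via the Bogoliubov relation \cref{rep:fermi2} and the vacuum invariance \cref{inv-vac}, and handle the $L$-particle sectors through the decomposition $(1\oplus T)^{\wedge L}\cong T^{\wedge L}\oplus T^{\wedge(L-1)}$. Your kernel argument for $\Phi:G_\fermi\to G_{\mathrm{single}}$ (an element fixing the vacuum and commuting with every $c^\dag_j$ fixes all Fock basis states) is correct and clean, and your faithfulness argument for the $L$-particle projection --- that $\ker\Lambda^L\cap\ker\Lambda^{L-1}$ on $\U(n-1)$ is trivial because $\gcd(L,L-1)=1$, with the boundary cases $L=1$ and $L=n-1$ treated separately --- is correct and actually makes explicit a detail the paper leaves implicit (the appendix \cref{app:wedge} only establishes the direct-sum decomposition, not faithfulness).

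The one place where you diverge from the paper, and where your proposal is not yet a proof, is the crucial step that the Lie algebra generated by $\{\i(E_{ab}-\mathbb{I})\}=\{-2\i\,|\phi_{ab}\rangle\langle\phi_{ab}|\}$ is \emph{all} of $\mathfrak{u}(W)$ rather than a proper subalgebra. The paper proves this as \cref{lem2022} by induction on $n$: at each step it splits off the one-dimensional subspace $\mathbb{C}|\beta^{(n+1)}\rangle$ and invokes \cref{lemma2} (block-diagonal unitaries plus one non-block-diagonal generator yield everything). You instead propose to compute commutators of overlapping rank-one projectors and exhibit $(n-1)^2$ independent skew-Hermitian operators, which you yourself flag as "the main obstacle." The plan is viable --- $\langle\phi_{ab}|\phi_{ac}\rangle=\tfrac12\neq 0$ for overlapping pairs, so nested commutators produce all skew-Hermitian combinations of $|\phi_{ab}\rangle\langle\phi_{cd}|$ (chaining through overlapping pairs to reach disjoint ones), and these span $\mathfrak{u}(W)$ because the $\phi_{ab}$ span $W$ --- but as written the step is a sketch, and it is precisely the content of the lemma. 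Note also one small subtlety your reality-based shortcut should address: the generators are $\i$ times \emph{real} rank-one projectors, so one must check that the generated algebra is not trapped in a real form such as $\mathfrak{so}(n-1)$; the nested double commutator $[\i P_{ab},[P_{ab},P_{ac}]]$ producing $\i(|\phi_{ab}\rangle\langle\phi_{ac}|+|\phi_{ac}\rangle\langle\phi_{ab}|)$ alongside the real antisymmetric single commutator is what rules this out, and your writeup should make that explicit.
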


In summary, the main observations that allow us to find this simple characterization are that the swap operators are free Hamiltonians on the fermionic space and the action of unitaries $\{\e^{\i\theta \P^f_{ab}}\}$ can be understood as a unitary (Bogoliubov) transformation on the creation operators (see \cref{rep:fermi}).

This argument reveals a simple physical interpretation and an independent proof of a remarkable mathematical result by Marin (namely, Lemma 12 of \cite{marin2007algebre}). As we mentioned before, Marin has found the simple factors of the Lie algebra generated by transpositions \cite{marin2007algebre}. It turns out that the simple subgroup $\SU(n-1)$ of the Lie group $\U(n-1)$ in our \cref{lem9} corresponds to one of these simple factors, which acts on certain irreps of $\mathbb{S}_n$ (namely, those with L-shape Young diagrams). These are exactly those irreps that appear in the above representation of $\mathbb{S}_n$ on the  fermionic system. In \cite{HLM_2021}, we discuss more about this relation and explain how the above connection with non-interacting fermionic systems provides new insight into this important result.\footnote{In particular, we show that the above lemma is essentially equivalent to Lemma 12 of \cite{marin2007algebre}.}

\vspace{6pt}

\noindent\textbf{Remark.} It is worth noting that $G_\mathrm{single}$ is a reducible representation of $\U(n-1)$. It contains a 1-dimensional invariant subspace corresponding to vector $\sum_{j=1}^n |j\rangle$ and the orthogonal $(n-1)$-dimensional subspace. Relative to this decomposition unitaries in $G_\mathrm{single}$ can be written as $G_\mathrm{single}=\{{1}\oplus T: T\in\U(n-1)\}$. This means that by projecting the group $G_\fermi$ to the $L$-particle sector, we obtain the group of unitaries 
\be\label{eq:equiv}
(1\oplus T)^{\otimes L}P^{-}_{L, n}\cong T^{\otimes L}P^{-}_{L, n-1}\oplus T^{\otimes (L-1)}P^{-}_{L-1, n-1}\ ,
\ee
where $T\in \U(n-1)$. We can rewrite this more compactly as 
\be\label{eq:equivwedge}
(1\oplus T)^{\wedge L}\cong T^{\wedge L}\oplus T^{\wedge (L-1)}\ ,
\ee
where $A^{\wedge L}\equiv A^{\otimes L}P^{-}_{L, m}$ if $A$ is an operator acting on a Hilbert space of dimension $m$. In other words, this means that for $L$ in the interval $0< L< n$, the $L$-particle sector contains two irreps of $\U(n-1)$: one corresponding to $L$ copies of $\U(n-1)$ projected to the totally anti-symmetric subspace of $(\mathbb{C}^{n-1})^{\otimes L}$ and another corresponding to $L-1$ copies of $\U(n-1)$ projected to the totally anti-symmetric subspace of $(\mathbb{C}^{n-1})^{\otimes (L-1)}$. See \cref{app:wedge} for a proof of these isomorphisms.

\subsection{Correspondence between the fermionic system and a subspace of the qudit system}\label{corr}

\begin{table*}[htb]
  \centering
  \renewcommand{\arraystretch}{2}
  \setlength{\tabcolsep}{4pt}
  \begin{tabular}{|cl|l|}
    \TopRule
    & Qudit system & Fermionic system \\ \TopRule
    & $\frac{1}{\sqrt{2}} (|1\>|2\> - |2\>|1\>) \otimes |0\>^{\otimes (n-2)}$ & $c_2^\dagger c_1^\dagger |\mathrm{vac}\>$ \\ \MidRule
    & $\P(\sigma) \big(\frac{1}{\sqrt{2}} (|1\>|2\> - |2\>|1\>) \otimes |0\>^{\otimes (n-2)}\big)$ & $c_{\sigma(2)}^\dagger c_{\sigma(1)}^\dagger |\mathrm{vac}\>$ \\ \MidRule
    Swaps & $\P_{ab}$ & $\P^f_{ab} = \mathbb{I}^f - (c_a^\dagger - c_b^\dagger) (c_a - c_b)$ \\ \MidRule
    2-local & $H(t) = \sum\limits_{i<j} h_{ij}(t)\P_{ij}$ & $H^f(t) = \sum\limits_{i<j} h_{ij}(t) \big(\mathbb{I}^f - (c_i^\dagger - c_j^\dagger) (c_i - c_j) \big)$ \hfill Free \\ \MidRule
    3-local & $H(t) = \sum\limits_{i,j,k} h_{ijk}(t) \P_{ij} \P_{jk}$ & \thinmuskip=0mu \medmuskip=0mu \thickmuskip=0mu $H^f(t) = \sum\limits_{i,j,k} h_{ijk}(t) \big(\mathbb{I}^f - (c_i^\dagger - c_j^\dagger) (c_i - c_j) \big) \big(\mathbb{I}^f - (c_j^\dagger - c_k^\dagger) (c_j - c_k) \big)$ \hfill Interacting \\ \BotRule
  \end{tabular}
  \caption{The correspondence between the fermionic system and a subspace of the qudit system}
  \label{tab:qudit-fermion-correspondence}
\end{table*}

Next, we introduce a correspondence between states in a certain subspace of $(\mathbb{C}^d)^{\otimes n}$ and states of the above fermionic system. For $L\le \min\{n, d-1\}$, consider the state $\big(\bigwedge_{m=1}^{L}|m\>\big) \otimes |0\>^{\otimes (n-L)}$, where 
\begin{equation}\label{eq:wedge}
  \bigwedge_{m = 1}^L|m\rangle \equiv \frac{1}{\sqrt{L !}}\sum_{\sigma \in \mathbb{S}_L} \sgn(\sigma) |\sigma(1)\rangle \cdots |\sigma(L)\rangle\ ,
\end{equation}
is a totally anti-symmetric state in $({\mathbb{C}^d})^{\otimes L}$ and $\{|m\>: m=0, \cdots, d-1\}$ is the \emph{computational} basis for a single qudit. We can identify the subspace spanned by the state $\big(\bigwedge_{m=1}^{L}|m\>\big) \otimes |0\>^{\otimes (n-L)}$, and its permuted versions, with the $L$-particle sector of the above fermionic system via the map
\be\label{eq:perm}
U^f \P(\sigma) \Big[\big(\bigwedge_{m=1}^{L}|m\>\big) \otimes |0\>^{\otimes (n-L)}\Big] = \prod_{i=1}^{L} c_{\sigma(i)}^\dag |\mathrm{vac}\>\ ,
\ee
for all $\sigma\in\mathbb{S}_n$. It can be shown that $U^f$ is a linear isometry, i.e., preserves inner products (see \cref{app:fermi-Uf}). This correspondence holds for all values of $L\le \min\{n, d-1\}$. Considering the subspaces corresponding to all values of $L\le \min\{n, d-1\}$ together, we obtain the following subspace of the qudit system:
\begin{align}\label{subspace}
  \hilbert_\comp\equiv \mathrm{span}_\mathbb{C}\Big\{ &\P(\sigma)\Big[\big(\bigwedge_{m=1}^{L}|m\>\big) \otimes |0\>^{\otimes (n-L)}\Big]\nonumber \\ 
                                                &:\sigma\in\mathbb{S}_n , L\le\min\{n, d-1\} \Big\}\ .
\end{align} 
Note that this definition means $\hilbert_\comp$ is invariant under permutations. Equivalently, the projector to $\hilbert_\comp$, denoted by $\Pi_\comp$, satisfies $\P(\sigma)\Pi_\comp=\Pi_\comp \P(\sigma)$ for all $\sigma\in\mathbb{S}_n$. Furthermore, according to Schur-Weyl duality (see \cref{app:duality}), any $\SU(d)$-invariant unitary $V\in\mathcal{V}_{n}$ can be written as a linear combination of permutations, and hence preserves $\hilbert_\comp$, i.e., $\Pi_\comp V=V\Pi_\comp$. On the other hand, $\hilbert_\comp$ depends on the choice of basis $\{|m\>: m=0, \cdots, d-1\}$, i.e., is not invariant under the action of $\SU(d)$. We discuss more about this later (see the discussion around \cref{eq:renyi2}). 

Using \cref{perm11}, we can see that $U^f$ intertwines the representation of $\mathbb{S}_n$ on the subspace $\hilbert_\comp$ of $(\mathbb{C}^d)^{\otimes n}$ with the representation on the fermionic system, i.e., 
\be
\forall \sigma\in\mathbb{S}_n:\ \ \ U^f {\P}(\sigma) \Pi_\comp= {\P}^f(\sigma) U^f \Pi_\comp\ .
\ee
In particular, this implies 
\be\label{perm65}
U^f \e^{\i\theta {\P}_{ab}} \Pi_\comp=\e^{\i\theta {\P}^f_{ab}} U^f \Pi_\comp\ ,
\ee
for all $a,b \in\{1,\cdots, n\}$ and $\theta\in[0,2\pi)$. 

The linear map $U^f$ establishes a useful correspondence between states and observables in the subspace $\hilbert_\comp$ of the qudit system on the one hand, and those of the fermionic system on the other hand. Using this correspondence together with the results of the previous section on the fermionic system, we can easily understand and characterize the properties of the group generated by 2-local $\SU(d)$-invariant unitaries inside the subspace $\hilbert_\comp$. 

In particular, combining \cref{perm65} together with \cref{def:P}, we arrive at a remarkable conclusion: for states inside $\hilbert_\comp$, the time evolution of qudits under a general 2-local $\SU(d)$-invariant Hamiltonian $H(t)=\sum_{i<j}^{} h_{ij}(t)\ \P_{ij}$ is equivalent to the time evolution of the fermionic system under the non-interacting (free) Hamiltonian
\be\label{Hamiltonian2}
H^f(t)=\sum_{i< j} h_{ij}(t) \Big[\mathbb{I}^f - (c_i^\dagger - c_j^\dagger) (c_i - c_j)\Big] \ ,
\ee
also known as the tight-binding model. Note that for $d>2$, the fermionic model associated to the qudit system contains more than one particle (i.e., $U^f \Pi_\comp$ has components in sectors with more than one particle). In these cases the fact that under the Hamiltonian in \cref{Hamiltonian2} these particles are non-interacting has important implications, such as conservation laws presented in the next section. 

Also, note that according to Schur-Weyl duality, any $\SU(d)$-invariant Hamiltonian can be written as a polynomial of swaps $\{{\P}_{ab}: a<b\}$. For any such Hamiltonian, we can obtain the corresponding fermionic Hamiltonian by applying the mapping ${\P}_{ab}\mapsto {\P}^f_{ab}$. Then, as presented in \cref{tab:qudit-fermion-correspondence}, for a general 3-local $\SU(d)$-invariant Hamiltonian the corresponding fermionic Hamiltonian is interacting. 

In fact, using this correspondence we can fully characterize the action of $\mathcal{V}_{2}$, the group generated by 2-local $\SU(d)$-invariant unitaries, inside $\hilbert_\comp$. In \cref{lem9} we characterized the group $G_\fermi$ generated by $\{\e^{\i\theta (\P^f_{ab}-\mathbb{I})}: \theta\in[0,2\pi), 1 \le a<b \le n\}$ and showed that it is isomorphic to $ \U(n-1)$. Combining this fact with the correspondence in \cref{perm65} we find that the projection of the group generated by unitaries $\{\e^{\i \theta (\P_{ab}-\mathbb{I})}: \theta\in[0,2\pi), 1\le a<b\le n\}$ to the subspace $\hilbert_\comp$, is also isomorphic 
 to $\U(n-1)$. Recall that together with the global phases $\{\e^{\i\theta }\mathbb{I}\}$, these unitaries generate 
$\mathcal{V}_{2}$. This implies that
 \be
 \{V\Pi_\comp: V\in\mathcal{V}_{2} \} \cong \U(1)\times \U(n-1)\ ,
 \ee 
where the operators on the left-hand side are interpreted as unitary transformations on $\hilbert_\comp$. To understand the origin of the $\U(1)$ factor on the right-hand side, note that the vector $\sum_{r=1}^n |0\>^{\otimes (r-1)} |1\> |0\>^{\otimes (n-r)}$ is inside $\hilbert_\comp$, and remains invariant under all unitaries $\{\e^{\i \theta (\P_{ab}-\mathbb{I})}: \theta\in[0,2\pi), 1\le a<b\le n\}$, whereas it obtains a phase under global phases $\{\e^{\i\theta }\mathbb{I}\}$. Therefore, this 1D subspace is a faithful representation of the $\U(1)$ group corresponding to the global phases.

As a simple example, in the case of $n$ qubits, $\hilbert_\comp$ is $n$-dimensional and decomposes to a 1D subspace corresponding to the vector $\sum_{r=1}^n |0\>^{\otimes (r-1)} |1\> |0\>^{\otimes (n-r)}$, which lives in the highest angular momentum sector $j_{\max}=\frac{n}{2}$, and an $(n-1)$-dimensional orthogonal subspace that lives in the sector with angular momentum $j_{\max}-1$. Then, the above statement implies that all unitaries inside the latter subspace can be realized using 2-local rotationally-invariant unitaries. 

\subsection{Conservation laws based on the qudit-fermion correspondence}\label{conserved}

For systems evolving under non-interacting Hamiltonians, the entanglement between particles remains conserved. Combining this fact with the above correspondence, in the following, we derive new conservation laws that hold for 2-local $\SU(d)$-invariant unitaries and are violated by 3-local ones. 

For any state $|\psi\rangle\in(\mathbb{C}^d)^{\otimes n}$ of qudits consider its component in $\hilbert_\comp$, i.e., $\Pi_\comp|\psi\rangle$. Then, 
\be
|\Psi^f\rangle\equiv U^f \Pi_\comp|\psi\rangle
\ee
is the corresponding (unnormalized) state of the fermionic system. Hence, the corresponding single-particle reduced state $\Omega[\psi]=\sum_{i,j=1}^n \Omega_{ij}[\psi]\ |i\rangle\langle j|\ $ is an (unnormalized) density operator defined on the Hilbert space $\mathbb{C}^{n}$, with matrix elements
\begin{align}\label{matrix}
  \Omega_{ij}[\psi] &\equiv {\< \Psi^f |c^\dag_j c_i |\Psi^f\>}: \ \ i,j=1,\cdots, n\ ,
\end{align}
where $\psi=|\psi\rangle\langle\psi|$. As an example, in \cref{app:fermi-1p-example} we show that for state $|\psi\rangle=\sum_{i=1}^n \psi_i |0\>^{\otimes (i-1)} |1\> |0\>^{\otimes (n-i)}$, the corresponding single-particle reduced state is $\Omega[\psi]=\sum_{ij} \psi_i \psi_j^\ast |i\rangle\langle j|$, which is the density operator for the state vector $\sum_{i=1}^n \psi_i |i\rangle$. This justifies the interpretation of $\Omega[\psi]$ as the ``single-particle'' reduced state. 

Note that this definition can be easily generalized to the case of mixed states. For a general density operator $\rho$ on $(\mathbb{C}^d)^{\otimes n}$, define
\be\label{def:omega}
\Omega[\rho]\equiv\sum_{i,j=1}^n \Omega_{ij}[\rho]\ |i\rangle\langle j|\ ,
 \ee
where
\begin{align}\label{eq:Omega_fermion}
  \Omega_{ij}[\rho]\equiv\Tr(c^\dag_j c_i U^f\Pi_\comp\rho\Pi_\comp U^{f\dagger})\ ,
\end{align}
or equivalently, in terms of the qudit operators
\thinmuskip=1mu 
\medmuskip=2mu 
\thickmuskip=3mu
\begin{align}\label{single-particle}
  \Omega_{ij}[\rho]\equiv \begin{cases}
    \Tr\big(\Pi_\comp \rho\Pi_\comp\ [\P_{ij} Q_{ij}]\ \big) &:\ i\neq j,\\
    \Tr\big(\Pi_\comp \rho\Pi_\comp\ [\mathbb{I}_i-|0\rangle\langle 0|_i]\big) &:\ i=j,
  \end{cases}
\end{align}
\thinmuskip=3mu 
\medmuskip=4mu 
\thickmuskip=5mu
where $A_i$ denotes $\mathbb{I}^{\otimes (i-1)}\otimes A \otimes \mathbb{I}^{\otimes (n-i)}$, and therefore $\mathbb{I}_i-|0\rangle\langle 0|_i$ is the projector to the subspace where qudit $i$ is orthogonal to state $|0\rangle$. Similarly, 
\begin{align}
  Q_{ij} \equiv (\mathbb{I}_i - |0\>\<0|_i) |0\>\<0|_j
\end{align}
is the projector to the subspace of states in which qudit $i$ is orthogonal to state $|0\rangle$ and qudit $j$ is in state $|0\rangle$. In \cref{app:fermi-equiv} we show that the above two expressions for $\Omega_{ij}[\rho]$ are indeed equivalent. Furthermore, using the fermionic picture, in particular \cref{rep:fermi2}, in \cref{app:fermi-1p-reduced} we show
\begin{lemma}\label{lem3}
  The linear map $\Omega$ defined in \cref{def:omega,single-particle}, satisfies the covariance condition 
\be\label{cov}
\Omega[\e^{\i\theta \P_{ab}}\rho\e^{-\i\theta \P_{ab}}]=\e^{\i\theta E_{ab}}\Omega[\rho]\e^{-\i\theta E_{ab}}\ ,
\ee
for all $\theta\in[0,2\pi)$ and any pair of sites $a,b\in\{1,\cdots, n\}$, where $E_{ab}$ is given in \cref{eq:def-Eab}. Furthermore, $\Omega$ is a completely-positive map. 
\end{lemma}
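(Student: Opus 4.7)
The plan is to reduce both claims of the lemma to the fermionic picture developed earlier in this section. For covariance, I would start from the definition $\Omega_{ij}[\rho]=\Tr(c^\dag_j c_i\, U^f\Pi_\comp\rho\Pi_\comp U^{f\dag})$ given in \cref{eq:Omega_fermion}. Since $\hilbert_\comp$ is invariant under the action of any permutation and hence under $\e^{\i\theta\P_{ab}}$, the projector $\Pi_\comp$ commutes with $\e^{\i\theta\P_{ab}}$. Applying the intertwining relation \cref{perm65} on both sides of $\rho$ then transports the conjugation $\rho\mapsto\e^{\i\theta\P_{ab}}\rho\,\e^{-\i\theta\P_{ab}}$ through $U^f\Pi_\comp$, producing $\e^{\i\theta\P^f_{ab}}\bigl(U^f\Pi_\comp\rho\Pi_\comp U^{f\dag}\bigr)\e^{-\i\theta\P^f_{ab}}$ on the fermionic side. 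Cycling the trace then reduces the problem to evaluating the Heisenberg-evolved operator $\e^{-\i\theta\P^f_{ab}}\,c^\dag_j c_i\,\e^{\i\theta\P^f_{ab}}$.

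The decisive observation is that $\P^f_{ab}$ is (up to a constant) quadratic in the fermionic creation and annihilation operators, so its exponential is a free Bogoliubov transformation. Here I would invoke \cref{rep:fermi2}, which gives $\e^{\i\theta(\P^f_{ab}-\mathbb{I}^f)}c^\dag_j\e^{-\i\theta(\P^f_{ab}-\mathbb{I}^f)}=\sum_l\langle l|\e^{\i\theta(E_{ab}-\mathbb{I})}|j\rangle\,c^\dag_l$, together with its Hermitian adjoint for $c_i$. Multiplying the two, the global $\U(1)$ phases coming from the identity part of $\P^f_{ab}$ cancel between $c^\dag_j$ and $c_i$, and one obtains a purely bilinear expression in the rotated $c^\dag_l c_m$ whose coefficients are the matrix elements of $\e^{-\i\theta E_{ab}}$ and $\e^{\i\theta E_{ab}}$. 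Contracting these indices against $\Omega[\rho]$ yields exactly $\Omega[\e^{\i\theta\P_{ab}}\rho\,\e^{-\i\theta\P_{ab}}]=\e^{\i\theta E_{ab}}\Omega[\rho]\e^{-\i\theta E_{ab}}$, establishing the covariance.

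For complete positivity, I would exhibit $\Omega$ as a composition of three manifestly CP maps: (i) the projection subchannel $\rho\mapsto\Pi_\comp\rho\Pi_\comp$; (ii) conjugation by the isometry $U^f$, which is CP because $U^f$ is a linear isometry by \cref{app:fermi-Uf}; and (iii) the one-body reduced density matrix map on the fermionic Fock space, $\sigma\mapsto\sum_{ij}\Tr(c^\dag_j c_i\sigma)\,|i\rangle\langle j|$. The last map can be rewritten as $\sigma\mapsto\Tr_{\mathrm{Fock}}\bigl(T\sigma T^\dag\bigr)$ with $T\equiv\sum_{i=1}^n c_i\otimes|i\rangle$ a linear operator from the Fock space to its tensor product with $\mathbb{C}^n$; it is thus conjugation by an operator followed by a partial trace, both of which are CP. Since the composition of CP maps is CP, $\Omega$ inherits complete positivity.

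The main bookkeeping subtlety is keeping track of the $\U(1)$ global phases generated by the identity parts of $\P^f_{ab}$ (vis-\`a-vis $\P^f_{ab}-\mathbb{I}^f$) and of $E_{ab}$ (vis-\`a-vis $E_{ab}-\mathbb{I}$); these phases are absent from the covariance statement and must cancel on both sides. Once this, together with the commutation of $\e^{\i\theta\P_{ab}}$ with $\Pi_\comp$, is handled, the remainder is the standard fact that a free-fermionic unitary rotates the one-body density matrix by its single-particle transfer matrix.
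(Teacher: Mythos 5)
Your proposal is correct and follows essentially the same route as the paper's proof in \cref{app:fermi-1p-reduced}: the operator $T=\sum_i c_i\otimes|i\rangle$ you introduce for complete positivity is exactly the paper's $S$, and your covariance argument (intertwining via \cref{perm65}, then the Bogoliubov rotation of $c_j^\dag c_i$ from \cref{rep:fermi2} with the $\U(1)$ phases cancelling) is the matrix-element form of the paper's identity $S\e^{\i\theta\P^f_{ab}}=(\e^{\i\theta\P^f_{ab}}\otimes\e^{\i\theta(E_{ab}-\mathbb{I})})S$. No gaps.
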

\cref{cov} is again a manifestation of the fact that inside $\hilbert_\comp$ the time evolution under Hamiltonian $\P_{ab}$ is equivalent to the time evolution under the single-particle Hamiltonian $E_{ab}$ (note that using \cref{lem9} this condition can be understood as the covariance of the map $\Omega$ with respect to the group $\U(n-1)$). 

Using this result we can derive simple constraints on the time evolution under 2-local $\SU(d)$-invariant unitaries. In particular, \cref{cov} implies that, if under $\e^{\i\theta \P_{ab}}$ the initial state $\rho$ evolves to $\rho'=\e^{\i\theta \P_{ab}}\rho\e^{-\i\theta \P_{ab}}$, then the eigenvalues of $\Omega[\rho]$ and $\Omega[\rho']$ are equal. An immediate corollary of this covariance property is the following conservation laws: 
 
\begin{theorem}\label{Thm3}
  Suppose under a unitary transformation $V$ in the group generated by 2-local $\SU(d)$-invariant unitaries, the initial density operator $\rho$ of $n$ qudits evolves to the density operator $\rho'=V\rho V^\dag$. Then for all positive integers $l$, we have
\be\label{eq:renyi}
\Tr(\Omega[\rho']^l)=\Tr(\Omega[\rho]^l)\ ,
\ee
where $\Omega$ is the completely-positive linear map defined in \cref{def:omega,single-particle}. 
\end{theorem}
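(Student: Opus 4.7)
The plan is to bootstrap the single-gate covariance statement of \cref{lem3} into a statement about arbitrary $V\in\mathcal{V}_2$, and then observe that the traces of powers of $\Omega[\rho]$ are invariant under unitary conjugation of $\Omega[\rho]$ on $\mathbb{C}^n$.

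First I would recall that every $V\in\mathcal{V}_2$ is, by definition, a finite product of 2-local $\SU(d)$-invariant unitaries, each of which (up to a global phase) has the form $\e^{\i\theta_k \P_{a_k b_k}}$ for some angle $\theta_k\in[0,2\pi)$ and some pair of sites $a_k,b_k$. Global phases $\e^{\i\alpha}\mathbb{I}$ act trivially by conjugation on $\rho$, so they do not affect $\Omega[\rho]$ at all and can be ignored. Writing
\begin{equation}
V=\prod_{k=1}^{m}\e^{\i\theta_k \P_{a_k b_k}}\ ,
\end{equation}
the goal reduces to tracking $\Omega$ through each factor.

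Next I would iterate \cref{lem3}. A single application gives $\Omega[\e^{\i\theta_k \P_{a_k b_k}}\rho\e^{-\i\theta_k \P_{a_k b_k}}]=\e^{\i\theta_k E_{a_k b_k}}\Omega[\rho]\e^{-\i\theta_k E_{a_k b_k}}$. Repeated application (an induction on $m$) then yields
\begin{equation}
\Omega[V\rho V^\dagger]=W\,\Omega[\rho]\,W^\dagger\ ,\qquad W\equiv\prod_{k=1}^{m}\e^{\i\theta_k E_{a_k b_k}}\ ,
\end{equation}
where $W$ is an $n\times n$ unitary in the group $G_{\mathrm{single}}\cong\U(n-1)$ of \cref{lem9}. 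The key point is that the covariance relation has exactly the right product structure for this induction to work: the $V^\dagger$ on the right of $\rho$ contributes an $\e^{-\i\theta_k E_{a_k b_k}}$ on the right of $\Omega[\rho]$ at each step.

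Finally I would conclude. Since $W$ is unitary, $W\Omega[\rho]W^\dagger$ and $\Omega[\rho]$ are similar matrices on $\mathbb{C}^n$; in particular they have the same spectrum, so
\begin{equation}
\Tr\bigl(\Omega[\rho']^l\bigr)=\Tr\bigl((W\Omega[\rho]W^\dagger)^l\bigr)=\Tr\bigl(\Omega[\rho]^l\bigr)
\end{equation}
for every positive integer $l$, which is the desired conservation law.

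I do not anticipate a serious obstacle; the whole argument is a packaging of \cref{lem3}. The one subtlety worth a remark is that $\Omega$ is \emph{not} a unitary conjugation on the space of qudit density operators—it is a completely positive map into $\End(\mathbb{C}^n)$—so the conservation of $\Tr(\Omega[\rho]^l)$ is genuinely a statement about the induced object $\Omega[\rho]$, not about $\rho$ itself. Accordingly, one should resist the temptation to argue via invariance of $\rho$'s own spectrum; the argument must be routed through the intertwining relation of \cref{lem3}, which is precisely what makes the theorem sensitive to 2-locality (since 3-local gates need not satisfy any analogous covariance, as emphasized by the fermionic correspondence preceding the statement).
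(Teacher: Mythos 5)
Your proposal is correct and follows essentially the same route as the paper: the paper also obtains \cref{Thm3} as an immediate corollary of the covariance relation in \cref{lem3}, iterated over the product of 2-local gates so that $\Omega[V\rho V^\dag]$ and $\Omega[\rho]$ are unitarily conjugate on $\mathbb{C}^n$ and hence have the same spectrum. Your explicit handling of global phases and the remark that the argument must be routed through the intertwining relation rather than the spectrum of $\rho$ itself are both consistent with the paper's reasoning.
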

In the fermionic picture these conservation laws can be understood as the conservation of correlations between particles: a free Hamiltonian does not generate/destroy such correlations. Specifically, \cref{eq:renyi} means that all R\'enyi entropies of the single-particle density operator $\omega[\rho] = \Omega[\rho]/\Tr(\Omega[\rho])$ are conserved. In the next section, we consider an example of a 6 qutrit system and show that the conservation law for $l=2$ is satisfied by the time evolutions generated by Hamiltonians that can be written as a sum of 2-local $\SU(3)$-invariant terms whereas it is violated by those general Hamiltonians that contain 3-local and 4-local terms.

It is worth noting that in the special case of $l=1$, the above conservation law holds for all $\SU(d)$-invariant unitaries $V\in \mathcal{V}_{n}$. 
To see this note that 
\begin{align}
  \Tr(\Omega[\psi]) &= \<\psi |\Pi_\comp \Big[\sum_i \mathbb{I}_i-|0\>\< 0|_i\Big] \Pi_\comp |\psi\>\ .
\end{align}
Then, because any $\SU(d)$-invariant unitary $V\in \mathcal{V}_{n}$ can be written as a linear combination of permutations, it commutes with permutationally-invariant operators $\Pi_\comp$ and $\sum_i \mathbb{I}_i-|0\>\< 0|_i$, which implies $\Tr(\Omega[V\psi V^\dag]) = \Tr(\Omega[\psi])$ (in fact, using \cref{eq:Omega_fermion}, we can interpret $\Tr(\Omega[\psi])$ as the expected number of ``particles'' in the fermionic subspace). 

Another special case where the conservation laws are satisfied trivially is the case of $d=2$: \cref{eq:renyi} holds for all $\SU(2)$-invariant unitaries and all $l$. In order to see this, notice that a general state in the the Hilbert space $(\mathbb{C}^2)^{\otimes n}$ can be decomposed as $|\psi\rangle=\sum_i \psi_i |\overline{\imath}\rangle+|\psi_\perp\rangle $, where $|\psi_\perp\rangle$ is orthogonal to the ``one-particle'' subspace spanned by $\{|\overline{\imath}\rangle\equiv |0\>^{\otimes (i-1)} |1\> |0\>^{\otimes (n-i)}: i=1,\cdots, n\}$. Under any $\SU(2)$-invariant unitary $V$, the one-particle subspace remains invariant and therefore $|\psi'\rangle=V|\psi\rangle=\sum_{i,j} V_{ji}\ \psi_i |\overline{\imath}\rangle+|\psi'_\perp\rangle $, where again $|\psi'_\perp\rangle$ is orthogonal to the one-particle subspace and $V_{ji}=\langle\overline{\jmath}| V |\overline{\imath}\rangle$. As we show in \cref{app:fermi-1p-example}, $\Omega[\psi]=\sum_{ij} \psi_i\psi^\ast_j |i\>\< j|$. We conclude that $\Omega[\psi']$ can be obtained from $\Omega[\psi]$ by conjugation with $n\times n$ unitary $\sum_{ij} V_{ji} |j\rangle\langle i|$, which implies the conservation laws in \cref{eq:renyi} hold trivially for all $\SU(2)$-invariant unitaries. 
 
Finally, we note that \cref{lem3} and the qudit-fermion correspondence discussed above impose further constraints on the dynamics, which are not fully captured by the above conservation laws in \cref{eq:renyi}. First, note that if under an $\SU(d)$-invariant unitary $V$, an initial density operator $\rho$ evolves to $\rho'=V\rho V^\dag$, then for any unitary $U\in\SU(d)$, the initial state $U^{\otimes n}\rho{U^{\otimes n}}^\dag$ evolves to $U^{\otimes n}\rho'{U^{\otimes n}}^\dag$. Therefore, applying the conservation laws in \cref{eq:renyi}, we find 
\be\label{eq:renyi2}
 \Tr\big(\Omega[U^{\otimes n}\rho' U^{\dag\otimes n}]^l\big)=\Tr\big(\Omega[U^{\otimes n}\rho U^{\dag\otimes n}]^l\big)\ ,
\ee
for all $U\in\SU(d)$. In general, these conservation laws are independent of those in \cref{eq:renyi}. This is a consequence of the fact that $\hilbert_\comp$ is defined in terms of the computational basis and is not invariant under unitaries $U^{\otimes n}: U\in\SU(d)$. This in turn implies the linear map $\Omega$ is not covariant under the action of $\SU(d)$.\footnote{Note that $\Omega[U^{\otimes n} \rho U^{\dagger\otimes n}]$ is the same as the single-particle reduced density matrix which would be obtained if instead of the computational basis we had used the basis $\cset{U \nket{m}}{m = 0, \cdots, d - 1}$. In that case, the fermionic correspondence would be between the fermionic system and the subspace $U^{\otimes n} \hilbert_\comp$, composed of the rotated versions of the states in \cref{subspace}.} 

In addition to these conservation laws, \cref{lem3} also implies that the dynamics in subspaces corresponding to different particle numbers are correlated: the ``single-particle'' dynamics determines the dynamics in all other sectors. In \cite{HLM_2021} we will discuss more about this lemma and other implications of the qudit-fermion correspondence.

\subsection{Example: A 6 qutrit system with \texorpdfstring{$\SU(3)$}{SU(3)} symmetry}\label{sect:examp}

Consider a system of 6 qutrits in the initial state
$$(|0\rangle\wedge |1\rangle\wedge |2\rangle)\otimes |0\rangle^{\otimes 3}\ .$$ 
Similar to the state considered in the example in \cref{sec:exZ2}, this state is also restricted to a single (but inequivalent) irrep\footnote{For the readers familiar with Young diagrams, this irrep corresponds to the diagram $\tiny\ydiagram{4,1,1}$. See \cref{app:Young-diagrams} for more details.} of $\SU(3)$. We study the dynamics of this system, and especially the time evolution of $\Tr(\omega(t)^2)$, i.e., the purity of the normalized density operator $\omega(t)=\Omega[\psi(t)]/ \Tr(\Omega[\psi(t)])$. Recall that $\Tr(\Omega[\psi(t)])$ is always automatically conserved under all $\SU(3)$-invariant unitaries and in this case is equal to $2$.

We study the time evolution under the same $\SU(3)$-invariant Hamiltonians considered in the previous example in \cref{fig:4-local-fskew}: first, we let the system evolve under a Hamiltonian that can be written as the sum of 2-local terms, i.e. $H=\sum_{i<j} h_{ij} \P_{ij}$, then we add the 3-local term $\P_{12}\P_{23} +\P_{23}\P_{12}$ to this Hamiltonian, and finally, we turn off this 3-local term and turn on the 4-local term $\P(1234) +\P(4321)$ (see the caption of \cref{fig:l-shape} for further details). The plot in \cref{fig:l-shape} clearly shows that $\Tr(\omega(t)^2)$ remains conserved for the first family of Hamiltonians, whereas it evolves under the second and third families. In particular, note that in the presence of the 4-local term $\P(1234) + \P(4321)$, the function $\Tr(\omega(t)^2)$ does not remain conserved, whereas the function $f_{\sgn}$ \emph{does}. We conclude that the two conservation laws are independent of each other. As we mentioned before, there also exist $\SU(3)$-invariant Hamiltonians that are not 2-local and yet respect these conservation laws (e.g., $\i(\P_{12}\P_{23}- \P_{23}\P_{12})=\i[\P_{12} , \P_{23}]$ is in the Lie algebra by 2-local $\SU(3)$-invariant Hamiltonians and therefore satisfies this property).

\begin{figure}[t]
  \includegraphics[width=0.49\textwidth]{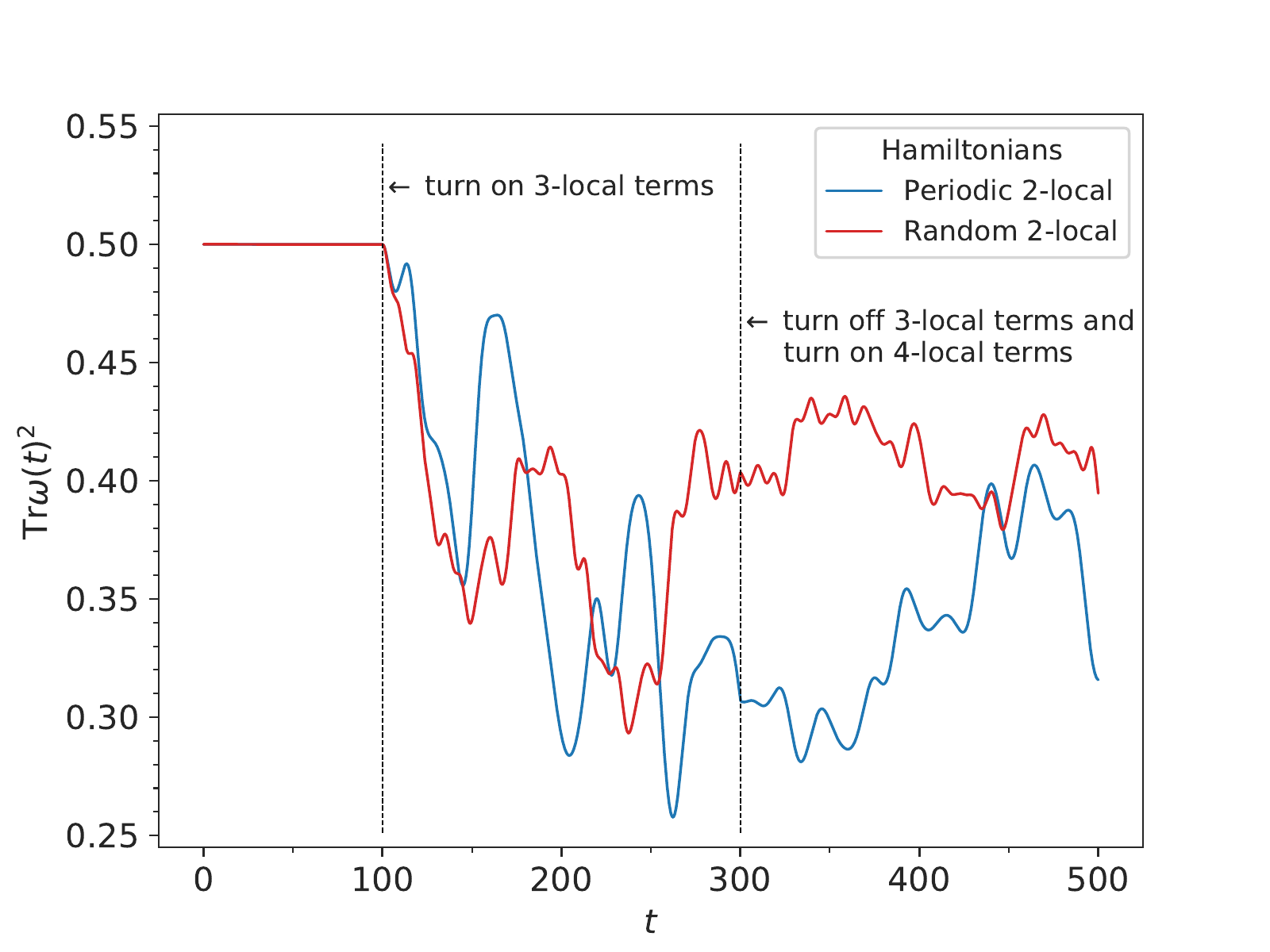}
  \caption{\emph{An example of the conservation laws implied by qudit-fermion correspondence}: 
    Consider a system with 6 qutrits in the initial state $|\psi\rangle = (|0\> \wedge |1\> \wedge |2\>) \otimes |0\>^{\otimes 3}$, which is restricted to a single irrep of $\SU(3)$ (see \cref{app:Young-diagrams}). Let $\omega(t)=\Omega[\psi(t)]/\Tr(\Omega[\psi(t)])$ be the single-particle reduced state associated to the state $|\psi(t)\rangle$, defined in \cref{single-particle}. The vertical axis is the purity of this state, i.e. $\Tr(\omega(t)^2)$. For $t\le 100$, the system evolves under 2-local $\SU(3)$-invariant Hamiltonians, namely a Hamiltonian with random 2-local interactions between all pairs of qutrits (the red curve) and a translationally-invariant 2-local Hamiltonian with nearest-neighbor interactions on a closed chain (the blue curve). In both cases the purity $\omega(t)$ remains constant. At $t=100$, we turn on the 3-local interaction $\P_{12}\P_{23} +\P_{23}\P_{12}$ and the purity starts changing. Finally, at $t=300$, we turn off this 3-local interaction and turn on the 4-local interaction $\P(1234) + \P(4321)$, and the purity is still not conserved. Recall that in \cref{fig:4-local-fskew} the function $f_{\sgn}$ remains conserved in the presence of this 4-local term. This clearly demonstrates that the conservation laws based on the qudit-fermion correspondence are independent of those based on the $\mathbb{Z}_2$ symmetry.}
  \label{fig:l-shape}
\end{figure}

\subsection*{Forbidden superpositions: understanding the phenomenon discussed in \texorpdfstring{\cref{Fig}}{Fig. 1}}

Using the tools and ideas developed in this section, we can now explain the phenomenon discussed in \cref{Fig}. Again, consider a system of 6 qutrits in the same initial state $(|0\rangle\wedge |1\rangle\wedge |2\rangle)\otimes |0\rangle^{\otimes 3}$. Starting with this state, via a sequence of 2-local $\SU(3)$-invariant unitaries we can obtain the orthogonal state $ |0\rangle^{\otimes 3}\otimes (|0\rangle\wedge |1\rangle\wedge |2\rangle)$, which lives in the same charge sector of $\SU(3)$ (this state can be obtained, e.g., by swapping qudits 1 with 4, 2 with 5, and 3 with 6). Consider the 2D subspace spanned by these two orthogonal vectors. Up to a global phase, any state in this subspace can be written as $|\psi(\theta,\phi)\rangle$ equal to
\thinmuskip=1mu 
\medmuskip=1mu 
\thickmuskip=1mu
\be
\cos\frac{\theta}{2}(|0\rangle\wedge |1\rangle\wedge |2\rangle)\otimes |0\rangle^{\otimes 3} +\e^{\i\phi}\sin\frac{\theta}{2} |0\rangle^{\otimes 3}\otimes (|0\rangle\wedge |1\rangle\wedge |2\rangle)\ ,
\ee
\thinmuskip=3mu 
\medmuskip=4mu 
\thickmuskip=5mu
with $\phi \in[0,2\pi)$ and $\theta\in[0,\pi]$. It can be easily shown that any pair of states in this subspace can be converted to each other via a general $\SU(3)$-invariant unitary. In particular, under unitaries generated by the 6-local $\SU(3)$-invariant Hamiltonian $\P_{14}\P_{25}\P_{36}$, the initial state $|\psi(0,0)\rangle$ evolves to state $|\psi(\theta,\frac{\pi}{2})\rangle$ for arbitrary $\theta\in[0,\pi]$. Then, by applying unitaries generated by 2-local $\SU(3)$-invariant Hamiltonian $\mathbb{I}-\P_{45}$, this state can be converted to $|\psi(\theta,\phi)\rangle$ for arbitrary $\phi\in[0,2\pi)$. Since these transformations are reversible via $\SU(3)$-invariant unitaries, we conclude that state $|\psi(\theta,\phi)\rangle$ can be converted to any other $|\psi(\theta',\phi')\rangle$ via $\SU(3)$-invariant unitaries.\footnote{ Alternatively, this can be shown using the fact that for such pairs of states, the condition in \cref{consv} is satisfied, i.e., for all $U\in \SU(3)$, the expectation value $\langle\psi(\theta,\phi)|U^{\otimes 6}|\psi(\theta,\phi)\rangle\ $, is independent of $\theta$ and $\phi$. Therefore, by the result of \cite{marvian2014asymmetry}, we conclude that states with different values of $\theta$ and $\phi$ can be converted to each other via $\SU(3)$-invariant unitaries.}

Next, we ask what transitions in this subspace are possible under unitaries generated by 2-local $\SU(3)$-invariant unitaries. In particular, what are the constraints imposed by the conservation law in \cref{eq:renyi}, e.g., for $l=2$. To apply this conservation law we need to find the single-particle density operator $\omega$ associated to the state $|\psi(\theta,\phi)\rangle$, which can be found using \cref{matrix}, or equivalently, using \cref{single-particle}. Using the fact that $\Pi_\comp|\psi(\theta,\phi)\rangle=|\psi(\theta,\phi)\rangle $, we find 
\be
\omega=\frac{\Omega[\psi(\theta,\phi)]}{\Tr\big(\Omega[\psi(\theta,\phi)]\big)}=\left(
\begin{array}{cc}
  [\cos\frac{\theta}{2}]^2\ \rho& \\
                       & [\sin\frac{\theta}{2}]^2\ \rho 
\end{array}
\right)\ ,
\ee
where $\rho$ is given by
\be
\rho=\frac{1}{6}\left(
  \begin{array}{ccc}
    +2& -1 & -1 \\
    -1 & +2 & -1 \\
    -1 & -1 & +2 
  \end{array}
\right)\ .
\ee
Note that $\omega$ is also the single-particle state associated to the fermionic state $U^f |\psi(\theta,\phi)\rangle$, which is equal to
\thinmuskip=0mu 
\medmuskip=0mu 
\thickmuskip=0mu
\be
\Big(\frac{\cos\frac{\theta}{2}}{\sqrt{3}}[c_1^\dag c_2^\dag-c_1^\dag c_3^\dag+c_2^\dag c_3^\dag]+\frac{\e^{\i\phi}\sin\frac{\theta}{2}}{\sqrt{3}} [c_4^\dag c_5^\dag-c_4^\dag c_6^\dag+c_5^\dag c_6^\dag]\Big)|\mathrm{vac}\rangle .
\ee
\thinmuskip=3mu 
\medmuskip=4mu 
\thickmuskip=5mu
Equivalently, $\rho$ can be interpreted as the single-particle reduced state for the fermionic state $\frac{1}{\sqrt{3}}[c_1^\dag c_2^\dag-c_1^\dag c_3^\dag+c_2^\dag c_3^\dag]|\mathrm{vac}\rangle$ defined on a lattice with 3 sites. Eigenvalues of $\rho$ are $\{\frac{1}{2},\frac{1}{2},0\}$, which implies the purity of $\omega$ is 
\be
\Tr(\omega^2)=\frac{\Tr\big(\Omega[\psi(\theta,\phi)]^2\big)}{\Tr\big(\Omega[\psi(\theta,\phi)]\big)^2}=\frac{1}{8}(3+\cos 2 \theta)\ .
\ee
From the conservation law in \cref{eq:renyi}, we know that this quantity remains conserved under 2-local $\SU(3)$-invariant unitaries. This means that under such unitary transformations the initial state $|\psi(\theta,\phi)\rangle $ evolves to the state $|\psi(\theta',\phi')\rangle $, only if $\theta' =\theta$ or $\theta' = \pi-\theta$. It turns out that this necessary condition is also sufficient: the sequence of 2-local unitaries $\P_{14}\P_{25}\P_{36}$ converts $|\psi(\theta,\phi)\rangle $ to $|\psi(\pi-\theta,-\phi)\rangle $. Furthermore, as we have seen above, under the unitaries generated by 2-local $\SU(3)$-invariant Hamiltonian $\mathbb{I}-\P_{45}$, initial state $|\psi(\theta,\phi)\rangle $ can be transformed to $|\psi(\theta,\phi')\rangle $ for any $\phi\in[0,2\pi)$. Therefore, to summarize, we have shown that 
 \be\label{con62}
|\psi(\theta,\phi)\rangle \xleftrightarrow[\text{$\SU(3)$-inv}]{\text{2-local}} |\psi(\theta',\phi')\rangle\ \ \Longleftrightarrow\ \theta'=\theta, \pi-\theta \ ,
\ee
which means inside this 2-dimensional subspace, the conservation law in \cref{eq:renyi} for $l=2$ fully characterizes the possible state transitions under unitaries generated by 2-local $\SU(3)$-invariant unitaries. We conclude that if one is restricted to 2-local $\SU(3)$-invariant unitaries, then starting from the initial state $(|0\rangle\wedge |1\rangle\wedge |2\rangle)\otimes |0\rangle^{\otimes 3}$ the only other reachable state in this subspace is the state $ |0\rangle^{\otimes 3}\otimes (|0\rangle\wedge |1\rangle\wedge |2\rangle)$. That is, superpositions of these two states are not reachable.

Finally, in \cite{HLM_2021} we show that the constraints in \cref{con62} can be circumvented if the 6 qutrit system can interact with another 3 ancillary qutrits, which are initially prepared in the singlet state and return to the same state at the end of the process. That is,
\thinmuskip=1mu 
\medmuskip=2mu 
\thickmuskip=3mu
\be
|\psi(\theta,\phi)\rangle \otimes (|0\rangle\wedge |1\rangle\wedge |2\rangle)\xleftrightarrow{}|\psi(\theta',\phi')\rangle \otimes (|0\rangle\wedge |1\rangle\wedge |2\rangle)\ ,
\ee
\thinmuskip=3mu 
\medmuskip=4mu 
\thickmuskip=5mu
where the arrow indicates that the transition is possible under 2-local $\SU(3)$-invariant unitaries. In the language of quantum resource theories, the three ancillary qutrits in the above state conversion can be interpreted as a catalyst. 

\section{Quantum Circuits with random \texorpdfstring{$\SU(d)$}{SU(d)}-invariant unitaries}\label{Sec:Random}

Statistical properties of quantum circuits with random local gates have been extensively studied in the recent years (see e.g., \cite{emerson2005convergence, harrow2009random, brandao2016efficient, brandao2016local, Hamma:2012ov, hamma2012quantum}). Besides their applications in quantum information science (see e.g., \cite{gard2020efficient, streif2020quantum}), such circuits have become a standard model for studying complex quantum systems. For instance, they have been  used to  investigate thermalization and the {scrambling of information} in chaotic systems \cite{roberts2017chaos, nahum2017quantum, nahum2018operator, von2018operator}, as quantified by the  {out-of-time-ordered correlation} functions \cite{maldacena2016bound,  roberts2017chaos, shenker2014black},  and to probe  the conjectured role \cite{susskind2016computational, brown2016holographic, stanford2014complexity}  of  {quantum complexity}  in quantum gravity. In particular, quantum circuits with local symmetric gates have been considered as a model for quantum chaos in systems with conserved charges (see e.g., \cite{khemani2018operator, rakovszky2018diffusive, nakata2020generic, kong2021charge}). 
 
In the absence of symmetries, the distribution of unitaries generated by quantum circuits with random local unitaries converges to the uniform (Haar) distribution over the unitary group \cite{emerson2005convergence, harrow2009random}. One may naturally expect that a similar fact also holds in the presence of symmetries. That is, for circuits with sufficiently large number of random symmetric local unitaries, the distribution of unitaries generated by the circuit converges to the uniform distribution over the group of all symmetric unitaries. However, the results of \cite{marvian2022restrictions} and the present paper imply that this conjecture is wrong. 

Let $\mu_\mathrm{Haar}$ be the uniform distribution over the group of all $\SU(d)$-invariant unitaies $\mathcal{V}_{n}$. For circuits with random 2-local $\SU(d)$-invariant unitaries the distribution of generated unitaries converges to $\mu_{2\text{-}\mathrm{loc}}$, the uniform distribution over the subgroup $\mathcal{V}_{2}\subset \mathcal{V}_{n}$ generated by 2-local $\SU(d)$-invariant unitaries. Note that since both $\mathcal{V}_{2}$ and $\mathcal{V}_{n}$ are compact Lie groups, they both have a unique notion of uniform (Haar) distribution. Furthermore, because $\mathcal{V}_{2}$ is a proper Lie subgroup of $\mathcal{V}_{n}$ in general, $\mu_\mathrm{Haar}$ and $\mu_{2\text{-}\mathrm{loc}}$ are distinct distributions. 

In fact, as we explain below, our results imply that for $d\ge 3$, even the second moments of these distributions are different, i.e., 
\begin{align}\label{design}
  \mathbb{E}_{V\sim\mu_{2\text{-}\mathrm{loc}}}[V^{\otimes t}\otimes {V^\ast}^{\otimes t}]\neq \mathbb{E}_{V\sim\mu_\mathrm{Haar}}[V^{\otimes t}\otimes {V^\ast}^{\otimes t}]\ ,
\end{align}
for $t\ge 2$. A general distribution $\mu$ that satisfies \cref{design} as equality is called a $t$-design for $\mu_\mathrm{Haar}$. Therefore, the above claim means $\mu_{2\text{-}\mathrm{loc}}$ is not a 2-design for $\mu_\mathrm{Haar}$. 

This is an immediate corollary of our results in the previous section. Specifically, we saw that there exists a function, namely $\Tr(\Omega[\psi]^2)$, with the following properties: (i) it remains invariant under unitaries in $V\in \mathcal{V}_{2}$, i.e., $\Tr(\Omega[V\psi V^\dag]^2)=\Tr(\Omega[\psi]^2)$, whereas it can change under general $\SU(d)$-invariant unitaries in $\mathcal{V}_{n}$; and (ii) it is a quadratic polynomial in the density operator $\psi=|\psi\rangle\langle\psi|$.

To see how these properties imply the claim in \cref{design}, consider the function $f(V)\equiv \Tr(\Omega[ V\psi V^\dag ]^2)$ for a fixed pure state $\psi$ and arbitrary $V\in \mathcal{V}_{n}$. For arbitrary $W\in \mathcal{V}_{n}$ consider the expected value 
\be\label{w1}
\mathbb{E}_{V\sim\mu_\mathrm{Haar}} \ f(VW) =\mathbb{E}_{V\sim\mu_\mathrm{Haar}}\ f(V)\ ,
\ee
where the equality follows from the fact that $\mu_\mathrm{Haar}$ is invariant under all unitaries in $\mathcal{V}_{n}$. Therefore, for the unitary $V$ chosen according to the Haar measure over $\mathcal{V}_{n}$, the expected value of $f(VW)$ is independent of $W\in\mathcal{V}_{n}$. Next, consider the expected value of $f(VW)$, where $V$ is chosen uniformly from $\mathcal{V}_{2}$. Using the fact that $\Tr(\Omega[\psi]^2)$ remains conserved under unitaries in $\mathcal{V}_{2}$, we find 
\be\label{w2}
\mathbb{E}_{V\sim\mu_{2\text{-}\mathrm{loc}}}\ f(VW) =\Tr(\Omega[W\psi W^\dag]^2)=f(W)\ .
\ee
But, for a general $\SU(d)$-invariant unitary $W\in\mathcal{V}_{n}$, $f(W) = \Tr(\Omega[W\psi W^\dag]^2)$ depends non-trivially on $W$ (this is because the conservation law in \cref{eq:renyi} is violated by general $\SU(d)$-invariant unitaries). We conclude that for general $W\in\mathcal{V}_{n}$ the above two expected values in \cref{w1,w2} are not equal. Finally, we note that the function $f(V) = \Tr(\Omega[ V\psi V^\dag ]^2)$ can be written in the form of a linear functional of $V^{\otimes 2}\otimes {V^\ast}^{\otimes 2}$. In summary, we find that the expected values of $V^{\otimes 2}\otimes {V^\ast}^{\otimes 2}$ for two distributions $\mu_\mathrm{Haar}$ and $\mu_{2\text{-}\mathrm{loc}}$ are not equal, i.e., \cref{design} does not holds for $t\ge 2$.\footnote{It is worth noting that this argument can be generalized: suppose there is a function $f$ from states to complex numbers satisfying the following properties: (i) $f$ can be written as a polynomial of degree $t$ in the density operator; and (ii) $f$ remains conserved under unitaries in a compact subgroup $H$ of a compact group $G$, but is not conserved under some unitaries in $G$. Then, the uniform distribution over $H$ is not a $t$-design for the uniform distribution over $G$.}

We note that, instead of the above argument, this result can also be established using a general result of \cite{zimboras2015symmetry, zeier2015squares}. These works study the commutant of the two-fold tensor product representations of 
compact semi-simple Lie algebras and show that for any proper sub-algebra the corresponding commutant will be larger.  Combining this with our results in the previous section, one arrives at the conclusion that  for $d\ge 3$ the uniform distribution over group $\mathcal{V}_{k}$ is not a 2-design for the group of all $\SU(d)$-invariant unitaries.  
\color{black}

Finally, we note that $\mu_{2\text{-}\mathrm{loc}}$ satisfies \cref{design} for $t=1$, i.e., it is a 1-design for $\mu_\mathrm{Haar}$. To prove this it suffices to show that for any operator $A$, 
\be\label{yr}
\mathbb{E}_{V\sim\mu_{2\text{-}\mathrm{loc}}} V A V^\dag=\mathbb{E}_{V\sim\mu_\mathrm{Haar}} V A V^\dag=\frac{1}{n!}\sum_{\sigma\in\mathbb{S}_n} \P_\sigma A \P^\dag_\sigma\ .
 \ee
Denote the above three operators by $\mathcal{E}_{2\text{-}\mathrm{loc}}(A)$, $\mathcal{E}_\mathrm{Haar}(A)$ and $\mathcal{E}_{\mathrm{perm}}(A)$, respectively, where $\mathcal{E}_{2\text{-}\mathrm{loc}}$, $\mathcal{E}_\mathrm{Haar}$ and $\mathcal{E}_{\mathrm{perm}}$ are linear super-operators. To prove \cref{yr}, first we show 
\be\label{hg}
\mathcal{E}_{2\text{-}\mathrm{loc}}=\mathcal{E}_{2\text{-}\mathrm{loc}}\circ \mathcal{E}_{\mathrm{perm}}=\mathcal{E}_{\mathrm{perm}}\ .
\ee
Recall that $\mu_{2\text{-}\mathrm{loc}}$ is the uniform distribution over $\mathcal{V}_{2}$ and 
\be
\{\P(\sigma):\sigma\in \mathbb{S}_n\}\subset \mathcal{V}_{2}\subset \mathcal{V}_{n}\ .
\ee
Therefore, $\mu_{2\text{-}\mathrm{loc}}$ remains invariant under $\P(\sigma)$ for all $\sigma\in\mathbb{S}_n$, which implies the first equality in \cref{hg}. The second equality in \cref{hg} follows from the fact that $\mathcal{E}_{\mathrm{perm}}(A)$ is a permutationally-invariant operator and by Schur-Weyl duality (see \cref{app:duality}) any such operator can be written as a linear combination of $\{U^{\otimes n} : U\in\SU(d)\}$. But, because all unitaries in $\mathcal{V}_{2}$ commute with $\{U^{\otimes n} : U\in\SU(d)\}$, this means $\forall U: \mathcal{E}_{2\text{-}\mathrm{loc}}(U^{\otimes n})=U^{\otimes n}$. This implies the second equality in \cref{hg}. Using a similar argument we can show that $\mathcal{E}_\mathrm{Haar}=\mathcal{E}_\mathrm{Haar}\circ \mathcal{E}_{\mathrm{perm}}=\mathcal{E}_{\mathrm{perm}}$. Together with \cref{hg}, this proves \cref{yr}. 

In summary, we conclude that the uniform distribution over the group $\mathcal{V}_{2}$ generated by $2$-local $\SU(d)$-invariant unitaries is a 1-design for the uniform distribution over the group $\mathcal{V}_{n}$ of all $\SU(d)$-invariant unitaries but, for $d\ge 3$, is not a 2-design.

\section{Summary and Discussion}
 
In summary, in this paper we presented four main results:
\begin{enumerate}

\item In \cref{Sec:Z2} we identified a $\mathbb{Z}_2$ symmetry of decompositions of $\SU(d)$-invariant operators in terms of permutation operators, related to the notion of parity of permutations (see \cref{assump,thn}). This property, which is automatically satisfied by 2-local $\SU(d)$-invariant unitaries, implies the existence of a new type of conservation law. In the case of $d \geq 3$, we showed that for systems with $n \leq d^2$ qudits, general $k$-local $\SU(d)$-invariant unitaries violate this conservation law for $k>2$ (see \cref{fig:4-local-fskew}). 
  
\item In \cref{fermionic group} we studied 
the representation of the permutation group $\mathbb{S}_n$ on the sites of a fermionic system. In particular, we showed that, up to a constant shift, swaps (transpositions) are represented by operators that are quadratic in creation/annihilation operators, and therefore correspond to non-interacting (free) Hamiltonians. Using this observation we fully characterized the Lie group generated by exponentials of swaps on the sites of a fermionic system. This result could be understood independent of the qudit problem studied in this paper.

\item In \cref{corr} we introduced a correspondence between the dynamics in a certain subspace of systems of qudits and a fermionic system and showed that for 2-local $\SU(d)$-invariant unitaries the corresponding fermionic system is free (non-interacting), whereas for 3-local $\SU(d)$-invariant unitaries the corresponding system is interacting. Using this observation, we found a family of functions that remain conserved under 2-local $\SU(d)$-invariant unitaries and are violated under general $\SU(d)$-invariant unitaries (see \cref{fig:l-shape}). For systems with $d\ge 3$, these conservation laws are non-trivial for arbitrarily large number of qudits $n$. 

\item Using this result, in \cref{Sec:Random} we showed that the distribution of unitaries generated by random 2-local $\SU(d)$-invariant unitaries does not converge to the Haar measure over the group of $\SU(d)$-invariant unitaries, and, in fact, for $d\ge 3$, is not even a 2-design for the Haar distribution. 
\end{enumerate}

In a follow-up paper \cite{HLM_2021}, we show that the group generated by 2-local $\SU(d)$-invariant unitaries is fully characterized by 3 types of constraints, namely (i) constraints on the relative phases between subspaces with different irreps of symmetry, which are characterized in \cref{app:relative-phases}, (ii) constraints imposed by the $\mathbb{Z}_2$ symmetry discussed in \cref{Sec:Z2}, and (iii) the constraints related to the qudit-fermion correspondence identified in \cref{corr}. We will also explain how these constraints are related to the Marin's result \cite{marin2007algebre} that fully characterizes simple factors of the Lie subalgebra generated by the transpositions.

The new conservation laws found in this paper can be useful for understanding the behaviour of complex quantum systems with $\SU(d)$ symmetries, which appear in different areas of physics. Specifically, $\SU(3)$ symmetry plays a central role in nuclear and high-energy physics. We saw that in the presence of this symmetry, even for a simple system with 6 qutrits, the additional conservation laws put interesting and non-trivial constraints on the dynamics, which are not captured by the standard (Noether) conservation laws.

A natural area of applications for these results is the study of quantum chaos and thermalization of systems with conserved charges \cite{maldacena2016bound, piroli2020random, khemani2018operator, rakovszky2018diffusive}. Specifically, it is interesting to understand how the presence of the additional conservation laws found in this paper, may slow down thermalization (See \cite{halpern2020noncommuting} for a related discussion).  Another potential area of applications is in the context of circuit complexity and its conjectured role in holography and quantum gravity \cite{susskind2016computational}. Circuit complexity is the minimum number of elementary local gates that are needed to implement a unitary transformation or to prepare a state from a fixed initial state. The results of \cite{marvian2022restrictions} and the current paper imply that if the local gates respect a symmetry, the set of realizable unitaries depend non-trivially on the locality of these elementary gates. A natural follow-up question is to study how the complexity of realizable symmetric unitaries changes with the locality of symmetric local gates, and how this notion of complexity compares with the standard notion of complexity when local gates are not restricted to be symmetric (see \cite{marvian2022restrictions} for further discussion). 

Symmetric quantum circuits have also various applications in the context of quantum computing. A desirable feature of such circuits is their resilience against noise. In particular, the symmetry guarantees that random collective rotations of qudits do not affect the performance of the quantum computer. While, in the case of qubits, universal quantum computing with exchange-only interactions are well understood, this problem is not much studied in the case of qudits and our work provides a new insight into this problem. We discuss more about this in \cite{HLM_2021} (we also note that very recently a related independent work on this topic has appeared on arXiv \cite{van2021universality}).

This work highlighted some peculiar features of qudit circuits, which do not appear in the case of qubits. While qudits can provide additional information-processing power, in general, quantum computation and communication with qudits is not as well-understood as qubits (see \cite{bartlett2002quantum,  wang2003entangling,keet2010quantum, bombin2005entanglement} and \cite{wang2020qudits} for a recent review on this topic). Given the recent experimental developments in the control of qudit systems (see, e.g., \cite{blok2021quantum}), it seems crucial to further explore these uncharted territories.

\color{black}

\begin{acknowledgments}
  IM and AH are supported by  ARL-ARO QCISS grant number 313-1049,  NSF Phy-2046195, and NSF QLCI grant OMA-2120757.  HL is supported by the U.S. Department of Energy, Office of Science, Nuclear Physics program under Award Number DE-FG02-05ER41368.
\end{acknowledgments}

\bibliography{Ref_2021_v3}

\onecolumngrid
\newpage
\appendix

\newpage

\section*{Appendices}

\newcommand\appitem[2]{\hyperref[{#1}]{\textbf{\cref*{#1} \nameref*{#1}}} \dotfill \pageref{#1}\\ \begin{minipage}[t]{0.8\textwidth} #2\end{minipage}}
\begin{itemize}
\item \appitem{app:Schur-Weyl}{We briefly review the Schur-Weyl duality for a qudit system, and the Young diagrams relevant for the 6 qutrit examples in \cref{fig:4-local-fskew,fig:l-shape}.}
\item \appitem{App:f}{In \cref{App:Kf} we collect some interesting properties of the projector $K$ and the function $f_{\sgn}$, and present the proofs. In \cref{app:genZ2}, we generalize the $\mathbb{Z}_2$ symmetry in \cref{lem13}, which is formalized in \cref{thm:Z2}, and prove this theorem. }
\item \appitem{App:Fermi}{We collect proofs for the statements in \cref{Sec:ferm}.}
\item \appitem{app:relative-phases}{
In the case of the group generated by 2-local $\SU(d)$-invariant unitaries $\mathcal{V}_{2}$, a simple characterization of the relative phases between different charge sectors in terms of the eigenvalues of Casimir operator is presented.}
\end{itemize}

\newpage

\section{Schur-Weyl duality}\label{app:Schur-Weyl}

\subsection{Decomposing systems of qudits to irreps}\label{app:duality}

Recall the decomposition of $n$ qudits into charge sectors,
\begin{equation}\tag{re \ref{decomp:SU}}
  (\complex^d)^{\otimes n} \cong \bigoplus_\lambda \hilbert_\lambda = \bigoplus_\lambda \hilbert[Q]_\lambda \otimes \hilbert[M]_\lambda\ ,
\end{equation}
where $\lambda$ labels inequivalent irreps of $\SU(d)$, $\mathcal{H}_\lambda$ is the sector corresponding to irrep $\lambda$, also known as the isotypic component of $\lambda$, $\hilbert[Q]_\lambda$ is the irrep of $\SU(d)$ corresponding to $\lambda$, and $\hilbert[M]_\lambda$ is the (virtual) multiplicity subsystem \cite{QRF_BRS_07, zanardi2001virtual}, where $\SU(d)$ acts trivially.

Schur's Lemmas imply  that any unitary $V$ which commutes with $U^{\otimes n}$ for all $U \in \SU(d)$ has a decomposition $V \cong \bigoplus \identity_{\hilbert[Q]_\lambda} \otimes v_\lambda$ where each $v_\lambda$ is unitary. In particular, the representation of the permutation group defined in \cref{eq:permrep} clearly commutes with this representation of $\SU(d)$, i.e., for a general permutation $\sigma \in \mathbb{S}_n$ and qudit rotation $U \in \SU(d)$, we have $[\P(\sigma), U^{\otimes n}] = 0$. This immediately defines representations $\mathbf{p}_\lambda$ on the multiplicity spaces $\mathcal{M}_\lambda$ via 
\be
\P(\sigma) \cong \bigoplus_\lambda \identity_{\hilbert[Q]_\lambda} \otimes \mathbf{p}_\lambda(\sigma)\ .
\ee
In fact, the relationships between these two representations is much stronger: the commutant of $\cset{U^{\otimes n}}{U \in \SU(d)}$ (that is, the set of all operators which commute with all $U^{\otimes n}$) is \emph{spanned} by $\cset{\P(\sigma)}{\sigma \in \mathbb{S}_n}$, and vice-versa. This has the consequence that, when decomposing the Hilbert space $(\complex^d)^{\otimes n}$ into irreps $\lambda$ of $\SU(d)$, the multiplicity subsystem is actually an irrep of $\mathbb{S}_n$. That is, the representations $\mathbf{p}_\lambda$ are actually irreducible. Furthermore, for inequivalent irreps $\lambda$ and $\mu$ of $\SU(d)$, the corresponding irreps of $\mathbb{S}_n$ are also inequivalent. This relationship between the irreps of $\SU(d)$ and $\mathbb{S}_n$ in the decomposition of $(\complex^d)^{\otimes n}$ is known as Schur-Weyl duality.

It turns out that there is a natural labeling of the irreps given by the Young diagrams which have $n$ boxes and $\leq d$ rows \cite{harrow2005applications,goodman2009symmetry}, which we use in \cref{app:Young-diagrams} to describe the 6 qutrit examples.

\subsection{6 qutrit examples}\label{app:Young-diagrams}

In this section, we consider the 6 qutrit examples of \cref{fig:4-local-fskew,fig:l-shape}. We use Schur-Weyl duality to describe the irreps in which they appear. In the specific case of $d = 3$ and $n = 6$, the Young diagrams that show up in the decomposition are
\begin{equation}\label{ydiagrams}
  \ydiagram{6}\, , \; \; \ydiagram{5, 1}\, , \; \; \ydiagram{4, 2}\, , \; \; \ydiagram{4, 1, 1}\, , \; \; \ydiagram{3, 3}\, , \; \; \ydiagram{3, 2, 1}\, , \; \; \ydiagram{2, 2, 2}\, .
\end{equation}

The first, second, and fourth diagrams are the `L-shape' diagrams, and the fermionic correspondence is defined inside of these subspaces. In particular, the state $(\nket{0} \wedge \nket{1} \wedge \nket{2}) \otimes \nket{0}^{\otimes 3}$ from \cref{sect:examp,fig:l-shape} is an unentangled state inside of $\hilbert[Q]_\mu \otimes \hilbert[M]_\mu$, where
\begin{equation*}
  \mu = {\:} \vcenter{\hbox{\ydiagram{4, 1, 1}}}\, .
\end{equation*}

The state $(\nket{0} \wedge \nket{1} \wedge \nket{2}) \otimes (\nket{0} \wedge \nket{1}) \otimes \nket{0}$ from \cref{sec:exZ2,fig:4-local-fskew} is an unentangled state inside $\hilbert[Q]_\nu \otimes \hilbert[M]_\nu$, where
\begin{equation*}
  \nu = {\:} \vcenter{\hbox{\ydiagram{3, 2, 1}}}\, .
\end{equation*}
As we show in \cite{HLM_2021}, the diagrams that are symmetric under reflection across the diagonal (such as this one) have a special relationship with the $\mathbb{Z}_2$ symmetry. Namely, the states which are restricted to a single charge sector and also have nonzero $f_{\sgn}$ are necessarily inside a sector with a symmetric Young diagram (see also \cref{SubSec:K} for a related discussion\footnote{The symmetric diagrams are the ones whose permutation irrep $\mathbf{p}_\lambda$ satisfies $\mathbf{p}_\lambda \cong {\sgn} \otimes \mathbf{p}_\lambda$ (see Exercise 4.4 of \cite{fulton2013representation}).}).

\newpage

\newpage

\section{More on \texorpdfstring{$\mathbb{Z}_2$}{Z2} symmetry and properties of function \texorpdfstring{$f_{\sgn}$}{fsgn}}\label{App:f}

\subsection{Properties of projector \texorpdfstring{$K$}{K} and function \texorpdfstring{$f_{\sgn}$}{fsgn}}\label{App:Kf}

Here, we further study the properties of the operator $K$ and function $f_{\sgn}$. Recall that the operator $K$, defined by 
\begin{equation*}\tag{re \ref{Def:K}}
  K = \frac{1}{n!} \sum_{\sigma \in \mathbb{S}_n} \sgn(\sigma) \P(\sigma) \otimes \P(\sigma) 
\end{equation*}
is a Hermitian projector to the sign representation of the permutation group. The fact that it is a Hermitian projector can be easily seen:
\begin{equation}
  K^2 = \frac{1}{(n!)^2} \sum_{\sigma, \tau} \sgn(\sigma \tau) \P(\sigma \tau) \otimes \P(\sigma \tau) = \frac{1}{(n!)^2} \sum_{\sigma, \tau} \sgn(\sigma) \P(\sigma) \otimes \P(\sigma) = K,
\end{equation}
where we resummed in the second equality, and
\begin{equation}
  K^\dagger = \frac{1}{n!} \sum_\sigma \sgn(\sigma) \P(\sigma)^\dagger \otimes \P(\sigma)^\dagger = \frac{1}{n!} \sum_\sigma \sgn(\sigma) \P(\sigma^{-1}) \otimes \P(\sigma^{-1}) = \frac{1}{n!} \sum_\sigma \sgn(\sigma) \P(\sigma) \otimes \P(\sigma) = K\ ,
\end{equation}
where in the third equality we resummed and used the fact that the parity of a permutation and its inverse are the same, i.e., $\sgn(\sigma) = \sgn(\sigma^{-1})$. For any permutation $\sigma\in\mathbb{S}_n$ acting on this projector, we have 
\begin{equation}
  \bigl(\P(\sigma) \otimes \P(\sigma)\bigr) K = \frac{1}{n!} \sum_\tau \sgn(\tau) \P(\sigma \tau) \otimes \P(\sigma \tau) = \frac{\sgn(\sigma)}{n!} \sum_\tau \sgn(\tau) \P(\tau) \otimes \P(\tau) = \sgn(\sigma) K,
\end{equation}
by resumming in the second equality and again using $\sgn(\sigma) = \sgn(\sigma^{-1})$. This means $K$ is the projector to the sign representation of $\mathbb{S}_n$ on $(\mathbb{C}^d \otimes \mathbb{C}^d)^{\otimes n}$ and therefore commutes with the action of $\mathbb{S}_n$, i.e.
\be\label{K-inv-1}
\forall \tau\in \mathbb{S}_n: \ \ \big[\P(\tau)\otimes \P(\tau)\big] K \big[\P(\tau)^\dag\otimes \P(\tau)^\dag\big]= K\ .
\ee
Furthermore, note that since $(\mathbb{C}^d \otimes \mathbb{C}^d)^{\otimes n}$ does not have a totally antisymmetric subspace for $n > d^2$, it follows that $K = 0$ for such values of $n$.

On the other hand, in the case of $n = d^2$, there is a 1D totally anti-symmetric subspace, which means 
\be\label{eq:K-eta}
K = \qproj{\eta}\ ,
\ee
where $\nket{\eta}\in(\mathbb{C}^d\otimes \mathbb{C}^d)^{\otimes n}$ is the unique (up to a global phase) totally antisymmetric state. This means that for any operator $T\in \End(\complex^d \otimes \complex^d)$ over a pair of qudits, it satisfies
\begin{equation*}
  T^{\otimes n} \nket{\eta} = \det (T) \nket{\eta}\ .
\end{equation*}
Considering the swap operator $\mathds{X}$ on a pair of qudits, this implies
\be\label{eq:XK}
\mathds{X}^{\otimes n}K=K \mathds{X}^{\otimes n}=\det (\mathds{X}) K=(-1)^{ \frac{d (d - 1)}{2}}\ K\ ,
\ee
where $\frac{d (d - 1)}{2} = \binom{d}{2}$ is the dimension of the antisymmetric subspace of $\complex^d \otimes \complex^d$. 

Furthermore, using the fact that $[U^{\otimes n}, \mathbf{P}(\sigma)]=0$ for all $U\in\SU(d)$ and all $\sigma\in\mathbb{S}_n$, we find that projector $K$ also satisfies 
\be\label{K-inv3}
\forall U\in \SU(d): \ \ \big[U^{\otimes n}\otimes \mathbb{I}^{\otimes n}\big] K \big[U^{\otimes n}\otimes \mathbb{I}^{\otimes n}\big]^\dag= K\ ,
\ee
and
\be\label{K-inv-2}
\forall U\in \SU(d): \ \ \big[U^{\otimes n}\otimes U^{\otimes n}\big] K \big[U^{\otimes n}\otimes U^{\otimes n}\big]^\dag= K\ .
\ee
Since in the case of $n=d^2$ operator $K=|\eta\rangle\langle\eta|$, (using \cref{K-inv3}) we find
\be
[U^{\otimes n}\otimes \mathbb{I}^{\otimes n}] |\eta\rangle=\e^{i\phi} |\eta\rangle\ ,
\ee
for some phase $\e^{i\phi}$. This means that
\be
\nket{\eta}\in\mathcal{H}_\mathrm{singlet}\otimes \mathcal{H}_\mathrm{singlet}\ ,
\ee
where $\mathcal{H}_\mathrm{singlet}$ is the subspace of $(\mathbb{C}^{d})^{\otimes n}=(\mathbb{C}^{d})^{\otimes d^2}$ formed from states that are invariant under the action of $U^{\otimes n}$ for all $U\in \SU(d)$. In general, this subspace has dimension larger than 1 and the state $|\eta\rangle$ is entangled relative to the bipartite tensor product decomposition $(\mathbb{C}^d)^{\otimes n} \otimes (\mathbb{C}^d)^{\otimes n}$. In \cref{Sec:examp:4} we discuss an example of $n=4$ qubits with $\SU(2)$ symmetry and determine the vector $|\eta\rangle$. \\

Next, we present some useful properties of function $f_{\sgn}$, which all follow from the above observations.

\begin{proposition}
  Recalling the definition of $f_{\sgn}$,
\begin{equation*}\tag{re \ref{def:fskew}}
  f_{\sgn}[\psi] = \bigl(\langle \psi | \otimes \langle \psi | \bigr) K \bigl(| \psi \rangle \otimes | \psi \rangle \bigr) = \frac{1}{n!} \sum_\sigma \sgn(\sigma) \langle \psi | \P(\sigma) | \psi \rangle^2,
\end{equation*}
we have
\begin{enumerate}[(i)]
\item $f_{\sgn}$ is invariant under the action of $\SU(d)$ and permutations.
\item $f_{\sgn}[\psi] = 0$ for all states if $n > d^2$.
\item In the special case of $n = d^2$, $f_{\sgn}$ is a non-zero function if, and only if, $d (d - 1) / 2$ is an even number.
\item If $|\psi\rangle$ is invariant up to a phase under a swap $\P_{ab}$, i.e. $\P_{ab} |\psi\rangle = \pm |\psi\rangle$, then $f_{\sgn}[\psi] = 0$.
\end{enumerate}
\end{proposition}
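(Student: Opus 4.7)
The four properties are all encoded in the transformation properties of the projector $K$ that have already been worked out in the text. My plan is to translate each invariance (or non-invariance) of $K$ into the corresponding statement about $f_{\sgn}[\psi] = (\langle\psi|\otimes\langle\psi|)\,K\,(|\psi\rangle\otimes|\psi\rangle)$ by sandwiching $K$ between $|\psi\rangle\otimes|\psi\rangle$ and its dual.

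For (i), I would simply apply the identities $[U^{\otimes n}\otimes U^{\otimes n}]\,K\,[U^{\otimes n}\otimes U^{\otimes n}]^\dag = K$ (\cref{K-inv-2}) and $[\P(\tau)\otimes\P(\tau)]\,K\,[\P(\tau)\otimes\P(\tau)]^\dag = K$ (\cref{K-inv-1}), which immediately give $f_{\sgn}[U^{\otimes n}\psi]=f_{\sgn}[\P(\tau)\psi]=f_{\sgn}[\psi]$. For (ii), I would invoke the already-noted fact that $(\mathbb{C}^d\otimes\mathbb{C}^d)^{\otimes n}$ has no totally antisymmetric subspace when $n>d^2$, so $K=0$ and $f_{\sgn}$ vanishes identically.

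For (iv), the key observation is that $(\P_{ab}\otimes\P_{ab})(|\psi\rangle\otimes|\psi\rangle)= (\pm 1)^2(|\psi\rangle\otimes|\psi\rangle)=|\psi\rangle\otimes|\psi\rangle$ whenever $\P_{ab}|\psi\rangle=\pm|\psi\rangle$, while the defining property \cref{Def:K1} gives $K(\P_{ab}\otimes\P_{ab})=\sgn(\sigma_{ab})K=-K$. Sandwiching yields $f_{\sgn}[\psi]=-f_{\sgn}[\psi]$, hence $f_{\sgn}[\psi]=0$.

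The main work is (iii), in the case $n=d^2$. Here $K=|\eta\rangle\langle\eta|$, where $|\eta\rangle$ is the (up-to-phase unique) totally antisymmetric state on $n$ pairs of qudits. My plan is to reinterpret $|\eta\rangle\in(\mathbb{C}^d\otimes\mathbb{C}^d)^{\otimes n}$ as a vector in $(\mathbb{C}^d)^{\otimes n}\otimes(\mathbb{C}^d)^{\otimes n}$ by regrouping tensor factors; under this regrouping the operator $\mathds{X}^{\otimes n}$, which acts on each pair, becomes the swap of the two full copies of $(\mathbb{C}^d)^{\otimes n}$. Identity \cref{eq:XK} then reads $\mathds{X}^{\otimes n}|\eta\rangle = (-1)^{d(d-1)/2}|\eta\rangle$, so $|\eta\rangle$ lies in the symmetric (resp.\ antisymmetric) subspace of $(\mathbb{C}^d)^{\otimes n}\otimes(\mathbb{C}^d)^{\otimes n}$ depending on the parity of $d(d-1)/2$. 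Since every vector $|\psi\rangle\otimes|\psi\rangle$ sits in the symmetric subspace, when $d(d-1)/2$ is odd the overlap $\langle\eta|\psi\psi\rangle$ is zero for all $|\psi\rangle$, giving $f_{\sgn}\equiv 0$. Conversely, when $d(d-1)/2$ is even, $|\eta\rangle$ lies in the symmetric subspace, which is spanned by the set $\{|\psi\rangle\otimes|\psi\rangle:|\psi\rangle\in(\mathbb{C}^d)^{\otimes n}\}$; hence at least one such product vector has non-vanishing overlap with $|\eta\rangle$, so $f_{\sgn}$ is not identically zero. The only mildly subtle step is this last spanning argument, which I would justify by citing the standard fact that the symmetric subspace of $\mathcal{H}\otimes\mathcal{H}$ equals $\operatorname{span}\{|\psi\rangle\otimes|\psi\rangle\}$ (polarization of symmetric tensors).
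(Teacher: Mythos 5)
Your proposal is correct and follows essentially the same route as the paper: (i) and (ii) from the invariance and vanishing of $K$, (iii) via $K=\qproj{\eta}$ together with the sign $(-1)^{d(d-1)/2}$ under $\mathds{X}^{\otimes n}$ and the polarization/spanning property of the symmetric subspace, and (iv) by extracting a factor of $\sgn(\sigma_{ab})=-1$ to get $f_{\sgn}[\psi]=-f_{\sgn}[\psi]$. The only cosmetic difference is that the paper proves (iv) for an arbitrary odd permutation $\tau$ with $\P(\tau)\nket{\psi}=\pm\nket{\psi}$ by resumming the defining sum, whereas you sandwich $K(\P_{ab}\otimes\P_{ab})=-K$ directly; both are the same idea.
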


\begin{proof}
  Item (i) follows immediately from the invariance of $K$ under the actions of $\mathbb{S}_n$ and $\SU(d)$ in \cref{K-inv-1}) and \cref{K-inv-2}. Similarly, item (ii) follows from the fact that $K = 0$ for $n > d^2$, (see the discussion immediately before this proposition). In the case of $n=d^2$, \cref{eq:K-eta} implies 
\begin{equation}
  f_{\sgn}[\psi] = |\p[\big]{\nbra{\psi} \otimes \nbra{\psi}} \nket{\eta}|^2 \ ,
\end{equation}
and \cref{eq:XK} together with $\mathds{X}^{\otimes n} \p[\big]{\nket{\psi} \otimes \nket{\psi}} = \nket{\psi} \otimes \nket{\psi}$ implies
\begin{equation}
  f_{\sgn}[\psi] = (-1)^{\frac{d (d - 1)}{2}} f_{\sgn}[\psi] \ .
\end{equation}
Therefore, $f_{\sgn}[\psi] = -f_{\sgn}[\psi] = 0$ for all $\nket{\psi} \in (\complex^d)^{\otimes n}$ when $d (d - 1) / 2$ is odd. On the other hand, when $d (d - 1) / 2$ is even, it is clear that $\nket{\eta}$ is symmetric with respect to the bipartite decomposition $(\complex^d)^{\otimes n} \otimes (\complex^d)^{\otimes n}$. Since the symmetric subspace is spanned by vectors of the form $\nket{\psi} \otimes \nket{\psi}$ with $\nket{\psi} \in (\complex^d)^{\otimes n}$, there necessarily exists at least one state $\nket{\psi}$ so that
\begin{equation}
  \p[\big]{\nbra{\psi} \otimes \nbra{\psi}} \nket{\eta} \neq 0.
\end{equation}
This means $f_{\sgn}[\psi]=|\p[\big]{\nbra{\psi} \otimes \nbra{\psi}} \nket{\eta}|^2>0$, which proves (iii).

For (iv), we prove something slightly more general. Suppose that $\tau \in \mathbb{S}_n$ is an odd permutation, $\sgn(\tau) = -1$, and $|\psi\rangle$ is such that $\P(\tau) |\psi\rangle = \pm |\psi \rangle$. Then, because for any permutation $\pi$ it is true that $\sgn(\pi) = \sgn(\pi^{-1})$,
\begin{equation}
  f_{\sgn}[\psi] = \frac{1}{n!} \sum_\sigma \sgn(\sigma) \langle \psi | \P(\sigma) \P(\tau) | \psi \rangle^2 = \frac{1}{n!} \sum_\sigma \sgn(\sigma \tau^{-1}) \langle \psi | \P(\sigma) | \psi \rangle^2 = -f_{\sgn}[\psi],
\end{equation}
and thus $f_{\sgn}[\psi] = 0$.
\end{proof}

\subsection{The case of \texorpdfstring{$\SU(2)$}{SU(2)} symmetry for \texorpdfstring{$n=3, 4$}{n=3,4} qubits (the proof of the second part of Corollary \ref{tyt})}\label{Sec:examp:4}

As we saw before, for $n>d^2$ qudits operator $K$ always vanishes. Therefore, since $K(H\otimes \mathbb{I}+ \mathbb{I}\otimes H)=0$ for any $SU(d)$-invariant Hamiltonian $H$ of such systems, by \cref{lem13} we conclude that $H$ has a decomposition as $H=\sum_\sigma h_\sigma \P(\sigma)$ satisfying $h_\sigma=-\sgn(\sigma) h^\ast_\sigma$ for all $\sigma\in\mathbb{S}_n$. 

In particular, in the case of qubits, this means that this property holds for $n>4$ qubits. Interestingly, while for $n=3, 4$ qubits $K\neq 0$, as we show in the following, any Hamiltonian $H$ can be shifted by a multiple of the identity operator to $H'=H+\alpha \mathbb{I}$, such that $H'$ satisfies the condition of \cref{lem13}, that is $K(H'\otimes \mathbb{I}+ \mathbb{I}\otimes H')=0$. 
Using \cref{lem13,Thm5}, this means that function $f_{\sgn}$ is always conserved under $\SU(2)$-invariant Hamiltonians, even though it is generally non-zero for $n=3$ and $4$ qubits.

To prove the above claim for $n=3$ and $4$ qubits, first recall that $K$ is the projector to the sign representation of the permutation group $\mathbb{S}_n$ in $(\complex^2)^{\otimes n} \otimes (\complex^2)^{\otimes n}$, where permutations act as $\P(\sigma) \otimes \P(\sigma)$ for $\P(\sigma)$ a permutation of $n$ qubits. Recall that
 under the action of $\mathbb{S}_n$ the Hilbert space of $n$ qubits decomposes as
\begin{equation}
  (\complex^2)^{\otimes n} \cong \bigoplus_{j=j_{\min}}^{j_{\max}} \mathcal{H}_j \cong \bigoplus_{j=j_{\min}}^{j_{\max}} \mathbb{C}^{2j+1}\otimes \mathcal{M}_j\ ,
\end{equation}
where $j_\text{max}=n/2$ and  $j_\text{min}=0$, or $j_\text{min}=1/2$ for even and odd $n$, respectively. Here, ${2j+1}$ is the dimension of the irrep of SU(2) with angular momentum $j$, corresponding to the eigenvalue $j(j+1)$  of the squared angular momentum operator $J^2=J_x^2+J_y^2+J_z^2$, and  $\mathcal{M}_j$ is the subsystem corresponding to the multiplicity of this irrep.  According to the Schur-Weyl duality,  $\mathbb{S}_n$ acts irreducibly on $\mathcal{M}_j$. That is the representation of permutation $\sigma\in \mathbb{S}_n$ can be decomposed as
\be
\P(\sigma) \cong \bigoplus_j \ \identity_{2j+1} \otimes \mathbf{p}_j(\sigma)\ ,
\ee
where $\mathbf{p}_j$ is an irrep of  $\mathbb{S}_n$. We see that, by distributing the tensor product through the direct sum,
\begin{equation}
  (\complex^2)^{\otimes n} \otimes (\complex^2)^{\otimes n} \cong \bigoplus_{j, j' = j_{\min}}^{j_{\max}} \p[\big]{\complex^{2 j + 1} \otimes \complex^{2 j' + 1}} \otimes \mathcal{M}_j \otimes \mathcal{M}_{j'}.
\end{equation}
Thus, on $(\complex^2)^{\otimes n} \otimes (\complex^2)^{\otimes n}$, the representation of $\mathbb{S}_n$ has a decomposition into (potentially reducible) representations $\pb_j \otimes \pb_{j'}$. Therefore, to characterize operator $K$, we need to determine for which values of $j$ and $j'$, the tensor product representation $\pb_j \otimes \pb_{j'}$ contains the sign representation of $\mathbb{S}_n$.

First, consider the case of $n=4$. Since in this case the condition $n=d^2$ holds, we have $K=|\eta\rangle\langle\eta |$, where 
\be
|\eta\rangle\in \mathcal{H}_{j=0}\otimes \mathcal{H}_{j=0}= (\mathbb{C}\otimes \mathbb{C}^2)\otimes(\mathbb{C}\otimes \mathbb{C}^2) \ ,
\ee
where we have used the fact that the multiplicity of $j=0$, is $2$. Using the fact that $w=d(d-1)/2=1$ is odd, we find that under swap $\mathds{X}^{\otimes 2}$, $|\eta\rangle$ should obtain a minus sign, i.e., it should be in the totally anti-symmetric subspace of $ \mathcal{H}_{j=0}\otimes \mathcal{H}_{j=0}$. We conclude that in this case 
\begin{equation}
  K \cong \identity_1 \otimes \identity_1 \otimes \qproj{\Psi_-} \ ,
\end{equation}
where $\nket{\Psi_-}$ is the singlet state which takes the form (up to an overall phase)
\begin{equation}
  \nket{\Psi_-} = \frac{1}{\sqrt{2}} \p[\big]{\nket{0} \otimes \nket{1} - \nket{1} \otimes \nket{0}}\ , 
\end{equation}
in any orthonormal basis $\sets{\nket{0}, \nket{1}}$ for $\mathcal{M}_0$.

Next, consider the case of $n=3$. In this case $j$ takes values $1/2$ and $3/2$. The case of $j=3/2$ has multiplicity $1$, which corresponds to the totally symmetric subspace, i.e., the trivial representation of $\mathbb{S}_3$. The case of $j=1/2$ has multiplicity $2$, which corresponds to the only non-Abelian irrep of $\mathbb{S}_3$. Therefore, the only case, where $\pb_j \otimes \pb_{j'}$ can contain the sign representation of $\mathbb{S}_n$ is when $j=j'=1/2$, where $\mathcal{M}_{1/2} \otimes \mathcal{M}_{1/2}=\mathbb{C}^{ 2}\otimes \mathbb{C}^{2}$. It can be easily seen that in this case under the action of unitaries $\{\pb_{1/2}(\sigma) \otimes \pb_{1/2}(\sigma):\sigma\in\mathbb{S}_3\}$ the space $\mathcal{M}_{1/2} \otimes \mathcal{M}_{1/2}=\mathbb{C}^{ 2}\otimes \mathbb{C}^{2}$ decomposes to 3 irreps of $\mathbb{S}_3$. In particular, the subspace corresponding to state $\nket{\Psi_-} = \frac{1}{\sqrt{2}} \p[\big]{\nket{0} \otimes \nket{1} - \nket{1} \otimes \nket{0}}\ $ corresponds to the sign representation, that is
\be
\pb_{1/2}(\sigma) \otimes \pb_{1/2}(\sigma)\nket{\Psi_-}=\det(\pb_{1/2}(\sigma))\nket{\Psi_-}={\sgn}(\sigma)\nket{\Psi_-}\ .
\ee
The latter fact can be seen, for instance, by noting that in the 2D irrep of $\mathbb{S}_3$, transpositions are represented by unitary matrices with eigenvalues $+1$ and $-1$, i.e., with matrices with determinant -1. This means $\det(\pb_{1/2}(\sigma))={\sgn}(\sigma)$. We conclude that in the case of $n=3$, $K \cong \identity_{2} \otimes \identity_2 \otimes \qproj{\Psi_-}$. To summarize, we conclude that 
\begin{equation}\label{twocasesK}
  K \cong \begin{cases} \identity_{2} \otimes \identity_2 \otimes \qproj{\Psi_-} & \text{for $n = 3$} \\ \identity_1 \otimes \identity_1 \otimes \qproj{\Psi_-} & \text{for $n = 4$\ .} \end{cases}
\end{equation}
This result implies that in both cases, any $\SU(2)$-invariant Hermitian operator $A$, satisfies $(A \otimes \identity + \identity \otimes A) K = \alpha K\ $ for a number $\alpha$. To see this note that any $\SU(2)$-invariant operator $A$ acts non-trivially only in the multiplicity subsystems $\mathcal{M}_j$, i.e.,
\begin{equation}
  A \cong \bigoplus_j \identity_{2 j + 1} \otimes \mathsf{a}_j.
\end{equation}
Recall that for any qubit operator $B\in\End(\mathbb{C}^2)$, it holds that $(B \otimes \identity + \identity \otimes B) \nket{\Psi_-} = \Tr (B) \nket{\Psi_-}$. This together with \cref{twocasesK} implies 
\begin{equation}
  (A \otimes \identity + \identity \otimes A) K = \begin{cases} \Tr (\mathsf{a}_{1/2}) K\equiv \alpha_3 K & \text{for $n = 3$} \\
    \Tr (\mathsf{a}_0) K\equiv \alpha_4 K & \text{for $n = 4$.}
  \end{cases}
\end{equation}
This implies that $A'=A-\mathbb{I} \alpha_n/2 $ satisfies $(A' \otimes \identity + \identity \otimes A') K=0$. Finally, applying \cref{thm:Z2} we conclude that any $\SU(2)$-invariant Hermitian operator on $n=3, 4$ qubits, up to shift by a multiple of the identity operator, can be written as $\sum_\sigma h_\sigma \P(\sigma)$ satisfying the condition $h_\sigma=-\sgn(\sigma) h^\ast_\sigma$ for all $\sigma\in\mathbb{S}_n$.

\color{black}

\subsection{Generalization of \texorpdfstring{$\mathbb{Z}_2$}{Z2} symmetry and the proof of Lemma \ref{lem13} in the paper}\label{app:genZ2}

Let $\e^{\i \theta}$ be an arbitrary one-dimensional representation of a finite group $G$. Consider a (finite-dimension) unitary representation $R : G \to \U(\hilbert)$ of $G$, where $\U(\hilbert)$ is the group of unitary operators over the Hilbert space $\hilbert$ (in this paper we are mainly focused on the example of symmetric group $ \mathbb{S}_n$ and $\e^{\i \theta}$ is the sign representation). Consider an operator $A$ in the linear span of $\{R(g): g\in G\}$ with the following property: $A$ has a decomposition as $A=\sum_{g\in G} a(g) R(g)$, where $a(g)$ satisfies 
\be\label{equ87}
\forall g\in G:\ \ a(g) = -\e^{\i \theta(g)} a\p[]{g^{-1}} \ .
\ee
For a general group $G$, the map $a(g) \mapsto -\e^{\i \theta(g)} a\p[]{g^{-1}}$ defines a linear transformation on the space of functions over the group $G$, and applying this map twice we obtain the identity map. Therefore, this map generates a $\mathbb{Z}_2$ symmetry, which generalizes the $\mathbb{Z}_2$ symmetry discussed before in the case of $\mathbb{S}_n$. 

As we show in the following, the subset of operators $A\in\mathrm{span}_\mathbb{C}\{R(g):g\in G\}$ that has a decomposition as $A=\sum_g a(g) R(g)$ satisfying \cref{equ87}, forms a Lie algebra. In the special case where $R$ is the regular representation of the group this Lie algebra is previously characterized by Marin \cite{marin2010group}. Note that in this special case there is a one-to-one relation between the operator $A=\sum_g a(g) R(g)$ and function $a$. On the other hand, we are interested in the general case, where this one-to-one relation does not exist. That is, a given operator $A$ in $A\in\mathrm{span}_\mathbb{C}\{R(g):g\in G\}$ may have a decomposition $A=\sum_g a(g) R(g)$ that satisfies this property and other decompositions that do not.

\cref{thm:Z2} below, which is a generalization of \cref{lem13}, provides a simple criterion for testing whether a given operator has such decompositions or not. This criterion is given in terms of a generalization of operator $K$, defined in \cref{eq:genK}. In fact, the theorem applies to arbitrary compact topological groups, e.g., any compact Lie group.

For a one-dimensional representation $\e^{\i \theta}$ of $G$, define
\be\label{eq:genK}
K = \frac{1}{|G|} \sum_{g \in G} \e^{\i \theta(g)} R(g) \otimes R(g)\ ,
\ee
where $|G|$ is the order of group. If $G$ is an arbitrary compact group, then instead we have
\begin{equation}\label{eq:Khaar}
  K = \int \mathrm{d} g \, \e^{\i \theta(g)} R(g) \otimes R(g)
\end{equation}
where $\mathrm{d} g$ is the normalized left-, right-, and inversion-invariant Haar measure. More generally, in the following, any sum over group elements can be replaced by integration over the (suitably normalized) Haar measure. It turns out that many properties of operator $K$ discussed in \cref{App:Kf} for the special case of $\mathbb{S}_n$ can be generalized to arbitrary finite or compact Lie group $G$.

First, it can be easily seen that $K$ is a Hermitian projector. Second,
\begin{equation}\label{eq: 1d rep}
  \forall g\in G:\ \ \ K \p[\big]{R(g) \otimes R(g)} =\p[\big]{R(g) \otimes R(g)}K = \e^{-\i \theta(g)} K\ ,
\end{equation}
by resumming. In other words, $K$ is projector to the subspace of $\mathcal{H}\otimes \mathcal{H}$ that transforms as $\e^{i\theta}$ under representation $R\otimes R$.

This, in particular, implies 
\begin{equation}
  \forall g\in G:\ \ \ \p[\big]{R(g) \otimes R(g)} K \p[\big]{R(g) \otimes R(g)}^\dag = K\ .
\end{equation}
In \cref{SubSec:K}, we discuss more about properties of operator $K$. 

In the following, we prove the following generalization of \cref{lem13} in this more general context:

\begin{theorem}\label{thm:Z2}
  Let $R$ be a finite-dimensional representation of a compact group $G$. Consider an operator $A$ in the linear span of $\{R(g):g\in G\}$. Then 
\begin{equation}\label{eq:z2cond}
  K \p{A \otimes \identity + \identity \otimes A} =\p{A \otimes \identity + \identity \otimes A}K= 0\ ,
\end{equation}
if and only if there exists a function $a(g)$ satisfying
\begin{equation}\label{eq: genz2}
  \forall g\in G:\ a(g) = -\e^{\i \theta(g)} a\p[]{g^{-1}}\ ,
\end{equation}
and $A=\sum_{g\in G} a(g) R(g)$ in the case of finite groups and $A = \int \mathrm{d} g \, a(g) R(g)$ in the case of more general compact topological groups.
\end{theorem}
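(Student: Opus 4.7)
The plan is to prove both directions of the equivalence, uniformly for finite groups and compact Lie groups (with sums replaced by Haar integrals where appropriate). The forward direction---that existence of a decomposition with $a(g)=-\e^{\i\theta(g)}a(g^{-1})$ implies both $K$-annihilation identities---is a verbatim generalization of the argument given for \cref{lem13} in the main text. Using only the intertwining relation $K(R(g)\otimes R(g))=\e^{-\i\theta(g)}K$ together with the substitution $g\to g^{-1}$, one rewrites $K(A\otimes\identity+\identity\otimes A)=\int\bigl[a(g)+\e^{\i\theta(g)}a(g^{-1})\bigr]\,K(\identity\otimes R(g))\,\d g$, and the integrand vanishes pointwise by the twisted-antisymmetry of $a$; the right-sided identity follows by the analogous computation or by taking adjoints.

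For the converse, I would pick any decomposition $A=\int a_0(g)R(g)\,\d g$ and split $a_0=a_-+a_+$ into its $\mp 1$-eigencomponents under the involution $(\iota f)(g)=\e^{\i\theta(g)}f(g^{-1})$; explicitly $a_\pm(g)=\tfrac12[a_0(g)\mp\e^{\i\theta(g)}a_0(g^{-1})]$, where $a_-$ is twisted-antisymmetric and $a_+$ is twisted-symmetric. Setting $A_\pm=\int a_\pm(g)R(g)\,\d g$ so that $A=A_-+A_+$, the forward direction applied to $a_-$ shows that $A_-$ already satisfies both $K$-annihilation identities, and hence so does $A_+=A-A_-$. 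The task then reduces to the key claim: \emph{any $B\in\mathrm{span}_\mathbb{C}\{R(g):g\in G\}$ which admits a twisted-symmetric decomposition and simultaneously satisfies $K(B\otimes\identity+\identity\otimes B)=(B\otimes\identity+\identity\otimes B)K=0$ must vanish.} Granted this, $A_+=0$ and $A=A_-$ has the desired twisted-antisymmetric decomposition $a_-$.

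To prove the claim I would Peter--Weyl-decompose $R\cong\bigoplus_\lambda r_\lambda\otimes\identity_{\hilbert[M]_\lambda}$, so that any $B$ in the linear span of the $R(g)$ becomes block-diagonal, $B=\bigoplus_\lambda\hat B_\lambda\otimes\identity_{\hilbert[M]_\lambda}$ with $\hat B_\lambda\in\mathrm{End}(V_\lambda)$. The image of $K$ is the $\e^{-\i\theta}$-isotypic subspace of $R\otimes R$, which Schur's lemma localizes on pairs $(\lambda,\mu)$ with $r_\mu\cong\e^{-\i\theta}\otimes r_\lambda^\ast$; on each such pair it is the rank-one projector onto the canonical intertwiner $|\xi_{\lambda\mu}\rangle\in V_\lambda\otimes V_\mu$ obtained from $\identity_{V_\lambda}$ via $\mathrm{End}(V_\lambda)\cong V_\lambda\otimes V_\lambda^\ast\hookrightarrow V_\lambda\otimes V_\mu$. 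A short matrix-element calculation from $(\hat B_\lambda\otimes\identity_{V_\mu}+\identity_{V_\lambda}\otimes\hat B_\mu)|\xi_{\lambda\mu}\rangle=0$ translates the $K$-condition into $\hat B_\lambda=-\hat B_\mu^T$, whereas substituting $g\to g^{-1}$ in $\hat B_\lambda=\int s(g)r_\lambda(g)\,\d g$ and using $\e^{-\i\theta(g)}\overline{r_\lambda(g)}=r_{\theta^{-1}\lambda^\ast}(g)$ shows that admitting a twisted-\emph{symmetric} decomposition forces $\hat B_\lambda=+\hat B_\mu^T$. The two opposite-signed relations together imply $\hat B_\lambda=0$ for every $\lambda$ appearing in $R$, so $B=0$. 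The main obstacle is the bookkeeping of identifications ($V_\mu\cong V_\lambda^\ast\otimes\e^{-\i\theta}$, transpose versus adjoint, and normalization of $|\xi_{\lambda\mu}\rangle$) so that the two Peter--Weyl constraints really differ by exactly one sign; a secondary subtlety is the singleton case $\mu=\lambda$, where the two constraints collapse to $\hat B_\lambda$ being simultaneously symmetric and antisymmetric, again giving $\hat B_\lambda=0$ within the same framework.
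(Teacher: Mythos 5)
Your forward direction is the paper's argument verbatim and is correct. The converse, however, contains a genuine gap: the key claim you reduce everything to is false. It fails precisely for irreps $\lambda$ that appear in $R$ but whose twisted dual $\lambda^\theta$ does not. For such $\lambda$ the corresponding block of $K$ is simply absent (in the paper's decomposition \cref{eq:K-ent} the sum runs only over $\Lambda_{\mathrm{sym}}=\{\lambda\in\Lambda:\lambda^\theta\in\Lambda\}$), so the $K$-condition places no constraint whatsoever on $\hat B_\lambda$; at the same time, the relation $\hat B_\lambda=+\hat B_{\lambda^\theta}^T$ you extract from a twisted-symmetric decomposition relates $\hat B_\lambda$ to the Fourier coefficient of the chosen function $s$ at an irrep that does not occur in $R$, i.e.\ to data that is invisible to the operator $B$ rather than to another block of $B$, so the two ``opposite-signed relations'' cannot be played against each other. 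Concretely, take $G=\mathbb{Z}_3$, $\theta$ trivial, and $R$ the one-dimensional irrep $g\mapsto\omega=\e^{2\pi\i/3}$. Then $K=\tfrac{1}{3}(1+\omega^{2}+\omega^{4})=0$, so every scalar $B$ satisfies the $K$-condition, and $B=1$ admits the twisted-symmetric decomposition $a(e)=1$, $a(g)=a(g^{2})=0$, yet $B\neq 0$. (It also admits a twisted-antisymmetric decomposition, consistent with the theorem but fatal to your claim.)

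The damage propagates to your reduction: in the splitting $A=A_-+A_+$ of an arbitrary decomposition $a_0$, the component $A_+$ need not vanish whenever some $\lambda\in\Lambda$ has $\lambda^\theta\notin\Lambda$ (in the example above one finds $A_-=A_+=\tfrac12A$), so you cannot conclude $A=A_-$. What is actually true is that such an $A_+$ itself admits a twisted-antisymmetric decomposition, because the missing dual block of the function can be assigned freely; proving this is exactly the content your argument skips. The paper's proof confronts it directly: after deriving $\mathsf{a}_\lambda^T=-J_\lambda^\dagger\mathsf{a}_{\lambda^\theta}J_\lambda$ from the $K$-condition for the paired irreps, it \emph{extends} the definition of $\mathsf{a}_\lambda$ to irreps outside $\Lambda$ (setting $\mathsf{a}_{\lambda^\theta}=-J_\lambda\mathsf{a}_\lambda^TJ_\lambda^\dagger$ when $\lambda\in\Lambda$ but $\lambda^\theta\notin\Lambda$, and both to zero when neither occurs) and then exhibits the function explicitly via the Fourier formula \cref{def-func}. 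Your Peter--Weyl bookkeeping for the paired case $\lambda,\lambda^\theta\in\Lambda$ is sound and agrees with the paper's; adopting the same extension device for the unpaired irreps would close the gap.
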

Note that the set of operators that satisfy the condition of this theorem forms a Lie algebra. In particular if $A_1$ and $A_2$ satisfies $K \p{A_1 \otimes \identity + \identity \otimes A_1}=K \p{A_2 \otimes \identity + \identity \otimes A_2}=0$, then their linear combinations and their commutator also satisfy this property. 

It is also worth noting that, in the case that $A$ is hermitian, the condition in \cref{eq: genz2}) can be rewritten in a slightly different form, which is a generalization of the condition in \cref{assump}: suppose $A$ is a hermitian operator in the linear span of $\cset{R(g)}{g \in G}$. Then the existence of a decomposition as in \cref{eq: genz2} is equivalent to the existence of a decomposition satisfying
\begin{equation}\label{assump2}
  a(g) = -\e^{\i \theta(g)} a^\ast(g).
\end{equation}
To see that these are equivalent, we use a similar trick as in \cref{lem13}, writing $A = (A + A^\dagger) / 2$, so that we have
\begin{equation}\label{eq:Hequiv}
  A = \frac{1}{2} \sum_g a(g) R(g) + \frac{1}{2} \sum_g a^\ast(g) R(g)^\dagger = \sum_g \frac{1}{2} (a(g) + a^\ast(g^{-1})) R(g)
\end{equation}
where we used the fact that $R(g)^\dagger = R(g^{-1})$ because $R$ is a unitary representation and we resummed in the second equality. Let $\tilde{a}(g) = \frac{1}{2} (a(g) + a^\ast(g^{-1}))$. Then clearly $\tilde{a}^\ast(g) = \tilde{a}(g^{-1})$, and so $\tilde{a}$ satisfies \cref{eq: genz2} if and only if it satisfies also \cref{assump2}. It is easy to see by the definition of $\tilde{a}$ that if $a$ satisfies \cref{eq: genz2} then $\tilde{a}$ does also, and likewise if $a$ satisfies \cref{assump2} then so does $\tilde{a}$.

\subsubsection{Decomposing projector \texorpdfstring{$K$}{K} in irreps of group \texorpdfstring{$G$}{G}}\label{SubSec:K}
 
Before presenting the proof of \cref{thm:Z2}, we establish some notation and discuss more about properties of projector $K$. Let $\hat{G}$ be the set of equivalence classes of inequivalent irreps of $G$, and consider the decomposition of $R : G \to \End(\hilbert)$ into inequivalent irreps $\lambda \in \hat{G}$,
\begin{subequations}
  \begin{align}
    \hilbert & \cong \bigoplus_{\lambda \in \Lambda} \hilbert[R]_\lambda \otimes \hilbert[N]_\lambda \\
    R & \cong \sum_\lambda R_\lambda = \sum_\lambda r_\lambda \otimes \identity_{\hilbert[N]_\lambda}\ ,
  \end{align}
\end{subequations}
where $\Lambda \subseteq \hat{G}$ is the set of inequivalent irreps showing up in the representation and $r_\lambda \in \lambda$. It will be helpful to extend the notation $r_\lambda$ to also denote particular representations for each $\lambda \in \hat{G}$, even those which do not show up in the decomposition. In the following, $[r_\lambda]_{ij}$ denotes the matrix elements of $r_\lambda$ in a fixed orthonormal basis $\{|i\rangle: i=1,\cdots, d_\lambda\}$ and $r^\ast_\lambda $ is the complex conjugate of this unitary matrix in that basis. 

In this basis $K$ can be written as 
\begin{equation}\label{eq:Kdecom}
  K = \sum_{\lambda, \mu \in \Lambda} \frac{1}{|G|} \sum_{g \in G} \e^{\i \theta(g)} \p[\big]{r_\lambda(g) \otimes \identity_{\hilbert[N]_\lambda}} \otimes \p[\big]{r_\mu(g) \otimes \identity_{\hilbert[N]_\mu}}= \sum_{\lambda, \mu \in \Lambda} \frac{1}{|G|} \sum_{g \in G} \e^{\i \theta(g)} \p[\big]{r_\lambda(g) \otimes r_\mu(g) } \otimes \p[\big]{\identity_{\hilbert[N]_\mu}\otimes \identity_{\hilbert[N]_\lambda}} .
\end{equation}
To characterize the term $ \frac{1}{|G|} \sum_{g \in G} \e^{\i \theta(g)} \p[\big]{r_\lambda(g) \otimes r_\mu(g) }$, in the following we use the Schur orthogonality relations: 
\begin{equation}\label{ortho67}
  \frac{1}{|G|} \sum_{g \in G} [r_\mu(g)]_{ij} [ r_\nu(g)]_{kl}^\ast =\frac{1}{d_\mu} \delta_{\mu,\nu} \delta_{i,k} \delta_{j,l}\ ,
\end{equation}
where $d_\mu$ is the dimension of the representation $\mu$. 
This can be rewritten as 
\begin{equation}\label{eq:orth}
  \frac{1}{|G|} \sum_{g \in G} r_\mu(g) \otimes r^\ast_{\nu}(g) = \frac{\delta_{\mu \nu}}{d_\mu} \qproj{\Gamma_\mu}\ ,
\end{equation}
where 
\be\label{eq:Gamma}
|\Gamma_\mu\rangle\equiv\sum_{i=1}^{d_\mu} |i\rangle \otimes |i\rangle\in \mathcal{R}_\mu\otimes \mathcal{R}_\mu\ .
\ee
Next, note that for any irrep $r_\lambda$, the complex conjugate $r^\ast_\lambda$ is an irrep of $G$. We denote the equivalency class of this irrep by $\lambda^\ast$ and refer to it as the dual of $\lambda$. Similarly, $\e^{-\i \theta} r^\ast_\lambda$ is also an irrep of $G$. We denote the equivalency class of this irrep by $\lambda^\theta$, and refer to it as the twisted dual of $\lambda$. This definition can be summarized as $\lambda^\theta \equiv \theta^\ast \otimes \lambda^\ast $ (note that $(\lambda^\theta)^\theta=\lambda$). This equivalence means that there exists a unitary transformation $J_\lambda : \hilbert[R]_\lambda \to \hilbert[R]_{\lambda^\theta}$ so that
\begin{equation}\label{eq: Jlambda}
  r_{\lambda^\theta}(g) = \e^{-\i \theta(g)} J_\lambda r_{\lambda}(g)^\ast J_\lambda^\dagger\ .
\end{equation}
Combining this with \cref{eq:orth} we find 
\be
\frac{1}{|G|} \sum_{g \in G} \e^{\i \theta(g)} \p[\big]{r_\lambda(g) \otimes r_\mu(g) }= \frac{\delta_{\lambda, \mu^\theta}}{d_\lambda} (\mathbb{I}\otimes J_\lambda) \qproj{\Gamma_\lambda}(\mathbb{I}\otimes J^\dag_\lambda)\ .
\ee
In other words, this is non-zero iff $\lambda$ is equal to the twisted dual of $\mu$. 

To summarize we conclude that
\begin{equation}\label{eq:K-ent}
  K = \sum_{\lambda\in \Lambda_{\mathrm{sym}}} \frac{1}{d_\lambda} (\mathbb{I}\otimes J_\lambda) \qproj{\Gamma_\lambda}(\mathbb{I}\otimes J^\dag_\lambda) \ ,
\end{equation}
where the summation is over all $\Lambda_\mathrm{sym}=\{\lambda\in\Lambda: \lambda^\theta\in\Lambda\}$, i.e. the set of all irreps of $G$ appearing is representation $R$ with property that their twisted dual also appears in $R$. 

\subsubsection{Proof of Theorem \ref{thm:Z2}} We present the proof in the case of finite groups. The case of compact Lie groups follow in a similar manner (see the end of the proof for a discussion). For the forward direction, use \cref{eq: 1d rep} to write 
\begin{equation}
  \begin{split}
    K \p{A \otimes \identity + \identity \otimes A} & = K \sum_{g \in G} \p{a(g) R(g) \otimes \identity + \identity \otimes a\p[]{g^{-1}} R\p[]{g^{-1}}} \\
    & = K \sum_g \p[\big]{a(g) R(g) \otimes \identity + \e^{\i \theta(g)} a\p[]{g^{-1}} R(g) \otimes \identity} \\
    & = \sum_g \p[\big]{a(g) + \e^{\i \theta(g)} a\p[]{g^{-1}}} K \p[\big]{R(g) \otimes \identity} \\
    & = 0\ .
  \end{split}
\end{equation}
This is a straightforward generalization of the argument in \cref{lem13}. 
 
For the converse, first note that the assumption that $A$ is written as a linear combination of group elements is equivalent to the statement that $A$ acts trivially on the multiplicity subspaces, i.e.,
\begin{equation}\label{eq: A}
  A = \sum_{\lambda \in \Lambda} \Pi_\lambda A \Pi_\lambda = \sum_\lambda A_\lambda = \sum_{\lambda\in\Lambda} \mathsf{a}_\lambda \otimes \identity_{\hilbert[N]_\lambda}.
\end{equation}
Then, combining the assumption $K \p{A \otimes \identity + \identity \otimes A}$ with the decomposition of $K$ in \cref{eq:K-ent} we find that if both $\lambda$ and $\lambda^\theta$ appear in $R$, then
\be\label{maq}
\lambda,\lambda^\theta\in \Lambda\ \implies \ \Big(\mathsf{a}_\lambda\otimes \mathbb{I}_{\lambda^\theta}+ \mathbb{I}_{\lambda}\otimes \mathsf{a}_{\lambda^\theta} \Big) (\mathbb{I}\otimes J_\lambda) |\Gamma_\lambda\rangle=0\ .
\ee
Recall that for any pair of operators $A$ and $B$ on $\mathcal{R}_\lambda$, with the definition \cref{eq:Gamma}, it holds that $(A\otimes \mathbb{I})|\Gamma_\lambda\rangle = (\mathbb{I} \otimes B)|\Gamma_\lambda\rangle $ iff $A=B^T$, where $B^T$ is the transpose of $B$ in the basis $\{|i\rangle: i=1,\cdots, d_\lambda \}$. By applying $(\identity \otimes J_\lambda^\dagger)$ to \cref{maq}, this implies
\begin{equation}\label{eq: Z2}
  \lambda,\lambda^\theta\in \Lambda\ \implies \ \mathsf{a}_\lambda^T = -J_\lambda^\dagger \mathsf{a}_{\lambda^\theta} J_\lambda\ .
\end{equation}
This constraint is relevant if both $\lambda$ and $\lambda^\theta$ are in $\Lambda$, i.e., appear in the representation $R$. Inspired by \cref{eq: Z2}), we extend the definition of $\mathsf{a}_\lambda$ to all irreps of the group $G$, i.e., all $\lambda\in\hat{G}$ using the following rule:
\begin{enumerate}
\item If $\lambda, \lambda^\theta \notin \Lambda$, that is, neither $\lambda$ nor $\lambda^\theta$ show up in the representation $R$, then set $\mathsf{a}_\lambda = \mathsf{a}_{\lambda^\theta} = 0$.
\item If $\lambda\in \Lambda$ but $\lambda^\theta \notin \Lambda$, then set $\mathsf{a}_{\lambda^\theta} = -J_\lambda \mathsf{a}_\lambda^T J_\lambda^\dagger$.
\end{enumerate}
Then, define
\begin{equation}\label{def-func}
  a(g) \equiv \frac{1}{|G|} \sum_{\lambda \in \hat{G}} d_\lambda \Tr (r_\lambda(g)^\dagger \mathsf{a}_\lambda)\ .
\end{equation}
We claim that this function satisfies the desired properties, that is $A=\sum_{g \in G} a(g) R(g)$ and $a(g) = -\e^{\i \theta(g)} a\p[]{g^{-1}}$ for all $g\in G$. The first property follows immediately using Schur orthogonality relations: 
\begin{equation}
  \begin{split}
    \sum_{g \in G} a(g) R(g)&= \sum_{g \in G} \frac{1}{|G|} \sum_{\lambda \in \hat{G}} d_\lambda \Tr(r_\lambda(g)^\dagger \mathsf{a}_\lambda) \sum_{\mu\in\Lambda} r_\mu(g) \otimes \identity_{\hilbert[N]_\mu} \\ &= \sum_{\mu\in\Lambda} \sum_{\lambda \in \hat{G}} \Big[ \sum_{g \in G} \frac{d_\lambda }{|G|} \Tr(r_\lambda(g)^\dagger \mathsf{a}_\lambda) r_\mu(g) \Big]\otimes \identity_{\hilbert[N]_\mu}\\ &= \sum_{\mu\in\Lambda} \sum_{\lambda \in \hat{G}} \delta_{\lambda,\mu} \mathsf{a}_\lambda \otimes \identity_{\hilbert[N]_\mu} \\ &= \sum_{\mu\in\Lambda} \mathsf{a}_\mu \otimes \identity_{\hilbert[N]_\mu}= A\ ,
  \end{split}
\end{equation}
where the second line follows from \cref{ortho67}). To check the second property, note that for all $g\in G$ and $\lambda\in\hat{G}$, it holds that 
\be\label{klkl}
\e^{\i \theta(g)} \Tr(r_{\lambda}(g) \mathsf{a}_\lambda)=\e^{\i \theta(g)} \Tr(r_{\lambda}(g)^T \mathsf{a}_\lambda^T)= \Tr(\e^{\i \theta(g)} J_\lambda r_{\lambda}(g)^T J_\lambda^\dagger J_\lambda\mathsf{a}_\lambda^T J^\dag_\lambda)=- \Tr(r_{\lambda^\theta}(g)^\dag\mathsf{a}_{\lambda^\theta})\ ,
\ee
where we used the facts that $\mathsf{a}_\lambda^T=-J_\lambda^\dagger \mathsf{a}_{\lambda^\theta} J_\lambda$ implies $\mathsf{a}_{\lambda^\theta}=-J_\lambda\mathsf{a}_\lambda^T J^\dag_\lambda$ and $ r_{\lambda^\theta}(g) = \e^{-\i \theta(g)} J_\lambda r_{\lambda}(g)^\ast J_\lambda^\dagger$ implies $ r_{\lambda^\theta}(g)^\dag = \e^{\i \theta(g)} J_\lambda r_{\lambda}(g)^T J_\lambda^\dagger$. Using this identity we find
\begin{equation}
  \label{eq:z2proof}
  \begin{split}
    \e^{\i \theta(g)} a(g^{-1}) & = \frac{1}{|G|} \sum_{\lambda \in \hat{G}} d_\lambda \e^{\i \theta(g)} \Tr (r_\lambda(g^{-1})^\dagger \mathsf{a}_\lambda) \\ & = \frac{1}{|G|} \sum_{\lambda \in \hat{G}} d_\lambda \e^{\i \theta(g)} \Tr (r_\lambda(g) \mathsf{a}_\lambda) \\&=-\frac{1}{|G|} \sum_{\lambda \in \hat{G}} d_\lambda \Tr(r_{\lambda^\theta}(g)^\dag\mathsf{a}_{\lambda^\theta}) \\ & = - \frac{1}{|G|} \sum_{\lambda \in \hat{G}} d_\lambda \Tr( r_\lambda(g)^\dagger \mathsf{a}_\lambda)=-a(g)\ ,
  \end{split}
\end{equation}
where to get the third line we have used \cref{klkl}) and in the last line we resummed and used the fact that the dimension of $\lambda$ is equal to the dimension of $\lambda^\theta$. The last equality follows from definition of function $a(g)$ in \cref{def-func}). 

If, instead of a finite group, $G$ was a general compact topological group (such as a compact Lie group), then the proof goes through the same as before, replacing the sum over group elements with the integral over the Haar measure, e.g., \cref{eq:Khaar}. Since the representation is finite-dimensional, there are only finitely many irreps that show up in the decomposition, i.e., $\Lambda$ is finite; the function $a$ defined in \cref{def-func} can be shown to be $L^2(G)$. This completes the proof of \cref{thm:Z2}.

\newpage 

\section{Time evolution in the fermionic subspace}\label{App:Fermi}

In this section we further expand the ideas presented in \cref{Sec:ferm} on the qudit-fermion correspondence and the group generated by exponentials of fermion swaps.

\subsection{Representation of the symmetric group \texorpdfstring{$\mathbb{S}_n$}{Sn} on a fermionic system}\label{app:fermi-Pf}

In \cref{fermionic group}, we studied a representation of the group $\mathbb{S}_n$ on a fermionic system with $n$ sites. To specify this representation, it suffices to determine the representation of transpositions (swaps). Specifically, for $a,b \in\{1,\cdots, n\}$, we assumed transposition $\sigma_{ab}\in\mathbb{S}_n$ is represented by the operator $ \P^f_{ab} $ that satisfies the two equations 
\begin{align}
  \P^f_{ab} c^\dag_i \P^{f\dagger}_{ab} = c^\dag_{\sigma_{ab}(i)},\quad \text{for}\quad i=1, \cdots, n, \tag{\ref{perm0}}
\end{align}
and
\begin{align}
  \P^f_{ab} |\mathrm{vac}\rangle= |\mathrm{vac}\> \tag{\ref{perm5}}\ .
\end{align}
It can be easily seen that there is a unique operator satisfying both of these constraints. This operator, which turns out to be both Hermitian and unitary, is given by 
\begin{align}
  \P^f_{ab} = \mathbb{I}^f - (c_a^\dagger - c_b^\dagger) (c_a - c_b)\ . \tag{\ref{def:P}} 
\end{align}
The fact that this operator satisfies \cref{perm0,perm5} follows immediately from the fermionic anti-commutation relations
\be
\{c^\dag_i, c^\dag_j\}= 0\ , \ \ \{c^\dag_i, c_j\}= \delta_{ij} \ .
\ee
Now, suppose there is another operator $ \tilde\P^f_{ab}$ that satisfies these equations. Then, \cref{perm0} implies
\begin{align}
  \tilde\P^f_{ab}\P^f_{ab} c^\dag_i \P^{f\dagger}_{ab} \tilde\P^{f\dagger}_{ab} = c^\dag_{\sigma^2_{ab}(i)} = c^\dagger_i.
\end{align}
Taking the adjoint of this equation, we find 
\begin{align}
  \tilde\P^f_{ab}\P^f_{ab} c_i \P^{f\dagger}_{ab} \tilde\P^{f\dagger}_{ab} = c_{\sigma^2_{ab}(i)} = c_i\ .
\end{align}
We conclude that the operator $\tilde\P^f_{ab}\P^f_{ab}$ preserves all creation and annihilation operators and, therefore, preserves the entire algebra generated by them. But this algebra contains all operators acting on the fermionic (Fock) Hilbert space. It follows that $\tilde\P^f_{ab}\P^f_{ab}$ should be equal to the identity operator, up to a phase $\e^{i\phi}$, i.e., $\tilde\P^f_{ab}\P^f_{ab}=\e^{i\phi} \mathbb{I}^f$. Using the fact that operator $\P^f_{ab} = \mathbb{I}^f - (c_a^\dagger - c_b^\dagger) (c_a - c_b)$ satisfies $(\P^f_{ab})^2=\mathbb{I}^f$, we find
\be
 \tilde\P^f_{ab}=\e^{i\phi} \P^f_{ab}\ . 
\ee
Finally, applying both sides of this equation on the vacuum state and using the assumption in \cref{perm5} we find $\e^{i\phi}=1$. We conclude that \cref{perm0} and \cref{perm5} have a unique solution, namely, $ \P^f_{ab}= \mathbb{I}^f - (c_a^\dagger - c_b^\dagger) (c_a - c_b)$.\\

\begin{remark}[Hard-core bosons]
  One can define a similar representation of the permutation group in the case of hard-core bosons (such bosons appear, e.g., in the infinite repulsion limit of the bose-Hubbard model \cite{PhysRevB.40.546}). However, unlike the case of fermions, in this case the swap operators correspond to interacting Hamiltonians: Let $\mathcal{H}^b$ be the $2^n$-dimensional Hilbert space of a system of hard-core bosons living on $n$ sites. For hard-core bosons creation and annihilation operators satisfy the commutation relations 
\be
[b_i, b^\dagger_j] = (1-2n_i)\delta_{ij}\ ,\ [b^\dagger_i, b^\dagger_j] = 0 ,\ b^{\dagger2}_i = 0\ ,
\ee
where $n_i=b^\dag_i b_i$ is the number operator \cite{10.1143/PTP.16.569,Batyev84}. 
Again, to define a representation of $\mathbb{S}_n$, it suffices to define the representation of transpositions (swaps). For a pair of sites $i,j\in\{1,\cdots n\}$, consider the operator $\P^b_{ij}$ satisfying equations 
\begin{align}
  \P^b_{ij} b^\dag_k \P^{b\dagger}_{ij} = b^\dag_{\sigma_{ij}(k)},\quad \text{for}\quad k=1, \cdots, n,
\end{align}
and
\begin{align}
  \P^b_{ij} |\mathrm{vac}\rangle= |\mathrm{vac}\rangle\ ,
\end{align}
where $|\mathrm{vac}\rangle$ is the normalized vacuum state satisfying $b_i |\mathrm{vac}\rangle=0$ for all $i\in\{1,\cdots n\}$. Then, following a similar argument used above, we can see that these equations have a unique solution, namely 
\begin{align}
  \P^b_{ij} = b_i^\dagger b_j + b_j^\dagger b_i + 2(n_i - \frac{\mathbb{I}^b}{2})(n_j - \frac{\mathbb{I}^b}{2}) + \frac{\mathbb{I}^b}{2} ,
\end{align}
where $\mathbb{I}^b$ is the identity operator. The presence of the term $2(n_i - \frac{\mathbb{I}^b}{2})(n_j - \frac{\mathbb{I}^b}{2})$ means that, in contrast to the femionic case, the transposition (swap) operators $\P^b_{ij}$ is no longer of the form of a free Hamiltonian. 
\end{remark}

\begin{remark}[chain of harmonic oscillators]
One can also define a similar representation of the permutation group on a chain of harmonic oscillators. Let $\mathcal{H}^a$ be the Hilbert space of $n$ harmonic oscillators, and $a^\dag_i$ is the raising operator on site $i$, which satisfies the commutation relations
\begin{align}
  [a_i^\dagger, a_j] = \delta_{ij}, \quad [a_i^\dagger, a_j^\dagger] = 0\ .
\end{align}
For any transposition (swap) $\sigma_{ij} \in \mathbb{S}_n$, the corresponding unitary operator $\P^a_{ij}$ on $\mathcal{H}^a$ is defined to have the following properties: for any raising operator $a_k^\dagger$,
\begin{align}
  \P^a_{ij} a^\dag_k \P^{a\dagger}_{ij} = a^\dag_{\sigma_{ij}(k)},\quad \text{for}\quad i=1, \cdots, n\ ,
\end{align}
and
\begin{align}
  \P^a_{ij} \bigotimes_{k=1}^n|0\>_k= \bigotimes_{k=1}^n|0\>_k\ ,
\end{align}
where $|0\>_k$ is the normalized ground state satisfying $a_k |0\>_k=0$. Then, one can see that operator
\begin{align}
  \P^a_{ij} = \sum_{l,m=0}^\infty |m\>_i \<l|_i \otimes |l\>_j \<m|_j,
\end{align}
satisfies both equations, where $|m\>_i = \frac{1}{\sqrt{m!}} (a_i^\dagger)^m |0\>_i$, and we have suppressed the tensor product with the identities on all the other sites.
\end{remark}

\subsection{Unitary time evolution of fermion creation operators under the swap Hamiltonian}\label{app:fermi-conjugate-c}
Recall that fermionic swap operator is equal to $
\P^f_{ab} = \mathbb{I}^f - (c_a^\dagger - c_b^\dagger) (c_a - c_b)\ $. Using the fact that $\P^{f2}_{ab} = \mathbb{I}^f$, we have $\e^{\i\theta\P^f_{ab}} = \cos\theta \mathbb{I}^f + \i\sin\theta \P^f_{ab}$. Therefore
\begin{align}
  \e^{-\i\theta\P^f_{ab}} c_i^\dagger \e^{\i\theta\P^f_{ab}} &= (\cos\theta \mathbb{I}^f - \i\sin\theta \P^f_{ab}) c_i^\dagger (\cos\theta \mathbb{I}^f + \i\sin\theta \P^f_{ab}) \nonumber\\
                                          &= \cos^2\theta c_i^\dagger + \i\cos\theta\sin\theta [c_i^\dagger, \P^f_{ab}] + \sin^2\theta \P^f_{ab} c_i^\dagger \P^f_{ab} \nonumber \\&= \cos^2\theta c_i^\dagger + \i\cos\theta\sin\theta [c_i^\dagger, \P^f_{ab}] + \sin^2\theta c_{\sigma_{ab}(i)}^\dagger \ .
\end{align}
Applying the fermionic anti-commutation relations, we have
\begin{align}
  [c_i^\dagger, \P^f_{ab}] = c_i^\dagger - c_{\sigma_{ab(i)}}^\dagger.
\end{align}
Therefore
\begin{align}
  \e^{-\i\theta\P^f_{ab}} c_i^\dagger \e^{\i\theta\P^f_{ab}} &= \cos^2\theta c_i^\dagger + \i\cos\theta\sin\theta (c_i^\dagger - c_{\sigma_{ab(i)}}^\dagger) + \sin^2\theta c_{\sigma_{ab(i)}}^\dagger \nonumber\\
                                          &= \e^{\i\theta}(\cos\theta c_i^\dagger - \i\sin\theta c_{\sigma_{ab(i)}}^\dagger).
\end{align}

\subsection{In the single-particle sector, the group generated by exponentials of fermionic swaps is isomorphic to \texorpdfstring{$\U(n-1)$}{U(n-1)}}\label{app:fermi-single-fermion}
In \cref{fermionic group}, we argued that the group generated by the exponentials of fermionic swaps $\{\P_{ab}^f\}$ is isomorphic to $\U(n-1)$. The proof relies on the following lemma, which characterizes the action of this group in the single-particle sector. 
\begin{lemma}\label{lem2022}
  Let $\mathfrak{g}_n$ be the Lie algebra generated by skew-Hermitian operators $\{\i (|a\rangle-|b\rangle)(\langle a|-\langle b|): 1\le a<b\le n \}$. Then, 
\be
\mathfrak{g}_n\equiv \big\langle\i (|a\rangle-|b\rangle)(\langle a|-\langle b|): 1\le a<b\le n \big\rangle= \{A\in \mathrm{L}(\mathbb{C}^n): A+A^\dag=0, A\sum_{j=1}^n|j\rangle=0\} \cong \mathfrak{u}(n-1)\ .
\ee
\end{lemma}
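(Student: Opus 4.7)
The plan is to prove the stated equalities in two directions, with the main effort going into the nontrivial containment $\mathfrak{g}_n \supseteq \mathfrak{h}_n$, where $\mathfrak{h}_n := \{A: A+A^\dag = 0,\ A\sum_j|j\rangle = 0\}$. The easy containment $\mathfrak{g}_n \subseteq \mathfrak{h}_n$ is immediate: each generator $T_{ab} := \i(|a\rangle - |b\rangle)(\langle a| - \langle b|)$ is skew-Hermitian, and since $(\langle a| - \langle b|)\sum_j|j\rangle = 0$ it annihilates $|s\rangle := \sum_j|j\rangle$; both properties pass through real-linear combinations and Lie brackets. The isomorphism $\mathfrak{h}_n \cong \mathfrak{u}(n-1)$ follows from the orthogonal splitting $\mathbb{C}^n = \mathbb{C}|s\rangle \oplus |s\rangle^\perp$: a skew-Hermitian operator that kills $|s\rangle$ preserves $|s\rangle^\perp$ and restricts to an arbitrary skew-Hermitian operator there.

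For the reverse containment I would proceed in three steps. First, produce the center of $\mathfrak{u}(|s\rangle^\perp)$ directly: expanding $H_{ab} := (|a\rangle - |b\rangle)(\langle a| - \langle b|)$, a short count gives $\sum_{a<b} H_{ab} = n\mathbb{I} - |s\rangle\langle s|$, so $\sum_{a<b} T_{ab} = \i n\, \Pi_{|s\rangle^\perp}$ is in $\mathfrak{g}_n$. Second, install $S_n$-equivariance: since $H_{ab}$ has spectrum $\{0,2\}$, we have $e^{\i\pi H_{ab}/2} = \mathbb{I} - H_{ab} = E_{ab}$, so the standard-representation transpositions lie in the connected Lie group of $\mathfrak{g}_n$, and hence $\mathfrak{g}_n$ is invariant under conjugation by the (irreducible) standard $S_n$-action on $|s\rangle^\perp$. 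Third, generate the traceless part: setting $v_{ab} := (|a\rangle - |b\rangle)/\sqrt{2}$ so that $T_{ab} = 2\i\, v_{ab}v_{ab}^\dag$, a direct calculation using $\langle v_{ab}, v_{bc}\rangle = -1/2$ for distinct $a,b,c$ shows that $[T_{ab}, T_{bc}]$ is a nonzero real scalar times $v_{ab}v_{bc}^\dag - v_{bc}v_{ab}^\dag$, while $[[T_{ab}, T_{bc}], T_{ab}] + 2T_{ab}$ is a nonzero real scalar times $\i(v_{ab}v_{bc}^\dag + v_{bc}v_{ab}^\dag)$. Together these place every complex rank-one operator $v_{ab}v_{bc}^\dag$ in the complexification $\mathfrak{g}_n^{\mathbb{C}}$. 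Iterated brackets, combined with $S_n$-conjugation, then exhaust all rank-one operators on $|s\rangle^\perp$, so $\mathfrak{g}_n^{\mathbb{C}} \supseteq \mathfrak{gl}(|s\rangle^\perp)$; pairing with the center found in step one, this forces $\mathfrak{g}_n = \mathfrak{h}_n$.

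The main obstacle is that final spanning claim: the direct commutator chase is plausible but can become combinatorial as $n$ grows. A cleaner backup is to observe that $\mathfrak{g}_n \cap \mathfrak{su}(|s\rangle^\perp)$ is a nonzero $S_n$-invariant Lie subalgebra of $\mathfrak{su}(n-1)$, which is simple for $n \ge 3$ (the case $n = 2$ is immediate since $\mathfrak{u}(1)$ is one dimensional and $T_{12} \ne 0$), then decompose $\mathfrak{su}(|s\rangle^\perp)$ as an $S_n$-module -- a sum of a few isotypic components inside $V_{\mathrm{std}} \otimes V_{\mathrm{std}}^\dag$ -- and use the concrete commutators of the previous paragraph to verify that $\mathfrak{g}_n$ meets each isotypic component, forcing $\mathfrak{g}_n \cap \mathfrak{su}(|s\rangle^\perp) = \mathfrak{su}(|s\rangle^\perp)$ and completing the proof.
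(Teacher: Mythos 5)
Your route is genuinely different from the paper's. The paper proves this lemma by induction on $n$: it splits $\mathbb{C}^{n+1}$ as $\mathcal{H}_\perp^{(n)}\oplus\mathbb{C}|\beta^{(n+1)}\rangle\oplus\mathbb{C}|\eta^{(n+1)}\rangle$, uses the induction hypothesis to obtain all skew-Hermitian operators supported on $\mathcal{H}_\perp^{(n)}$, and then invokes \cref{lemma2} (imported from \cite{LHM_2022}) with the single non-block-diagonal generator $\i(\mathbb{I}-E_{n,n+1})$ to promote this to all of $\mathfrak{su}$ on the $n$-dimensional complement of $|\eta^{(n+1)}\rangle$, picking up the trace part from $\Tr[\i(\mathbb{I}-E_{n,n+1})]\neq 0$. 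You instead work directly: the easy inclusion and the identification $\mathfrak{h}_n\cong\mathfrak{u}(n-1)$ are exactly right, your identity $\sum_{a<b}H_{ab}=n\mathbb{I}-|s\rangle\langle s|$ correctly produces the center $\i\,\Pi_{|s\rangle^\perp}$, the observation $\e^{\i\pi H_{ab}/2}=E_{ab}$ correctly places the standard-representation transpositions in the connected group (hence $S_n$-invariance of $\mathfrak{g}_n$), and I have checked that your two commutator identities ($[T_{ab},T_{bc}]=2(v_{ab}v_{bc}^\dagger-v_{bc}v_{ab}^\dagger)$ and $[[T_{ab},T_{bc}],T_{ab}]+2T_{ab}=-4\i(v_{ab}v_{bc}^\dagger+v_{bc}v_{ab}^\dagger)$) are correct. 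What your approach buys is self-containedness and an explicit description of the generated algebra at each stage; what the paper's approach buys is that the entire combinatorial burden is absorbed into one reusable two-subspace lemma, so no case analysis on index patterns is ever needed.

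The one genuine gap is the final spanning step, which you yourself flag: ``iterated brackets, combined with $S_n$-conjugation, then exhaust all rank-one operators'' is asserted, not proved. It is true, but closing it requires checking that $v_{ab}v_{cd}^\dagger\in\mathfrak{g}_n^{\mathbb{C}}$ for every pair of pairs, including the disjoint case (e.g.\ via $[v_{ab}v_{bc}^\dagger,\,v_{bc}v_{cd}^\dagger]=\|v_{bc}\|^2\,v_{ab}v_{cd}^\dagger-\langle v_{cd},v_{ab}\rangle\, v_{bc}v_{bc}^\dagger$, whose second term vanishes for disjoint pairs), together with low-$n$ edge cases; without this the argument is a plausible sketch. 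Your backup is also only a sketch as written: simplicity of $\mathfrak{su}(n-1)$ does not by itself force a nonzero $S_n$-invariant \emph{subalgebra} (as opposed to an ideal) to be everything, so the real content is the multiplicity-free decomposition of $\mathfrak{sl}(|s\rangle^\perp)$ into the standard, $(n-2,2)$, and $(n-2,1,1)$ irreps of $\mathbb{S}_n$, plus a verification that $\mathfrak{g}_n$ has nonzero intersection with each summand --- a computation you do not carry out. Either route can be completed, but as submitted the proof stops short of establishing $\mathfrak{g}_n\supseteq\mathfrak{h}_n$; the paper's induction-plus-\cref{lemma2} structure is precisely the device that sidesteps this spanning verification.
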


To prove this result, we use the following  lemma which is proven in \cite{LHM_2022}.  Suppose a finite-dimensional Hilbert space is decomposed into two orthogonal complementary subspaces. Then, this lemma implies that the family of all block-diagonal unitaries with respect to this decomposition, together with the family of unitaries generated by a Hamiltonian that is not block-diagonal with respect to this decomposition, generate all unitaries. 

\begin{lemma}\label{lemma2}
  Let $\mathcal{H}_\alpha$ and $\mathcal{H}_\beta$ be two complementary orthogonal subspaces of a $D$-dimensional Hilbert space $\mathcal{H}=\mathcal{H}_\alpha\oplus \mathcal{H}_\beta$ with dimensions $D_\alpha$ and $D_\beta$, respectively. Suppose that the dimensions are not $D_\alpha = D_\beta = 1$ or $D_\alpha = D_\beta = 2$. Consider the group of all unitaries that are block-diagonal with respect to this decomposition, and inside each block have determinant 1, i.e., the group 
\be\label{group}
G=\big\{W_\alpha\oplus W_\beta: \det(W_\alpha)=\det(W_\beta)=1 \big\}\ .
\ee
Let $A$ be a non-zero Hermitian operator that is not block-diagonal with respect to the decomposition $\mathcal{H}=\mathcal{H}_\alpha\oplus \mathcal{H}_\beta$. The group generated by unitaries $F=\{\e^{\i A t}: t\in\mathbb{R}\}$ together with unitaries in \cref{group} include all unitaries on $\mathcal{H}$, up to a global phase, i.e., $\SU(D) \subseteq \langle F, G\rangle\ .$ Equivalently, in terms of Lie algebras, this means that $\mathfrak{su}(D)\subseteq \langle \mathfrak{g}, \{ \i A \} \rangle$, where 
 \be
 \mathfrak{g}=\big\{B_\alpha\oplus B_\beta: \ \Tr(B_\alpha)=\Tr(B_\beta)=0\ , B_\alpha+B_\alpha^\dag=0\ , B_\beta+B_\beta^\dag=0 \big\}\ .
 \ee
\end{lemma}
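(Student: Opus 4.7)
\textbf{Proof proposal for \cref{lemma2}.} The plan is to work in the Lie algebra picture and show $\mathfrak{su}(D) \subseteq \mathfrak{h} \equiv \langle \mathfrak{g}, \i A\rangle$. Decompose $\i A = \i A_d + \i A_o$ into its block-diagonal part $A_d = P_\alpha A P_\alpha + P_\beta A P_\beta$ and its block-off-diagonal part $A_o$, which is nonzero by hypothesis. Write $\i A_d = M_0 + c_\alpha \i P_\alpha + c_\beta \i P_\beta$ with $M_0 \in \mathfrak{g}$; since $\i P_\alpha$ and $\i P_\beta$ commute with every element of $\mathfrak{g}$, we have $\mathrm{ad}(\mathfrak{g})[\i A] = \mathrm{ad}(\mathfrak{g})[\i A_o] \subseteq \mathfrak{h}$. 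Let $\mathcal{V}_{\mathrm{off}}$ denote the real vector space of block-off-diagonal skew-Hermitian operators on $\mathcal{H}$. The argument then reduces to two steps: (i) show $\mathcal{V}_{\mathrm{off}} \subseteq \mathfrak{h}$; and (ii) show that commutators inside $\mathcal{V}_{\mathrm{off}}$ produce the single traceless diagonal direction of $\mathfrak{su}(D)$ not already spanned by $\mathfrak{g}$, namely $\i(D_\beta P_\alpha - D_\alpha P_\beta)$.

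For step (i), identify $\mathcal{V}_{\mathrm{off}}$ with the complex vector space of linear maps $X : \mathcal{H}_\beta \to \mathcal{H}_\alpha$ via $X \mapsto X - X^\dag$. Under this identification, the adjoint action $\mathrm{ad}(B_\alpha \oplus B_\beta)$ sends $X$ to $B_\alpha X - X B_\beta$, which is exactly the external tensor-product action of $\mathfrak{su}(D_\alpha) \oplus \mathfrak{su}(D_\beta)$ on $\mathbb{C}^{D_\alpha} \otimes (\mathbb{C}^{D_\beta})^\ast$, viewed as a real representation. The key fact I would invoke is that this real representation is irreducible precisely when $(D_\alpha, D_\beta) \neq (1,1),(2,2)$, exactly the excluded range. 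Since $\i A_o \neq 0$, irreducibility then forces $\mathrm{ad}(\mathfrak{g})[\i A_o]$ to linearly span all of $\mathcal{V}_{\mathrm{off}}$.

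For step (ii), pick unit vectors $|\alpha_1\rangle \in \mathcal{H}_\alpha$ and $|\beta_1\rangle \in \mathcal{H}_\beta$, and set $P = |\alpha_1\rangle\langle\beta_1|$. Both $P - P^\dag$ and $\i(P + P^\dag)$ lie in $\mathcal{V}_{\mathrm{off}} \subseteq \mathfrak{h}$ by (i). A direct computation using $P^2 = (P^\dag)^2 = 0$, $PP^\dag = |\alpha_1\rangle\langle\alpha_1|$, and $P^\dag P = |\beta_1\rangle\langle\beta_1|$ yields
\begin{equation*}
  [\,P - P^\dag,\; \i(P + P^\dag)\,] = 2\i|\alpha_1\rangle\langle\alpha_1| - 2\i|\beta_1\rangle\langle\beta_1|.
\end{equation*}
The right-hand side is block-diagonal and skew-Hermitian with trace $2\i$ on the $\alpha$-block and $-2\i$ on the $\beta$-block. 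Subtracting the elements $2\i(|\alpha_1\rangle\langle\alpha_1| - P_\alpha/D_\alpha)$ on the $\alpha$-block and $-2\i(|\beta_1\rangle\langle\beta_1| - P_\beta/D_\beta)$ on the $\beta$-block, both of which are traceless on their block and hence in $\mathfrak{g}$, isolates the element $(2\i/(D_\alpha D_\beta))\,(D_\beta P_\alpha - D_\alpha P_\beta) \in \mathfrak{h}$, the missing traceless scalar direction. Combined with $\mathfrak{g}$ this yields the full block-diagonal subalgebra of $\mathfrak{su}(D)$, and together with $\mathcal{V}_{\mathrm{off}}$ from (i) we get all of $\mathfrak{su}(D)$.

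The main obstacle is the irreducibility assertion in step (i). The cleanest route is to note that $\mathbb{C}^{D_\alpha} \otimes (\mathbb{C}^{D_\beta})^\ast$ is complex-irreducible for all $D_\alpha, D_\beta \geq 1$, and that a complex-irreducible representation is real-irreducible unless it admits an invariant \emph{real} structure (a quaternionic structure does not split the real representation). An invariant bilinear form on the tensor product exists only when each factor is self-dual; the defining representation of $\mathfrak{su}(n)$ is self-dual only for $n = 2$, where the pairing is antisymmetric (quaternionic). A real structure on the tensor product requires a symmetric pairing, which arises only from antisymmetric-$\otimes$-antisymmetric, i.e., exactly when $D_\alpha = D_\beta = 2$; in every remaining case with $(D_\alpha, D_\beta) \neq (1,1)$ the tensor product is either genuinely complex or quaternionic, both real-irreducible. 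The case $(1,1)$ is the trivial action on a 2-dimensional real space, which is manifestly reducible. This handles all the subcases of (i) uniformly and closes the proof.
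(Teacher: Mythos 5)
Note first that this paper does not contain its own proof of \cref{lemma2}: the lemma is imported from the companion work \cite{LHM_2022} (``we use the following lemma which is proven in [LHM\_2022]''), so there is no in-paper argument to compare yours against. Judged on its own terms, your proof is correct, and the dimension count closes: $\mathfrak{g}$, the extra central direction $\i(D_\beta P_\alpha - D_\alpha P_\beta)$, and $\mathcal{V}_{\mathrm{off}}$ together have real dimension $(D_\alpha^2-1)+(D_\beta^2-1)+1+2D_\alpha D_\beta = D^2-1$, all inside $\mathfrak{su}(D)$. The crux is your step (i), and your Frobenius--Schur analysis is right: $\mathrm{Hom}(\mathcal{H}_\beta,\mathcal{H}_\alpha)\cong \mathcal{H}_\alpha\otimes\mathcal{H}_\beta^\ast$ is complex-irreducible for all $D_\alpha,D_\beta\ge 1$, and its realification is reducible exactly when the representation is of real type, i.e.\ exactly for $(1,1)$ (trivial) and $(2,2)$ (quaternionic $\times$ quaternionic), which are precisely the excluded cases. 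This is a genuinely clean, uniform route that also \emph{explains} the exceptional cases in the statement, whereas the paper's own treatment is an inductive application of the lemma in \cref{app:fermi-single-fermion} rather than a proof of it.

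Two points you should tighten, though neither is a real gap. (a) The asserted equality $\mathrm{ad}(\mathfrak{g})[\i A]=\mathrm{ad}(\mathfrak{g})[\i A_o]$ is not literally true: for $X\in\mathfrak{g}$ one has $[X,\i A]=[X,M_0]+[X,\i A_o]$ with $[X,M_0]\in\mathfrak{g}$ generally nonzero; the correct statement is that $[X,\i A_o]=[X,\i A]-[X,M_0]\in\mathfrak{h}$ because both terms on the right lie in $\mathfrak{h}$. (b) Irreducibility does not force the \emph{single} orbit $\mathrm{ad}(\mathfrak{g})[\i A_o]$ to span $\mathcal{V}_{\mathrm{off}}$; what it forces is that the $\mathfrak{g}$-submodule generated by $\i A_o$ under \emph{iterated} commutators equals $\mathcal{V}_{\mathrm{off}}$ (it is nonzero since an irreducible nontrivial module has no invariant vectors, and the module is nontrivial because $(D_\alpha,D_\beta)\neq(1,1)$). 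Since $\mathfrak{h}$ is a Lie algebra containing $\mathfrak{g}$ and $[\,\mathfrak{g},\i A_o]$, the iterated orbit lies in $\mathfrak{h}$ as well, so the conclusion $\mathcal{V}_{\mathrm{off}}\subseteq\mathfrak{h}$ stands. With these two rephrasings the proof is complete.
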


\begin{remark}\label{remark} In the case of $D_\alpha = D_\beta = 1$, the group $G$ as defined above contains just the identity. But if we relax the condition $\det W_\alpha = 1$, then for the one-dimensional group $G' = \cset{\e^{\i \theta} \oplus \identity}{\theta \in \real}$, the theorem holds, i.e., $G'$ together with any group $F = \cset{\e^{\i A t}}{t \in \real}$ (where $A$ satisfies the assumption above, so, in particular, is not block-diagonal) generate a group that contains all unitaries with determinant $1$, $\SU(2) \subseteq \langle F, G' \rangle$ (Equivalently, we could allow the one-parameter group with elements of the form $\e^{\i \theta} \oplus \e^{-\i \theta}$). In terms of the Lie algebra, we have $\mathfrak{g}' = \i \mathbb{R} \qproj{0}$, the one-dimensional Lie algebra generated by the projector $\qproj{0}$ to the space $\hilbert_\alpha$. This is equivalent to the statement that two rotations in nonparallel axes together generate the whole rotation group (up to global phase).
\end{remark}

\color{black}

\begin{proof}(lemma \ref{lem2022})
  To prove this lemma we use induction over $n$. It can be easily seen that the result holds for $n=2$. For $n\ge 2$, we show that the hypothesis
\be
\mathfrak{g}_n=\{A\in \mathrm{L}(\mathbb{C}^n): A+A^\dag=0, A\sum_{j=1}^n|j\rangle=0\} \ ,
\ee
implies
\be\label{trp0}
\mathfrak{g}_{n+1}=\{A\in \mathrm{L}(\mathbb{C}^{n+1}): A+A^\dag=0, A\sum_{j=1}^{n+1}|j\rangle=0\} \ .
\ee
To show this, first decompose 
\be
\mathbb{C}^n\cong \mathcal{H}_\perp^{(n)} \oplus \mathbb{C} |\eta^{(n)}\rangle\ ,
\ee
where 
\be
|\eta^{(n)}\rangle=\frac{1}{\sqrt{n}}\sum_{j=1}^{n} |j\rangle\ ,
\ee
and 
\be
\mathcal{H}_\perp^{(n)}=\big\{\sum_{j=1}^n \alpha_j |j\rangle: \sum_j \alpha_j=0\big\}\ ,
\ee
is the subspace orthogonal to $|\eta^{(n)}\rangle$. By the induction hypthothesis any skew-Hermitian operator $\tilde{A}$ acting on $\mathcal{H}_\perp^{(n)} $ is in $\mathfrak{g}_{n}$.

Next, decompose $\mathbb{C}^{n+1}$ to 3 orthogonal subspaces as 
\be
\mathbb{C}^{n+1}\cong \mathbb{C}^{n}\oplus \mathbb{C}|n+1\rangle \cong
\big(\mathcal{H}_\perp^{(n)} \oplus \mathbb{C} |\eta^{(n)}\rangle\big)\oplus \mathbb{C} |n+1\rangle
 \cong \mathcal{H}_\perp^{(n)} \oplus \mathbb{C} |\beta^{(n+1)}\rangle \oplus\mathbb{C} |\eta^{(n+1)}\rangle\ ,
\ee
where
\be
|\eta^{(n+1)}\rangle=\frac{1}{\sqrt{n+1}}\sum_{j=1}^{n+1} |j\rangle\ =\sqrt{\frac{n}{n+1}}|\eta^{(n)}\rangle+\sqrt{\frac{1}{n+1}}|n+1\rangle \ .
\ee
and
\be
|\beta^{(n+1)}\rangle=\sqrt{\frac{1}{n+1}}|\eta^{(n)}\rangle-\sqrt{\frac{n}{n+1}}|n+1\rangle\ .
\ee
 By assumption, the Lie algebra $\mathfrak{g}_{n+1}$ contains
\be
\i(\mathbb{I}-E_{n,n+1})=\i(|n\rangle-|n+1\rangle)(\langle n|- \langle n+1|)\ ,
\ee
where we have used the notation $
E_{ab}\equiv\mathbb{I}- (|a\rangle-|b\rangle)(\langle a|-\langle b|)$ introduced before. 

Since $|n\rangle-|n+1\rangle$ is orthogonal to $|\eta^{(n+1)}\rangle$, it can be written as 
\be
|n\rangle-|n+1\rangle=\sqrt{\frac{n+1}{n}} |\beta^{(n+1)}\rangle+ \Big(\frac{n-1}{n}|n\rangle-\frac{1}{n}\sum_{j=1}^{n-1}|j\rangle\Big)=\sqrt{\frac{n+1}{n}} |\beta^{(n+1)}\rangle+|\gamma_\perp\rangle \ ,
\ee
where 
\be
|\gamma_\perp\rangle=\frac{n-1}{n}|n\rangle-\frac{1}{n}\sum_{j=1}^{n-1}|j\rangle\in \mathcal{H}_\perp^{(n)} \ ,
\ee
is a non-zero vector. This means
\be
\i(\mathbb{I}-E_{n,n+1})=\i\Big[\sqrt{\frac{n+1}{n}} |\beta^{(n)}\rangle+ |\gamma_\perp\rangle\Big]\Big[\sqrt{\frac{n+1}{n}} \langle\beta^{(n)}|+ \langle\gamma_\perp|\Big]\ .
\ee
Clearly, $\i(\mathbb{I}-E_{n,n+1}) |\eta^{(n+1)}\rangle=0$. Therefore, the support of $\i(\mathbb{I}-E_{n,n+1}) $ is restricted to the $n$-dimensional subspace $\mathcal{H}_\perp^{(n)} \oplus \mathbb{C} |\beta^{(n+1)}\rangle $. Furthermore, because $|\beta^{(n)}\rangle\langle\gamma_\perp| \neq 0 $, the operator $\i(\mathbb{I}-E_{n,n+1})$ is not block-diagonal with respect to the decomposition $\mathcal{H}_\perp^{(n)} \oplus \mathbb{C} |\beta^{(n+1)}\rangle $. Recall that by induction hypothesis $\mathfrak{g}_{n+1}$ also contains all skew-Hermitian operators with support restricted to $\mathcal{H}_\perp^{(n)}$. Therefore, applying \cref{lemma2} for decomposition $\mathcal{H}_\perp^{(n)} \oplus \mathbb{C} |\beta^{(n+1)}\rangle $ we conclude that $\mathfrak{g}_{n+1}$ contains all traceless skew Hermitian operators with support restricted to $\mathcal{H}_\perp^{(n)} \oplus \mathbb{C} |\beta^{(n+1)}\rangle $, which is isomorphic to $\mathfrak{su}(n)$ (in the case of going from $n = 2$ to $n + 1 = 3$, we use the modified version of the lemma as in \cref{remark}). Furthermore, $\mathfrak{g}_{n+1}$ also contains operators with support restricted to $\mathcal{H}_\perp^{(n)} \oplus \mathbb{C} |\beta^{(n+1)}\rangle $ with non-zero trace (for instance, $\i(\mathbb{I}-E_{n,n+1})\in\mathfrak{g}_{n+1}$ has trace $2i$). The linear combination of traceless skew-Hermitian operators with a skew-Hermitian operator with non-zero trace yields all skew-Hermitian operators. This implies \cref{trp0} and completes the proof of the lemma via induction. 
\end{proof}

\subsection{Equivalence of wedge representations in Eqs. (\ref{eq:equiv}) and (\ref{eq:equivwedge})}\label{app:wedge}

Here, we further explain the equivalences in \cref{eq:equiv,eq:equivwedge}. Recall that $G_{\mathrm{single}} = \cset{\identity \oplus T}{T \in \U(n - 1)}$ with respect to a particular decomposition of $\complex^n$. Call the state on which $G_{\mathrm{single}}$ acts as identity $\nket{0}$, and let the orthogonal subspace be $\hilbert_\perp$, with basis $\sets{\nket{e}}_{e = 1}^{n - 1}$. Then the equivalences say that $\bigwedge^L \p{\complex \nket{0} \oplus \hilbert_\perp} \cong \bigwedge^L \hilbert_\perp \oplus \bigwedge^{L - 1} \hilbert_\perp$.

For any $L$ basis states $\sets{\nket{e_i}}_{i = 1}^L$, $0 \leq e_i \leq n-1$, consider their wedge product
\begin{equation}\label{eq:wedges}
  \bigwedge_{j = 1}^L \nket{e_j} = \frac{1}{\sqrt{L!}} \sum_{\sigma\in \mathbb{S}_L} \sgn(\sigma) \nket{e_{\sigma(1)}} \otimes \cdots \otimes \nket{e_{\sigma(L)}}.
\end{equation}
This vanishes if and only if for some $i \neq j$ we have $e_i = e_j$. Note that, up to a sign, this does not depend on the ordering of the basis states $\sets{e_j}$. In fact, because any reordering is just a permutation $\sigma \in \mathbb{S}_L$, the wedge products defined by these different orderings are the same up to $\sgn(\sigma)$. Since the nonzero (distinct) states of the form \cref{eq:wedges} form a basis for the wedge product space $\bigwedge^L (\complex \nket{0} \oplus \hilbert_\perp)$, we see that this space can be decomposed into two subspaces: one spanned by states of the form \cref{eq:wedges} where none of the $e_i = 0$, and another where exactly one $e_i = 0$. Note that the first is exactly $\bigwedge^L \hilbert_\perp$ and the second is isomorphic to $\bigwedge^{L - 1} \hilbert_\perp$. (One can also verify that dimensions match---this is just Pascal's rule.) Further, $G_{\mathrm{single}}$ acts on these as $(\identity \oplus T)^{\otimes L} \bigwedge_{i = 1}^L \nket{e_i}$, from which it is easy to verify \cref{eq:equiv,eq:equivwedge}.

\subsection{Linearity and unitarity of \texorpdfstring{$U^f$}{Uf}}\label{app:fermi-Uf}
In \cref{eq:perm} we claim that there exists a linear map $U^f$ that preserves the inner products, i.e., is an isometry, and satisfies 
\begin{align}
  \forall\sigma \in\mathbb{S}_n:\ U^f \P(\sigma) \Big[\big(\bigwedge_{m=1}^{L}|m\>\big) \otimes |0\>^{\otimes (n-L)}\Big] = \prod_{i=1}^{L} c_{\sigma(i)}^\dag |\mathrm{vac}\>\ .\tag{\ref{eq:perm}}
\end{align}
To prove this claim it suffices to show that the pairwise inner products of the two sides are equal, i.e.,
\be\label{Gram}
\forall \sigma_{1,2}\in\mathbb{S}_n:\ \ \ \<\Phi| \P(\sigma_2)^\dagger \P(\sigma_1) |\Phi\> = \<\mathrm{vac}| \Bigl(\prod_{i=1}^{L} c_{\sigma_2(i)}^\dagger\Bigr)^\dagger \prod_{i=1}^{L} c_{\sigma_1(i)}^\dag |\mathrm{vac}\>\ ,
\ee
where $ |\Phi\rangle = \big[\big(\bigwedge_{m=1}^{L}|m\>\big) \otimes |0\>^{\otimes (n-L)}\big]$. 
To show this first note that, using the anti-commutation relation of the fermion creation operators, we have
\be
\<\mathrm{vac}| (\prod_{i=1}^{L} c_{\sigma_2(i)}^\dagger)^\dagger \prod_{i=1}^{L} c_{\sigma_1(i)}^\dag |\mathrm{vac}\>\neq 0\ ,
\ee
 if and only if, for each $i\in\{1,\cdots, L\}$, there exists $j\in\{1,\cdots, L\}$ such that 
 \be
 \sigma_1(i)=\sigma_2(j) \ ,
 \ee
 or, equivalently, 
 \be
\forall i=1,\cdots, L :\ \ \sigma^{-1}_2\sigma_1(i)\in \{1,\cdots L\} \ .
 \ee
This means that $ \tau=\sigma^{-1}_2\sigma_1$ permutes the first $L$ elements $\{1,\cdots, L\}$ among themselves. Therefore, $\tau$ has a unique decomposition as 
\be
\tau=\tau_L\ \tau_{n-L}\ ,
\ee
where $\tau_L: \{1,\cdots, L\}\rightarrow \{1,\cdots, L\}$ is a permutation defined on the first $L$ elements and $\tau_{n-L}: \{L+1,\cdots, n\}\rightarrow \{L+1,\cdots, n\}$ is permutation defined on the rest of elements. Then, using the anti-commutation relations of fermions we can easily see that
\begin{align}
  \<\mathrm{vac}| \Big(\prod_{i=1}^{L} c_{\sigma_2(i)}^\dagger\Big)^\dagger \prod_{i=1}^{L} c_{\sigma_1(i)}^\dag |\mathrm{vac}\> = \sgn(\tau_L)\ .
\end{align}

Next, we consider the inner products on the left-hand side of \cref{Gram}. For arbitrary $\sigma_1, \sigma_2 \in \mathbb{S}_n$, we have
\begin{align*}
  \<\Phi| \P(\sigma_2)^\dagger \P(\sigma_1) |\Phi\> = \<\Phi| \P(\sigma_2^{-1}\sigma_1) |\Phi\>= \Big[\big(\bigwedge_{m=1}^{L}\langle m|\big) \otimes \langle0|^{\otimes (n-L)}\Big]\P(\sigma_2^{-1}\sigma_1)\Big[\big(\bigwedge_{m=1}^{L}|m\>\big) \otimes |0\>^{\otimes (n-L)}\Big]
\end{align*}
Again, this inner product is non-zero if and only if $\sigma_2^{-1}\sigma_1$ permutes the first $L$ elements among themselves. Therefore $\tau=\sigma_2^{-1}\sigma_1 $ has a unique decomposition as $\tau_L \tau_{n-L}$, where $\tau_L$ is a permutation on the first $L$ elements, and $\tau_{n-L}$ permutes among the rest. With this notation, when the inner product is non-zero, then
\begin{align}
  \<\Phi| \P(\sigma_2)^\dagger \P(\sigma_1) |\Phi\> = \sgn(\tau_L)\ .
\end{align}
We conclude that \cref{Gram} holds for all $\sigma_{1,2}\in\mathbb{S}_n$ and, therefore, there exists an isometry $U^f$ satisfying \cref{eq:perm}.

\subsection{An example of the one-particle reduced state}\label{app:fermi-1p-example}
Next, we present an example of the single-particle reduced state, obtained by applying map $\Omega$, defined in \cref{matrix}. Consider the $n$-qudit state
\be
|\psi\>=\sum_{j=1}^n \psi_j |0\>^{\otimes (j-1)} |1\> |0\>^{\otimes (n-j)}\ .
\ee
For $i\neq j$ we have
\begin{align}
  \Omega_{ij}[\psi] &= \<\psi |\Pi_\comp \P_{ij} Q_{ij} \Pi_\comp |\psi\> = \psi_i \<\psi |\P_{ij} \big(|0\>^{\otimes (i-1)} |1\> |0\>^{\otimes (n-i)}\big) = \psi_i \<\psi |0\>^{\otimes (j-1)} |1\> |0\>^{\otimes (n-j)} = \psi_i\psi_j^*\ ,
\end{align}
where the second equality follows from the fact that $|\psi\>$ is already in the fermionic subspace and $Q_{ij}$ is a projector to the subspace where $i$ is non-zero and $j$ is zero. Similarly for $i = j$ we have
\begin{align}
  \Omega_{ii}[\psi] &= \<\psi |\Pi_\comp (\mathbb{I}_i-|0\>\< 0|_i) \Pi_\comp |\psi\> = \psi_i\psi_i^*,
\end{align}
Therefore we have $\Omega[\psi] = \sum_{ij} \Omega_{ij}[\psi] |i\>\< j| = \sum_{ij} \psi_i \psi_j^* |i\>\< j|$.

\subsection{Fermion hopping in the qudit language (the equivalence of the two representations of the map \texorpdfstring{$\Omega$}{Omega})}\label{app:fermi-equiv}

Next, we show the equivalence of the definitions of map $\Omega$ in \cref{eq:Omega_fermion,single-particle}, that is
\begin{align}\tag{\ref{eq:Omega_fermion}}
  \Omega_{ij}[\rho] = \Tr(c^\dag_j c_i U^f\Pi_\comp\rho\Pi_\comp U^{f\dagger}),
\end{align}
and
\begin{align}\tag{\ref{single-particle}}
  \Omega_{ij}[\rho] =
  \begin{cases}
    \Tr\big(\Pi_{\comp} \rho\Pi_{\comp}\ [\P_{ij} Q_{ij}]\ \big) &: i\neq j, \\
    \Tr\big(\Pi_{\comp} \rho\Pi_{\comp}\ [\mathbb{I}_i-|0\rangle\langle 0|_i]\big) &: i=j,
  \end{cases}
\end{align}
where $\rho\in \End((\mathbb{C}^d)^{\otimes n})$ and $Q_{ij} = (\mathbb{I}_i - |0\>\<0|_i) |0\>\<0|_j$.
In order to show the equivalence of these two definitions, we simply need to verify that for all $i\neq j\in \{1,\cdots, n\}$ and all $k\in \{1,\cdots, n\}$, it holds that
\begin{align}
  \Pi_\comp \P_{ij} Q_{ij} \Pi_\comp &= \Pi_\comp U^{f\dagger} c^\dagger_jc_i U^f\Pi_\comp, \label{eq:fermion_qudit_hop} \\
  \Pi_\comp (\mathbb{I}_k - |0\>\<0|_k) \Pi_\comp &= \Pi_\comp U^{f\dagger} c_k^\dagger c_k U^f \Pi_\comp\ . \label{eq:fermion_qudit_n}
\end{align} 
Recall that $\mathcal{H}_\comp$ is spanned by vectors $ \P(\sigma) |\Phi\rangle: \sigma\in\mathbb{S}_n$, where 
\be
|\Phi\rangle\equiv\big(\bigwedge_{m=1}^{L}|m\>\big) \otimes |0\>^{\otimes (n-L)}\ .
\ee
Therefore, to show that $\Pi_\comp \P_{ij} Q_{ij} \Pi_\comp = \Pi_\comp U^{f\dagger} c^\dagger_jc_i U^f\Pi_\comp$ it suffices to show the equality of the matrix elements of both sides in this basis, i.e., 
\be
\forall \sigma_1,\sigma_2\in\mathbb{S}_n:\ \ \ \langle\Phi|\P^\dag(\sigma_2) \Pi_\comp \P_{ij} Q_{ij} \Pi_\comp \P(\sigma_1) |\Phi\rangle=\langle\Phi|
 \P^\dag(\sigma_2) \Pi_\comp U^{f\dagger} c^\dagger_jc_i U^f\Pi_\comp \P(\sigma_1) |\Phi\rangle\ ,
\ee
or, equivalently
\be
 \forall \sigma_1,\sigma_2\in\mathbb{S}_n:\ \ \ \langle\Phi| \P^\dagger(\sigma_2\sigma^{-1}_1) \P_{\sigma_1^{-1}(i)\sigma_1^{-1}(j)} Q_{\sigma_1^{-1}(i)\sigma_1^{-1}(j)} |\Phi\rangle=\langle\Phi^f| \P^{f\dagger}(\sigma_2\sigma^{-1}_1) c^\dagger_{\sigma_1^{-1}(j)} c_{\sigma_1^{-1}(i)} |\Phi^f\rangle\ ,
\ee
where 
\be
|\Phi^f\rangle=U^f|\Phi\rangle=\prod_{j=1}^L c_j^\dag|\mathrm{vac}\rangle\ .
\ee
To show this it suffices to show that for all $i\neq j\in\{1,\cdots, n\}$ and all $\sigma\in \mathbb{S}_n$, it holds that
\begin{align}\label{eq:fermion_qudit_hop_exp}
  \<\Phi| \P^\dagger(\sigma) \P_{ij} Q_{ij} |\Phi\>=\<\Phi^f| \P^{f\dagger}(\sigma) c^\dagger_{j} c_{i} |\Phi^f\> \ .
\end{align}
Similarly, to show \cref{eq:fermion_qudit_n} we need to show 
\begin{align}\label{jkhf3}
  \<\Phi| \P^\dagger(\sigma) [\mathbb{I}_i-|0\rangle\langle 0|_i] |\Phi\>=\<\Phi^f| \P^{f\dagger}(\sigma) c^\dagger_{i} c_{i} |\Phi^f\> \ .
\end{align}

Recall that $Q_{ij} = (\mathbb{I}_i - |0\>\<0|_i) |0\>\<0|_j$ is a projector which projects the state on site $i$ to one orthogonal to $|0\>_i$, while projects the state on site $j$ to $|0\>_j$. This means 
\be
Q_{ij}|\Phi\rangle= \big[(\mathbb{I}_i - |0\>\<0|_i) |0\>\<0|_j\big]\big(\bigwedge_{m=1}^{L}|m\>\big) \otimes |0\>^{\otimes (n-L)}=
\begin{cases}
  |\Phi\rangle &: i\le L <j \\
  0 &: \text{otherwise}\ ,
\end{cases} 
\ee
which means the left-hand side of \cref{eq:fermion_qudit_hop_exp} is equal to
\be\label{wlkef}
\<\Phi| \P^\dagger(\sigma) \P_{ij} Q_{ij} |\Phi\>=\begin{cases}
  \langle\Phi|\P(\sigma^{-1} \sigma_{ij})|\Phi\rangle &: i\le L <j \\
  0 &: \text{otherwise}\ .
\end{cases} 
\ee
Similarly, the left-hand side of \cref{jkhf3} can be written as 
\be\label{wlkef1}
\<\Phi| \P^\dagger(\sigma) [\mathbb{I}_i-|0\rangle\langle 0|_i] |\Phi\>=\begin{cases}
  \langle\Phi|\P(\sigma^{-1})|\Phi\rangle &: i\le L \\
  0 &: \text{otherwise}\ .
\end{cases} 
\ee
On the other hand, using $\P^f_{ij} = \mathbb{I}^f - (c_i^\dagger - c_j^\dagger) (c_i - c_j) $, we find 
\be\label{wlkef2}
c_j^\dag c_i|\Phi^f\rangle=c_j^\dag c_i\prod_{l=1}^L c_l^\dag|\mathrm{vac}\rangle =\begin{cases}
  \P_{ij}^f |\Phi^f\rangle &: i\le L < j \\ |\Phi^f\rangle &: i=j\le L\\ 
  0 &: \text{otherwise}\ ,
\end{cases}
\ee
which means the right-hand side of \cref{eq:fermion_qudit_hop_exp} and \cref{jkhf3} can be combined together as 
\be
\langle\Phi^f|\P^{f\dagger}(\sigma) c^\dagger_{j} c_{i} |\Phi^f\rangle=\begin{cases}
  \langle\Phi^f|\P^f(\sigma^{-1} \sigma_{ij})|\Phi^f\rangle &: i\le L <j \\ \langle\Phi^f|\P^f(\sigma^{-1})|\Phi^f\rangle &: i=j\le L \\
  0 &: \text{otherwise}\ .
\end{cases} 
\ee
Finally, recall that
$\P^f(\sigma)|\Phi^f\rangle=U^f \P(\sigma)|\Phi\rangle$, and $U^f$ preserves inner products. This means \be
\forall \sigma\in\mathbb{S}_n:\ \ \langle\Phi|\P(\sigma)|\Phi\rangle=\langle\Phi^f|\P^f(\sigma)|\Phi^f\rangle\ .
\ee
Combining this with \cref{wlkef,wlkef1,wlkef2}, we conclude that the two representations of map $\Omega$ in \cref{eq:Omega_fermion,single-particle} are indeed equivalent.

\subsection{Complete-positivity and covariance of the map \texorpdfstring{$\Omega$}{Omega} (proof of Lemma \ref{lem3})}\label{app:fermi-1p-reduced}
Next, we prove \cref{lem3}. That is we show that map $\Omega$ is completely positive and satisfies the covariance relation $\Omega[\e^{\i\theta\P_{ab}}\rho\e^{-\i\theta\P_{ab}}]=\e^{\i \theta E_{ab}} \Omega[\rho] \e^{-\i \theta E_{ab}}$ for all $\theta\in[0,2\pi)$. Both of these properties follow from the observation that
\begin{align}\label{partial65}
  \Omega[\rho] &= \Tr_f (S U^f \Pi_\comp \rho \Pi_\comp U^{f\dagger} S^\dagger) \ ,
\end{align}
where $S: \H^f \rightarrow \H^f \otimes \mathbb{C}^n$ is defined by
\begin{align}
  S=\sum_{i=1}^n c_i \otimes |i\> \ , 
\end{align}
and the partial trace in \cref{partial65} is over the fermionic subsystem $\H^f$. 

Recall that the partial trace is a completely-positive map. Furthermore, for any operator $A$, the map $\rho\rightarrow A\rho A^\dag$ is also completely positive. It follows that $\Omega$ is the concatenation of two completely-positive maps and hence is completely positive \cite{wilde2013quantum}. 

Next, we note that \cref{rep:fermi2} can be rewritten as 
\begin{align}\label{inv98}
 S\e^{\i\theta \P^f_{ab}} = (\e^{\i\theta \P^f_{ab}}\otimes \e^{\i \theta (E_{ab}-\mathbb{I})})S\ .
\end{align}
This implies that for all $\theta\in[0,2\pi)$, it holds that 
\begin{align}
  \Omega[\e^{\i\theta\P_{ab}}\rho\e^{-\i\theta\P_{ab}}] &= \Tr_f (S U^f \Pi_\comp \e^{\i\theta\P_{ab}}\rho\e^{-\i\theta\P_{ab}} \Pi_\comp U^{f\dagger} S^\dagger) \nonumber\\
                                      &= \Tr_f (S \e^{\i\theta\P^f_{ab}} U^f \Pi_\comp\rho \Pi_\comp U^{f\dagger}\e^{-\i\theta\P^f_{ab}} S^\dagger) \nonumber\\ 
                                      &= \Tr_f (\e^{\i\theta \P^f_{ab}}\otimes \e^{\i \theta (E_{ab}-\mathbb{I})} S U^f \Pi_\comp\rho \Pi_\comp U^{f\dagger} S^\dagger \e^{-\i\theta \P^f_{ab}}\otimes \e^{-\i \theta (E_{ab}-\mathbb{I})}) \nonumber\\ 
                                      &= \e^{\i \theta (E_{ab}-\mathbb{I})} \Tr_f (S U^f \Pi_\comp\rho \Pi_\comp U^{f\dagger} S^\dagger) \e^{-\i \theta (E_{ab}-\mathbb{I})} \nonumber\\ 
                                      &= \e^{\i \theta E_{ab}} \Omega[\rho] \e^{-\i \theta E_{ab}}\ ,
\end{align}
where to get the second line we have used the fact that $\e^{\i\theta \P^f_{ab}}$ commutes with $\Pi_\comp$ and $U^f$ and to get the third line we have used \cref{inv98}. This proves the covariance of $\Omega$ and completes the proof of \cref{lem3}.

\newpage
\section{Constraints on relative phases between sectors with different charges} \label{app:relative-phases}

Recall that $\mathfrak{z}_{k}$ denotes the center of $\mathfrak{v}_{k}$, the Lie algebra generated by $k$-local $\SU(d)$-invariant anti-Hermitian operators on $(\mathbb{C}^d)^{\otimes n}$. As we discuss more in \cite{HLM_2021}, the general results of \cite{marvian2022restrictions} implies that the dimension of $\mathfrak{z}_{k}$ is equal to the number of inequivalent irreps of $\SU(d)$ that appear on $k$ qudits. Equivalently, it is the number of Young diagrams with $k$ boxes and with, at most, $d$ rows. In the special case of $k=2$, this means that the dimension of the center is 2D. In the following we present a characterization of this 2D space. We note that essentially an equivalent characterization in terms of characters can also be found in the work of Marin in \cite{marin2007algebre}. 

First, note that since $\mathfrak{v}_{2}$ contains all swaps up to a phase, then any operator in the center should be permutationally invariant and should be in the linear span of permutations operators $\{\mathbf{P}(\sigma): \sigma\in\mathbb{S}_n\}$ . It follows that any element of $\mathfrak{z}_{2}$ can be written as a linear combination of projectors to sectors with different charges, that is 
\begin{equation}
  \mathfrak{z}_{2} = \cset{\i \sum_\lambda (\alpha + \beta b_\lambda) \Pi_\lambda}{\alpha, \beta \in \real}\ ,
\end{equation}
where $\Pi_\lambda$ is the projector to sector with irrep $\lambda$ of $(\mathbb{C}^d)^{\otimes n}$. In the following, we find several formulas for coefficients $\{b_\lambda\}$. 

First, note that using the charge vector terminology in \cite{marvian2022restrictions}, the charge vectors are
\begin{equation}
  \{\sum_\lambda [\alpha+\beta b_\lambda] \Tr(\Pi_\lambda) |\lambda\rangle: \alpha,\beta\in\mathbb{R} \}\ .
\end{equation}
where 
\be
b_\lambda=\frac{\Tr(\P_{rs}\Pi_\lambda)}{\Tr(\Pi_\lambda) } \ ,
\ee
for any pair of $r\neq s\in\{1,\cdots , n\}$ (here $\sets{\nket{\lambda}}$ denotes an orthonormal basis for an abstract vector space). The fact that $\Tr(\P_{rs}\Pi_\lambda)$ is independent of $r$ and $s$, follows immediately from the permutational symmetry of $\Pi_\lambda$. This, in particular, means that
\be
\Tr(\P_{rs}\Pi_\lambda)=\frac{1}{n(n-1)}\sum_{r\neq s} \Tr(\P_{rs}\Pi_\lambda)=\Tr(Z \Pi_\lambda) \ ,
\ee
where 
\be
 Z\equiv\sum_{r\neq s} \P_{rs}\ . 
 \ee
Note that $Z$ is (by definition) permutationally-invariant and since it is in the linear span of $\{\P_{rs}\}$, we have $iZ\in\mathfrak{z}_{2}$, and $Z=\sum_\lambda z_\lambda \Pi_\lambda$, where $z_\lambda$ is the eigenvalue of $Z$ in sector $\Pi_\lambda$. To summarize, we have
\begin{equation}\label{lmg01}
  \mathfrak{z}_{2} = \cset{\i \sum_\lambda (\alpha + \beta b_\lambda) \Pi_\lambda}{\alpha, \beta \in \real}= \mathrm{span}_\mathbb{C}\{i\mathbb{I}, i Z\}\ ,
\end{equation}
and 
\be\label{lmg02}
b_\lambda= \frac{\Tr(\P_{rs}\Pi_\lambda)}{\Tr(\Pi_\lambda) }=\frac{1}{n(n-1)} \frac{\Tr(Z\Pi_\lambda)}{\Tr(\Pi_\lambda) }=\frac{z_\lambda}{n(n-1)} \ .
\ee
It turns out that operator $Z$ is closely related to the quadratic Casimir operator. To explain this connection, we present the following lemma, which is of independent interest (see the end of this section for the proof).
\begin{lemma}\label{lem:swap}
  Let $\sets{T^a: a=1, \cdots, d^2-1}$ be a set of $d\times d$ traceless hermitian operators satisfying the orthogonality relation $\Tr(T^{a \dagger} T^b) = \Tr(T^a T^b) = 2\delta^{ab}$. Then, for a pair of qudits, labeled as 1 and 2, the swap operator $\mathbf{P}_{12}$ can be written as 
\begin{align}
  \mathbf{P}_{12} = \frac{1}{2} \sum_{a = 1}^{d^2 - 1} T^a \otimes T^a + \frac{1}{d}\mathbb{I}\ .
\end{align}
where $\mathbb{I}$ is the identity operator.
\end{lemma}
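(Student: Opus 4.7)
The plan is to prove this by expanding $\mathbf{P}_{12}$ in the orthonormal basis of $d^2 \times d^2$ Hermitian operators built from the generators $T^a$ together with the identity. Concretely, I would define $T^0 \equiv \sqrt{2/d}\,\mathbb{I}$ so that $\Tr((T^0)^2) = 2$, matching the normalization of the $T^a$. Then $\{T^\mu\}_{\mu=0}^{d^2-1}$ is an orthogonal basis for the space of $d\times d$ Hermitian matrices with respect to the Hilbert--Schmidt inner product, satisfying $\Tr(T^\mu T^\nu) = 2\delta^{\mu\nu}$, and correspondingly $\{T^\mu \otimes T^\nu\}$ is an orthogonal basis for operators on $(\mathbb{C}^d)^{\otimes 2}$ with $\Tr((T^\mu \otimes T^\nu)(T^\alpha \otimes T^\beta)) = 4\delta^{\mu\alpha}\delta^{\nu\beta}$.

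Next, I would expand the swap as
\begin{equation*}
\mathbf{P}_{12} = \sum_{\mu,\nu=0}^{d^2-1} c_{\mu\nu}\, T^\mu \otimes T^\nu, \qquad c_{\mu\nu} = \frac{1}{4}\Tr\bigl((T^\mu \otimes T^\nu)\,\mathbf{P}_{12}\bigr).
\end{equation*}
The key computational step is the identity $\Tr\bigl((A \otimes B)\,\mathbf{P}_{12}\bigr) = \Tr(AB)$ for any $d\times d$ operators $A,B$, which is an immediate one-line check using $\mathbf{P}_{12} = \sum_{ij} |i\rangle\langle j| \otimes |j\rangle\langle i|$. Applying this with $A = T^\mu$, $B = T^\nu$ gives $c_{\mu\nu} = \frac{1}{4}\Tr(T^\mu T^\nu) = \frac{1}{2}\delta^{\mu\nu}$, so
\begin{equation*}
\mathbf{P}_{12} = \frac{1}{2}\sum_{\mu=0}^{d^2-1} T^\mu \otimes T^\mu.
\end{equation*}

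Finally, I would separate the $\mu=0$ term from the rest: $\tfrac{1}{2} T^0 \otimes T^0 = \tfrac{1}{2}\cdot \tfrac{2}{d}\mathbb{I} = \tfrac{1}{d}\mathbb{I}$, yielding the claimed decomposition. There is no real obstacle here—the whole argument is an orthogonal-expansion exercise, and the only ``trick'' is the partial-trace identity $\Tr_2(\mathbf{P}_{12}(A\otimes B)) = AB$ (or its full trace version), which is standard. The main thing to watch is consistency of normalization conventions: with the stated $\Tr(T^aT^b)=2\delta^{ab}$, the factor $1/2$ in front of the sum and the $1/d$ for the identity fall out precisely from the above computation.
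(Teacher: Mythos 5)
Your proof is correct, but it takes a genuinely different route from the paper's. You expand $\mathbf{P}_{12}$ directly in the orthogonal operator basis $\{T^\mu\otimes T^\nu\}_{\mu,\nu=0}^{d^2-1}$ (with $T^0=\sqrt{2/d}\,\mathbb{I}$ adjoined) and read off all $d^4$ coefficients at once from the identity $\Tr\bigl((A\otimes B)\mathbf{P}_{12}\bigr)=\Tr(AB)$. The paper instead argues representation-theoretically: by Schur--Weyl duality the commutant of $\{U\otimes U\}$ is two-dimensional, spanned by $\mathbb{I}$ and $\mathbf{P}_{12}$; it then checks that $F=\sum_a T^a\otimes T^a$ is $\SU(d)$-invariant (via the orthogonality of the adjoint action $R(U)$) and traceless, so $\mathbf{P}_{12}=\alpha F+\beta\mathbb{I}$, and fixes $\alpha,\beta$ by taking $\Tr(\cdot)$ and $\Tr(\cdot\,\mathbf{P}_{12})$ --- the latter step using the very same trace identity you use. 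Your version is more elementary and self-contained, needing only linear algebra and a dimension count of the operator basis; the paper's version is longer but reuses the Schur--Weyl machinery that is central to the rest of the paper and makes the $\SU(d)$-invariance of $F$ explicit, which is the structural fact the surrounding appendix actually needs (to identify $Z=\sum_{r\neq s}\mathbf{P}_{rs}$ with the quadratic Casimir). Both are complete proofs; there is no gap in yours.
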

We also remark that using a similar argument one can show the identity 
\be\label{kjhefowr}
\frac{1}{2} \sum_{a = 1}^{d^2 - 1} (T^a)^2 = \frac{d^2 - 1}{d} \identity \ ,
\ee
where the left-hand side is the quadratic Casimir operator associated to the fundamental representation of $\SU(d)$. This Casimir operator can be seen to be $\SU(d)$-invariant by a similar argument to \cref{eq:C-inv} (in the proof of the above lemma), and is therefore proportional to the identity by Schur's Lemma. The proportionality constant is determined by taking traces and using $\Tr((T^a)^2) = 2$.

Applying \cref{lem:swap} for a system with $n$ qudits we can write the swap operator $\P_{ij}$ as
\be
\P_{ij}= \frac{1}{2} \sum_{a = 1}^{d^2 - 1} T_i^a T_j^a + \frac{1}{d}\mathbb{I}\ ,
\ee
 where $T_i^a $ is operator $T^a$ on qudit $i$, tensor product with the identity operator on the rest of qudits. 
It is also useful to define the quadratic Casimir operator
\be
C_2 \equiv \frac{1}{2}\sum_{a=1}^{d^2-1} \big(\sum_{i=1}^n T^a_i\big)^2\ .
\ee
 Then, using this definition together with the identity in \cref{kjhefowr}, we find 
 \be
\sum_{i\neq j} \sum_{a = 1}^{d^2 - 1} T_i^a T_j^a= \sum_{i, j} \sum_{a = 1}^{d^2 - 1} T_i^a T_j^a-\sum_i \sum_{a = 1}^{d^2 - 1} (T_i^a)^2=2 C_2- 2n \frac{d^2-1}{d}\mathbb{I}\ .
\ee
Therefore, we conclude that 
\begin{equation}
  \begin{split}
    Z &= \sum_{i \neq j} \P_{ij} = \sum_{i \neq j} (\frac{1}{2} \sum_{a = 1}^{d^2 - 1} T_i^a T_j^a + \frac{1}{d}\mathbb{I})= C_2- 2n \frac{d^2-1}{d}\mathbb{I} + \frac{n(n-1)}{d}\mathbb{I}
    =C_2+ \frac{n(n - d^2)}{d}\mathbb{I}\ .
  \end{split}
\end{equation}
Hence, 
\be
z_\lambda=c_\lambda+\frac{n(n - d^2)}{d}\ ,
\ee
where $c_\lambda$ is the eigenvalues of $C_2$ in sector $\Pi_\lambda$. 

Combining this result with \cref{lmg01,lmg02}, we conclude that
\begin{equation}
  \mathfrak{z}_{2} = \cset{\i \sum_\lambda (\alpha + \beta b_\lambda) \Pi_\lambda}{\alpha, \beta \in \real}= \mathrm{span}_\mathbb{C}\{i\mathbb{I}, i Z\}=\mathrm{span}_\mathbb{C}\{i\mathbb{I}, i C\}\ ,
\end{equation}
and 
\be
b_\lambda= \frac{\Tr(\P_{rs}\Pi_\lambda)}{\Tr(\Pi_\lambda) }=\frac{1}{n(n-1)} \frac{\Tr(Z\Pi_\lambda)}{\Tr(\Pi_\lambda) }=\frac{z_\lambda}{n(n-1)}=\frac{d c_\lambda+n(n - d^2) }{d n(n-1)}\ . 
\ee
In particular, in the case of $d = 2$, the quadratic Casimir is $C_2 = 2 J^2$, where $J^2 = J^2_x+J^2_y+J_z^2$ is the total squared angular momentum operator. Thus the eigenvalue associated to the irrep with definite angular momentum $j$ is $c_j = 2 j (j + 1)$. In this case, the eigenvalue $c_j$ completely specifies the irrep of $\SU(2)$.

Finally, it is worth noting that $b_\lambda$ can also be expressed in terms of the character of the irrep $\lambda$ of $\mathbb{S}_n$. In particular, using Schur-Weyl duality, we know that sector $\mathcal{H}_\lambda$ decomposed as $\mathcal{H}_\lambda=\mathcal{Q}_\lambda\otimes \mathcal{M}_\lambda$, where $\SU(d)$ acts irreducibly on $\mathcal{Q}_\lambda$ and trivially on $\mathcal{M}_\lambda$. Since $\Tr(\Pi_\lambda) $ is equal to the dimension of $\mathcal{H}_\lambda$, we find that $\Tr(\Pi_\lambda)=d_\lambda\times m_\lambda$, where $d_\lambda$ is the dimension of irrep $\lambda$ of $\SU(d)$ and $m_\lambda$ is the multiplicity of this irrep, or equivalently, the dimension of the corresponding irrep of $\mathbb{S}_n$. Similarly, $\Tr(\Pi_\lambda \P_{rs})=d_\lambda\times \Tr(\mathbf{p}_\lambda(\sigma_{rs}))$, where $\Tr(\mathbf{p}_\lambda(\sigma_{ij}))$ is known as the irreducible character of $\sigma_{ij} \in \mathbb{S}_n$ associated to the irrep $\lambda$. We conclude that $b_\lambda$ can also be written as
\be
b_\lambda=\frac{\Tr(\mathbf{p}_\lambda(\sigma_{rs}))}{m_\lambda}=\frac{\Tr(\mathbf{p}_\lambda(\sigma_{12}))}{m_\lambda}\ .
\ee 
An immediate corollary of these results is that 
\begin{corollary}\label{cor:center}
  For Hamiltonian $H=\sum_\lambda h_\lambda \Pi_\lambda$ the family of unitaries $\cset{\e^{-\i H t}}{t \in \real}$ is in the group generated by 2-local $\SU(d)$-invariant Hamiltonians if, and only if there exists $\alpha, \beta\in \mathbb{R}$ such that
\be
h_\lambda= \alpha +\beta b_\lambda\ ,
\ee
where
\be
b_\lambda=\frac{z_\lambda}{n(n-1)}= \frac{d c_\lambda+n(n - d^2) }{d n(n-1) } =\frac{\Tr(\P_{12} \Pi_\lambda)}{\Tr(\Pi_\lambda)}=\frac{\Tr(\mathbf{p}_\lambda(\sigma_{12}))}{m_\lambda} \ ,
\ee
where $c_\lambda$ and $z_\lambda$ are, respectively, the eigenvalues of the quadratic Casimir operator $C_2$ and operator $Z=\sum_{r\neq s} \P_{rs}$ in the sector $\lambda$. 
\end{corollary}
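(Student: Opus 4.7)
The plan is to reduce the corollary to the explicit characterization of the center $\mathfrak{z}_{2}$ established immediately above it, together with the standard Lie-group/Lie-algebra correspondence for one-parameter subgroups. First I would observe that because $H=\sum_\lambda h_\lambda \Pi_\lambda$ is proportional to the identity on each isotypic component $\mathcal{H}_\lambda$, Schur's lemma implies that $H$ commutes with every $\SU(d)$-invariant operator, and in particular with every element of the Lie algebra $\mathfrak{v}_{2}$ generated by $2$-local $\SU(d)$-invariant anti-Hermitian operators. Hence, if $iH$ happens to lie in $\mathfrak{v}_{2}$, it automatically lies in the center $\mathfrak{z}_{2}$ of $\mathfrak{v}_{2}$.

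For the forward direction, I would use the fact that the one-parameter family $\{\e^{-\i H t}:t\in\mathbb{R}\}$ is contained in the closed Lie subgroup generated by $2$-local $\SU(d)$-invariant unitaries if and only if $\i H$ lies in the corresponding Lie algebra $\mathfrak{v}_{2}$. Combining this with the previous paragraph, this membership is equivalent to $\i H\in\mathfrak{z}_{2}$. But the discussion preceding the corollary has already identified
\begin{equation*}
\mathfrak{z}_{2}=\mathrm{span}_{\mathbb{R}}\{\i\mathbb{I},\,\i Z\}=\mathrm{span}_{\mathbb{R}}\{\i\mathbb{I},\,\i C_2\},
\end{equation*}
with $Z=\sum_{r\neq s}\P_{rs}$ and $Z=\sum_\lambda z_\lambda \Pi_\lambda$. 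So $\i H\in\mathfrak{z}_{2}$ is equivalent to the existence of real numbers $\alpha',\beta'$ with $H=\alpha'\mathbb{I}+\beta' Z$, i.e.\ $h_\lambda=\alpha'+\beta' z_\lambda$ for every $\lambda$. Rescaling $\beta=\beta'\cdot n(n-1)$ and using $b_\lambda=z_\lambda/[n(n-1)]$ gives the stated form $h_\lambda=\alpha+\beta b_\lambda$.

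For the converse, if $h_\lambda=\alpha+\beta b_\lambda$ for all $\lambda$, then $H=\alpha\mathbb{I}+\frac{\beta}{n(n-1)}Z$, which clearly lies in the real span of $\mathbb{I}$ and $Z=\sum_{r\neq s}\P_{rs}$. Since each $\i\P_{rs}$ belongs to $\mathfrak{v}_{2}$ and $\i\mathbb{I}\in\mathfrak{v}_{2}$ (e.g.\ from the global phase $\e^{\i\theta}\mathbb{I}\in\mathcal{V}_{2}$), we get $\i H\in\mathfrak{v}_{2}$, and therefore the one-parameter subgroup $\{\e^{-\i H t}\}$ lies in $\mathcal{V}_{2}$. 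The four equivalent closed-form expressions for $b_\lambda$ — in terms of $z_\lambda$, of the Casimir eigenvalue $c_\lambda$, of the normalized trace $\Tr(\P_{rs}\Pi_\lambda)/\Tr(\Pi_\lambda)$, and of the character ratio $\Tr(\mathbf{p}_\lambda(\sigma_{12}))/m_\lambda$ — have already been derived in the paragraphs preceding the corollary via Lemma~\ref{lem:swap}, the identity $\frac{1}{2}\sum_a(T^a)^2=\frac{d^2-1}{d}\mathbb{I}$, and the Schur–Weyl decomposition $\mathcal{H}_\lambda\cong\mathcal{Q}_\lambda\otimes\mathcal{M}_\lambda$, so these require no further argument.

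I do not anticipate a serious obstacle here: the corollary is essentially a packaging of the explicit description of $\mathfrak{z}_{2}$, and the only subtlety is the clean identification of the Lie-algebra condition $\i H\in\mathfrak{v}_{2}$ with the Lie-group condition $\{\e^{-\i Ht}\}\subset\mathcal{V}_{2}$, which follows from the fact that $\mathcal{V}_{2}$ is a closed (compact) Lie subgroup of the unitary group and from the correspondence between its closed one-parameter subgroups and elements of its Lie algebra. The only place where one must be a bit careful is in justifying that the center of the Lie algebra of $\mathcal{V}_{2}$ is exactly what the preceding computation calls $\mathfrak{z}_{2}$, which follows from Schur's lemma applied sector by sector and the fact that swaps generate $\mathcal{V}_{2}$ up to global phases.
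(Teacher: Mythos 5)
Your proposal is correct and follows essentially the same route as the paper: the paper presents this corollary as an immediate consequence of the preceding identification $\mathfrak{z}_{2}=\mathrm{span}\{\i\mathbb{I},\i Z\}$ together with the derived formulas for $b_\lambda$, and your argument simply makes the "immediate" step explicit via the standard correspondence between closed one-parameter subgroups of the compact group $\mathcal{V}_{2}$ and elements of its Lie algebra, plus the Schur's-lemma observation that $\i H\in\mathfrak{v}_{2}$ forces $\i H\in\mathfrak{z}_{2}$. No gaps.
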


Finally, we present the proof of \cref{lem:swap}.
\begin{proof} (\cref{lem:swap}) 
  From Schur-Weyl duality we know that the space of $\SU(d)$-invariant operators on $\complex^d \otimes \complex^d$ is two dimensional, spanned by the identity $\mathbb{I}$ and the swap $\P_{12}$. On the other hand, the operator $F=\sum_{a = 1}^{d^2 - 1} T_1^a \otimes T_2^a$ is also $\SU(d)$-invariant and not proportional to the identity. This can be seen in the following way. Since, for any $U \in \SU(d)$, the map $T^a \mapsto U T^a U^\dagger$ preserves the Hilbert-Schmidt inner product $\Tr (T^{a \dagger} T^b)$ on the space of operators, we must have
\begin{align}
  U T^a U^\dagger = \sum_{b=1}^{d^2-1} R(U)_{ab} T^b,
\end{align}
where $R(U)$ is a unitary matrix. To verify that $R(U)$ is unitary, take the hermitian conjugate and use the assumption that $T^a$ is hermitian to obtain
\begin{equation}
  U T^a U^\dagger = \sum_{b = 1}^{d^2 - 1} R(U)_{ab}^\ast T^b.
\end{equation}
Then,
\begin{equation}
  \sum_{c = 1}^{d^2 - 1} R(U)_{ac} R(U)_{cb}^\dagger = \sum_{c, d = 1}^{d^2 - 1} R(U)_{ac} R(U)_{bd}^\ast \frac{1}{2} \Tr(T^c T^d) = \frac{1}{2} \Tr(U T^a U^\dagger U T^b U^\dagger) = \delta_{ab}.
\end{equation}
Therefore,
\begin{equation}\label{eq:C-inv}
  \sum_{a = 1}^{d^2 - 1} U T^a U^\dagger \otimes U T^{a} U^\dagger = \sum_{a, b, c = 1}^{d^2 - 1} R(U)_{ab} R(U)^\ast_{ac} T^b \otimes T^{c} = \sum_{b, c} \p[\big]{R(U)^\dagger R(U)}_{cb} T^b \otimes T^c = \sum_{b} T^b \otimes T^b,
\end{equation}
and so $(U \otimes U) F (U^\dagger \otimes U^\dagger) = F$. Since, by the assumption that $T^a$ is traceless for all $a$, $\Tr(F) = \sum_a \Tr(T^a)^2 = 0$, we know that $F$ and $\identity$ form an orthogonal basis for the space of $\SU(d)$-invariant operators. Therefore, we must have
\begin{align}\label{eq:Pij-undetermined}
  \P_{12} = \alpha F + \beta\mathbb{I},
\end{align}
where $\alpha$ and $\beta$ are real numbers. Taking the trace of both sides of \cref{eq:Pij-undetermined}, we have
\begin{align}
  d = \beta d^2 \implies \beta = 1/d.
\end{align}
Furthermore, rearranging \cref{eq:Pij-undetermined} and multiplying by $\P_{12}$ gives $\alpha \Tr (F \P_{12}) = \Tr (\identity - \beta \P_{12}) = d^2 - 1$; together with
\begin{equation}
  \Tr (F \P_{12}) = \frac{1}{2} \sum_{a = 1}^{d^2 - 1} \Tr ((T^a \otimes T^a) \P_{12}) = \frac{1}{2} \sum_{a = 1}^{d^2 - 1} \Tr ((T^a)^2) = d^2 - 1,
\end{equation}
this implies that $\alpha = 1$. This completes the proof.
\end{proof}

\end{document}